\newcommand{\acksection}{\section*{Acknowledgments}}
\setlist[enumerate,1]{label={\roman*)}}
\tikzstyle{graphnodesmall} = [circle,draw=black,minimum size=20pt,text centered,inner sep=0.2mm]
\tikzstyle{observedsmall} = [graphnodesmall,fill=white,text=black]
\tikzstyle{unobservedsmall} = [graphnodesmall,fill=white,text=black,style=dashed]
\let\Algorithm\algorithm
\renewcommand\algorithm[1][]{\Algorithm[#1]\setstretch{1.0}}
\newcommand{\R}{\mathbb{R}}
\newcommand{\N}{\mathbb{N}}
\newcommand{\Z}{\mathbb{Z}}
\newcommand{\convD}{\stackrel{\mathcal{D}}{\longrightarrow}}
\newcommand{\convP}{\stackrel{\mathcal{P}}{\longrightarrow}}
\newcommand{\PA}{\operatorname{PA}}
\newcommand{\scm}{\mathcal{S}}
\newcommand{\E}{\mathbb{E}}
\newcommand{\VAR}{\mathbb{V}}
\renewcommand{\P}{\mathbb{P}}
\newcommand{\DO}{\operatorname{do}}
\newcommand{\mat}[1]{\begin{pmatrix} #1 \end{pmatrix}}
\newcommand{\ind}[1]{\mathbbm{1}_{#1}} %
\newcommand{\indep}{\mbox{${}\perp\mkern-11mu\perp{}$}}
\renewcommand{\epsilon}{\varepsilon}
\DeclareMathOperator*{\argmin}{{\arg\min}}
\newcommand{\cP}{\mathcal{P}} %
\newcommand{\cQ}{\mathcal{Q}} %
\newcommand{\cX}{\mathcal{X}} %
\newcommand{\cZ}{\mathcal{Z}} %
\newcommand{\bX}{\mathbf{X}} %
\newcommand{\bZ}{\mathbf{Z}} %
\DeclareMathAlphabet\mathbfcal{OMS}{cmsy}{b}{n} %
\newcommand{\widebar}[1]{\bar{#1}} %
\newcommand\james[1]{{\color{red}James: #1}}
\newcommand\note[1]{{\color{red}note: #1}}
\newcommand\niklas[1]{{\color{blue}Niklas: #1}}
\newcommand\rmnik[1]{{\color{gray}Remove(Niklas): #1}}
\newcommand\jonas[1]{{\color{orange}Jonas: #1}}
\newcommand\rmj[1]{{\color{gray}Remove(Jonas): #1}}
\newcommand\Rmjs[1]{{\color{gray}Remove(James): #1}}
\newcommand\nikolaj[1]{{\color[HTML]{538f00}{\tiny NT:}#1}}
\newcommand\Rmn[1]{{\color{gray}{\small RM[NT]: #1}}}
\newcommand\rmn[1]{\Rmn{#1}}
\newcommand\new[1]{{#1}}
\newcommand\old[1]{} %
\newtheorem{theorem}{Theorem}
\theoremstyle{plain}
\newtheorem{lemma}{Lemma}
\newtheorem{corollary}{Corollary}
\newtheorem{proposition}{Proposition}
\newtheorem*{remark}{Remark}
\crefname{assumption}{}{}
\crefname{equation}{}{}
\newlist{assumpenum}{enumerate}{1} %
\setlist[assumpenum]{label={(A\arabic*)}}
\newlist{alt-assumpenum}{enumerate}{1} %
\setlist[alt-assumpenum]{label={(A\arabic*')}}
\title{Statistical Testing under Distributional Shifts
}
\author{Nikolaj Thams\footnote{\{thams, ss, np, jonas.peters\}@math.ku.dk},\, Sorawit Saengkyongam\footnotemark[\value{footnote}],\, Niklas Pfister\footnotemark[\value{footnote}],\, and Jonas Peters\footnotemark[\value{footnote}]}
\date{University of Copenhagen, Denmark}
\begin{document}

\maketitle

\begin{abstract}
    In this work, we introduce statistical testing under distributional shifts. We are interested in the hypothesis $P^* \in H_0$ for a target distribution $P^*$, but observe data from a different distribution $Q^*$. We assume that $P^*$ is related to $Q^*$ through a known shift $\tau$ and formally introduce hypothesis testing in this setting. We propose a general testing procedure that first resamples from the observed data to construct an auxiliary data set and then applies an existing test in the target domain. We prove that if the size of the resample is at most $o(\sqrt{n})$ and the resampling weights are well-behaved, this procedure inherits the pointwise asymptotic level and power from the target test. If the map $\tau$ is estimated from data, we can maintain the above guarantees under mild conditions if the estimation works sufficiently well. We further extend our results to \new{finite sample level,} uniform asymptotic level and a different resampling scheme. Testing under distributional shifts allows us to tackle a diverse set of problems. We argue that it may prove useful in reinforcement learning and covariate shift, we show how it reduces conditional to unconditional independence testing and we provide example applications in causal inference. 
\end{abstract}

\section{Introduction} \label{sec:intro}
Testing scientific hypotheses about an observed data generating mechanism is an important part of many areas of empirical research and is relevant for almost all types of data. In statistics, the data generating mechanism is described by a distribution $P^*$ and the process of testing a hypothesis corresponds to testing whether $P^*$ belongs to a subclass of distributions $H_0$. 
In practice, observations from $P^*$, for which we want to test the hypothesis $P^*\in H_0$, may not always be available. For instance, sampling from $P^*$ may be unethical if it corresponds to assigning patients to a certain treatment. 
$P^*$ may also represent 
the response to a policy that a government is considering to introduce. Yet, in many cases, one may still have data from a different, but related, distribution $Q^*$.
In the examples above, this could be data from an observational study or under the policies currently deployed by the government. 

Although specialized solutions exist for many such problems, there is no general method for tackling them.
In this paper, we 
aim to analyze the above testing task from a general point of view. We 
assume that a distributional shift $\tau: Q \mapsto P$ is known
and, using data from $Q^*$, aim to test the hypothesis $P^* = \tau(Q^*) \in H_0$.
We propose the following  general framework.
We resample from $Q^*$ to construct an auxiliary data set mimicking a sample from $P^*$ and then exploit the existence of a test in the target domain. 
Our method does not assume full knowledge of $Q^*$ or $P^*$, but only knowledge of the (potentially unnormalized) ratio $p^*/q^*$, where $q^*$ and $p^*$ are densities of $Q^*$ and $P^*$ respectively. 
If, for example, the shift corresponds to a change in the conditional distribution of a few of the observed variables, one only needs to know these changing conditionals.

Our framework assumes the existence of a test $\varphi$ in the target domain, i.e., a test that could be applied if data from $P^*$ were available. This test $\varphi$ is then applied to a resampled version of the observed data set. Here, we propose a sampling scheme that 
is similar to sampling importance resampling (SIR), proposed by \citet{rubin1987calculation} and \citet{smith1992bayesian}
but generates a distinct sample of size $m$, using weights $r(X_i) \propto q^*(X_i)/p^*(X_i)$.
We prove that this  procedure inherits the pointwise asymptotic properties of the test $\varphi$ if the weights $r$ have finite second moment in $Q^*$, and $m = o(\sqrt{n})$. 
In particular, the procedure holds pointwise asymptotic level if the test $\varphi$ does.
We show that the same can be obtained if $r$ is not known, but can be estimated from data sufficiently well.
The proposed method is easy-to-use and can be applied to any hypothesis test, even if the test is based on a nonlinear test statistic.

Several problems can be cast as hypothesis tests under distributional shifts. This includes hypothesis tests in off-policy evaluation, tests of conditional independence, testing the absence of causal edges through dormant independences \citep{Verma1991,shpitser2008dormant}, that is, testing certain equality constraints in an observed distribution, and problems of covariate shift.  
Our proposed method can be applied to all of these problems. 
For some of them, we are not aware of other methods with theoretical guarantees -- this includes dormant independence testing with continuous variables, off-policy testing with complex hypotheses and model selection under covariate shift with complex scoring functions. The framework also inspired a novel method for causal discovery that exploits knowledge of a single causal conditional.
For some of the above problems, however, more specialized solutions exist, and as such, the proposed testing procedure relates to a line of related work. 

Ratios of densities have been applied in the reinforcement learning literature \citep[e.g.][]{Sutton1998}, where inference in $P^*$ using data from $Q^*$ is known as off-policy prediction \citep{precup2001off}.
One can estimate the expectation of $X$ under $P^*$ using importance sampling, (IS),
that is, as weighted averages using weights $r(X)$, possibly truncated to decrease the variance of the estimation
\citep{precup2001off,mahmood2014weighted}. 
An approach based on resampling was proposed by \citet{schlegel2019importance} to predict expectations in $P^*$.  However, they consider the size of the resample fixed and do not consider statistical testing. \cite{thomas2015high} propose bootstrap confidence intervals for off-policy prediction based on important weighted returns. \cite{hao2021bootstrapping} present a bootstrapping approach with Fitted Q-Evaluation (FQE) for off-policy statistical inference and demonstrate its distributional consistency guarantee. 

In the causal inference literature, inverse probability weighting (IPW) 
can be used to adjust for confounding or selection bias in data \citep[e.g.][]{Horvitz1952,robins2000}.
To estimate the effect of a treatment $X$ on a response $Y$, 
one can weight each observed response $Y_i$ with $1/q^*(X_i|Z_i)$, where $Z$ is an observed confounder. 
For continuous treatments, it has been proposed to
change the numerator to a marginal distribution $p^*(x)$ to stabilize the weights \citep{hernan2006estimating,naimi2014constructing}. 
Both choices of weights appear in our framework, too (e.g., the first one corresponds to a target distribution with $p^*(x|z) \propto 1$).
In general, IS and IPW can only be applied if the population version of the test statistic can be written as a mean of a function of a single observation, such as $\E[Y_i]$ or $\E[f(X_i, Z_i)]$, whereas our approach also applies to test statistics that are functions of the entire sample, which is the case for many tests that go beyond testing moments, such as several independence tests, for example.

SIR sampling schemes were first studied by \citet{rubin1987calculation} and are often used in the context of Bayesian inference \citep{smith1992bayesian}.  \citet{skare2003improved} show that when using weighted resampling with or without replacement, for $n\to\infty$ and fixed $m$, the sample converges towards $m$ i.i.d.\ draws from the target distribution, and provide rates for the convergence. Our work is inspired by these types of results, even though our proofs require different techniques.

\new{Our paper adds to the literature on distributional shifts by considering hypothesis tests in shifted distributions. 
In the context of prediction, distributional shifts, or dataset shifts, have been studied in the machine learning literature both to handle the situations where a marginal covariate distribution changes and when the conditional distribution of label given covariate changes \citep{quinonero2009dataset}. 
If the shift represents a changing marginal distribution and unlabelled samples are available from both training and test environments, \citet{huang2006correcting} propose kernel mean matching, which non-parameterically reweights the training loss to resemble the loss on a target sample.
In settings where a generative model and causal graph is known, \citet{pearl2011transportability,subbaswamy2019preventing} provide graphical criteria under which causal estimates can be `transported' from one distribution $Q$ to a shifted distribution $P$, assuming knowledge of both joint distributions $Q$ and $P$. 
In contrast, we consider statistical testing, and neither assume knowledge of the full causal graph nor availability of samples from the target distribution, but instead knowledge of how the target data differs from the observed data.
}

\new{
This paper contains four main contributions: 
First, we formally define testing under distributional shifts, and define notions such as pointwise and asymptotic level when using observed data to test the hypothesis in the target domain. 
Second, we outline a number of statistical problems, that can be solved by testing under a shift, including conditional independence testing and testing dormant independences. 
Third, we propose methods that enable testing under distributional shifts: both a simple method based on rejection sampling and a resampling scheme that requires fewer assumptions than the rejection sampler. 
Fourth, we provide finite sample and asymptotic guarantees for our proposed resampling scheme; contrary to the existing literature, 
where typically $m$ fixed and $n\to\infty$ has been studied
\citep[e.g.,][]{skare2003improved}, we study the asymptotic behaviour of our resampling test when both $m$ and $n$ approach infinity, and show that under any resampling scheme, the requirement $m=o(\sqrt{n})$ is necessary.
}

The remainder of this work is structured as follows. We formally introduce the framework of testing under distributional shifts in Section~\ref{sec:framewo}. 
Section~\ref{sec:applications} showcases how several problems from different statistical fields fit into this framework. We introduce a general procedure for testing under distributional shifts and provide theoretical results in Section~\ref{sec:main-results}. In Section~\ref{sec:expe} we conduct simulation experiments and \cref{sec:summary_future_work} concludes and discusses future work.

\section{Statistical testing under distributional shifts}\label{sec:framewo}
\begin{figure}
    \centering
    \begin{tikzpicture}[yscale = 0.8]
        \node[ellipse, draw, minimum width=5cm,
        minimum height=3cm, line width=1pt]
        at (0,0) (E1){};
        \node at (0,2.5) (E1text){ $\mathcal{Q}$};
        \node at (0,-3) (E1text){Distributions on observed domain $\mathcal{X}$;};
        \node at (0,-3.6) (E1text){
            $X_1,\ldots,X_n\overset{\text{i.i.d.}}{\sim} Q^*$.};
        \node[circle, draw, fill, minimum width=0.1cm, inner sep=0, color=red] at (0.5, 0.8) (Qstarpt){};
        \node at (0.5, 1.2) (Qstar){ $Q^*$};    

        \node[ellipse, draw, minimum width=5cm,
        minimum height=3cm, line width=1pt]
        at (6,0) (E2){};
        \node at (6,2.5) (E2text){ $\mathcal{P}$};
        \node at (6,-3) (E2text){Distributions on target domain $\mathcal{Z}$;};
        \node[circle, draw, fill, minimum width=0.1cm, inner sep=0, color=red] at (5.5, -1.5) (Pstarpt){};
        \node at (5.9, -1.1) (Pstar){ $P^*=\tau(Q^*)$};
        \node[ellipse, draw, minimum width=1.5cm,
        minimum height=1.5cm, line width=1pt]
        at (5.5,0.8) (C2){};
        \node at (5.5, 0.8) (H0target){ $H_0$};
        \node at (6,-3.6) (E2text){unobserved.};

        \node[ellipse, draw, minimum width=2cm,
        minimum height=1.5cm, line width=1pt]
        at (0.2,-0.8) (C1){};
        \node at (0.2, -0.8) (H0target){ $\tau^{-1}(H_0)$};

        \draw[-Latex, line width=1pt]
        ([shift={(0.25,0.5)}]E1.north east) to [out=30, in=150]
        ([shift={(-0.25,0.5)}]E2.north west);
        \node at (3, 1.5) (tau){\Large $\tau$};
     \end{tikzpicture}
     \caption{Illustration of observed and target domains, $\cQ$ and $\cP$, target hypothesis $H_0 \subseteq \cP$ and pullback hypothesis $\tau^{-1}(H_0)$.}
     \label{fig:framework}
\end{figure}

\subsection{Testing hypotheses in a target distribution}\label{subsec:framework}
Consider a set of distributions $\mathcal{P}$ on a target domain $\mathcal{Z} \subseteq \R^d$ and a null hypothesis $H_0 \subseteq \mathcal{P}$. In hypothesis testing, we are usually given data from a distribution $P^*\in\mathcal{P}$ and want to test whether $P^*\in H_0$. 
In this paper, we consider the problem of testing the same hypothesis but instead of observing data from $P^*$ directly, we assume the data are generated by a different, but related, distribution $Q^*$ from a set of distributions $\mathcal{Q}$ over a (potentially) different observational domain $\mathcal{X}\subseteq\R^e$.

More formally, we assume that we have observed data $\mathbf{X}_n \coloneqq  (X_1,\ldots,X_n) \in\mathcal{X}^n$ consisting of $n$ i.i.d.\ random variables $X_i$ with distribution $Q^* \in \mathcal{Q}$. We assume that $Q^*$ and $P^*$ are related through a map $\tau: \cQ \rightarrow \cP$, called a (distributional) shift, which satisfies $P^*=\tau(Q^*)$. We aim to construct a randomized hypothesis test $\psi_n: \mathcal{X}^{n} \times \R \rightarrow \{0,1\}$ that we 
can
apply to the observed data $\mathbf{X}_{n}$ to test the null hypothesis
\begin{align}\label{eq:H_do}
    \tau(Q^*) \in H_0.
\end{align}
We reject this null hypothesis if $\psi_n = 1$ and do not reject the null if $\psi_n = 0$. 
To allow for random components, 
we let $\psi_n$ take as input a uniformly distributed random variable $U$ (assumed to be independent of the other variables) that generates the randomness of $\psi_n$.
Whenever there is no ambiguity about the randomization, we omit $U$ and write $\psi_n(\mathbf{X}_{n})$; unless stated otherwise, any expectation or probability includes the randomness of $U$.
For $\alpha \in (0, 1)$, we say that $\psi_n$ holds level $\alpha$ at sample size $n$ if it holds that
\begin{align} \label{eq:sc2}
    \sup_{Q \in \tau^{-1}(H_0)} \P_{Q}(\psi_n(\mathbf{X}_{n}, U) = 1) \leq \alpha.
\end{align}
In practice, requiring level at sample size $n$ is often too restrictive. We say that the test has pointwise asymptotic level $\alpha$ if  
\begin{align}\label{eq:pointwise-level}
    \sup_{Q \in \tau^{-1}(H_0)} \limsup_{n \rightarrow \infty} \P_Q(\psi_n(\mathbf{X}_n, U) = 1) \leq \alpha.
\end{align}
We illustrate the setup in \cref{fig:framework}.
\begin{remark}
The map $\tau:\cQ\rightarrow\cP$ above represents a view that starts with the distribution  $Q^*$ of the observed data and considers the distribution $P^*$ of interest as the image under $\tau$. Alternatively, one may also start with a map $\eta: \cP \rightarrow \cQ$ and say that the test holds level $\alpha$ at sample size $n$ if 
\begin{align}\label{eq:uniform-level}
    \sup_{P \in H_0} \P_{\eta(P)}(\psi_n(\mathbf{X}_n, U) = 1) \leq \alpha.
\end{align}
This corresponds to a level guarantee for a test of the hypothesis $\eta^{-1}(Q^*) \cap H_0 \neq \emptyset$. If $\tau$ is invertible, the two views trivially coincide with $\eta \coloneqq \tau^{-1}$, but in general there are subtle differences, see \cref{sec:forward-backward-map} for details.
In this paper, we use the formulation based on $\tau: \mathcal{Q} \rightarrow \mathcal{P}$, that is, Equations~\eqref{eq:H_do},~\eqref{eq:sc2} and~\eqref{eq:pointwise-level}.
\end{remark}

\subsection{Distributional shifts}\label{sec:dist_shift}
We consider two types of maps $\tau: \mathcal{Q} \rightarrow \mathcal{P}$, both of which can be written in product form.
First, assume that there is a subset $A \subseteq \{1, \ldots, d\}$ together with a known map $r: x^A \mapsto r(x^A) \in [0,\infty)$ such that for all $q \in \mathcal{Q}$ the 
target density\footnote{In the remainder of this work, we assume that $\mathcal{X}$ and $\mathcal{Z}$ are both subsets of $\R^d$, that is $e=d$, and that all distributions in $\cP$ and $\cQ$ have densities with respect to the same dominating product measure $\mu$.
We refer to a distribution $Q$ and its density $q$ interchangeably.} $\tau(q)$ satisfies that \begin{align}\label{eq:tauform}
    \tau(q)(x^1, \ldots, x^d) \propto r(x^A) \cdot q(x^1, \ldots, x^d) \qquad \text{for all } (x^1, \ldots, x^d) \in \mathcal{Z}.
\end{align}
Here, we assume that the factor $r$ is known in the sense that it can be evaluated for any given $x^A$ (or at least on all points in the observed sample $\bX^A_n$). 
This type of map naturally arises in many examples, such as in off-policy evaluations with a known training policy or when performing a conditional independence test with known conditional, see \cref{sec:offpolicytesting}. 

Second, assume that there is a subset $A \subseteq \{1, \ldots, d\}$ together with a known map $r_{(\cdot)}: (q,x^A) \mapsto r_q(x^A) \in [0, \infty)$ such that for all $q \in \mathcal{Q}$, the target density $\tau(q)$ satisfies that
\begin{align} \label{eq:tauform-q}
    \tau(q)(x^1, \ldots, x^d) \propto r_q(x^A) \cdot q(x^1, \ldots, x^d) \qquad \text{for all } (x^1, \ldots, x^d) \in \mathcal{Z}.
\end{align}
Here, we assume that the factor $r_{(\cdot)}$ can be evaluated for any given $(q,x^A)$.
This case arises, for example, when the training policy or the conditional is unknown and needs to be estimated from data.
Section~\ref{sec:applications} contains examples for both types of shifts.
If, in any of the above two cases, the set $A$ is not mentioned explicitly, we implicitly assume $A = \{1, \ldots, d\}$.
In many applications $\tau$ represents a local change in the system, so even though $d$ may be large, $|A|$ will be much smaller than $d$. In particular we do not need to know the entire distribution
to evaluate $r(x^A)$. 

\new{In principle, this approach applies to any full-support distribution $Q$, since for a given target distribution $P \in \mathcal{P}$, $P = \tau(Q)$ 
is satisfied as long as we define $r(x) = p(x)/q(x)$, and in the case that we consider a change in a single conditional, this simplifies to $r(x^{\{a_1, a_2\}}) = p(x^{a_1}|x^{a_2})/q(x^{a_1}|x^{a_2})$. In practice, there may be regions of the support where $q(x)$ is much smaller than $p(x)$, in which case the weights will be ill-behaved. We address this issue in \cref{assump:finite-second-moment} and analyze its impact in \cref{thm:finite-level-SIR}. 
For some shifts and hypotheses, direct solutions are available that do not use the importance weights \cref{eq:tauform}: 
For example, when testing a hypothesis about $X^1$ under a mean shift in the marginal distribution of $X^1$, one could directly add the anticipated shift in mean to every observation before testing.
However, in most cases involving shifts in conditional distributions or in variables different from those entering the test, such approaches fail. 
If $\tau$ is misspecified, in the sense that $\tau(Q^*) \neq P^*$, then the guarantees for the methodology below still hold, but for testing the distribution $\tau(Q^*)\in H_0$ instead of $P^*\in H_0$.}

\subsection{Exploiting a test in the target domain}\label{sec:test-target-to-observable-domain}
In this work, we assume that there is a test $\varphi$ for the hypothesis $H_0$ that can be applied to data from the target domain $\cZ$.
Formally, we consider a sequence $\varphi_k: \cZ^k \times \R \rightarrow \R$ of (potentially randomized) hypothesis tests for $H_0$ that can be applied to $k$ observations $\bZ_k$ from the target domain $\cZ$ and a uniformly distributed random variable $V$, generating the randomness of $\varphi_k$. 
For simplicity, we omit $V$ from the notation and write $\varphi_k(\bZ_k)$.
We say that  $\varphi := (\varphi_k)_k$  has pointwise asymptotic level $\alpha$ for $H_0$ in the target domain if
\begin{align}\label{eq:pointwise-level-target-test}
    \sup_{P \in H_0} \limsup_{k\rightarrow\infty} \P_P(\varphi_k(\bZ_k) = 1) \leq \alpha.
\end{align}
To address the problem of testing under  distributional shifts, we propose in \cref{sec:main-results} to resample a data set of size $m $ from
the observed data $\mathbf{X}_n$ (using resampling weights that depend on the shift)
and apply the test $\varphi_m$ to the resampled data.
We show that this yields a randomized test $\psi$ , which inherits the pointwise asymptotic properties of $\varphi_m$ and in particular satisfies the level requirement~\eqref{eq:pointwise-level} if $\varphi$ has pointwise asymptotic level.
This procedure is easy-to-use and can be combined with any testing procedure $\varphi$ from the target domain. 

\subsection{Testing hypotheses in the observed domain} \label{sec:testingobserved}

The framework of testing hypotheses in the target distribution can be helpful even if we are interested in testing a hypothesis about the observed distribution  $Q^*$, that is, testing $Q^* \in H_0^\mathcal{Q}$ for some $H_0^\mathcal{Q} \subseteq \mathcal{Q}$.
If $\tau(H_0^\mathcal{Q}) \subseteq H_0^{\mathcal{P}} :=  H_0 $, any test $\psi_n$ satisfying pointwise asymptotic level~\cref{eq:pointwise-level} for 
$
H_0^{\mathcal{P}} 
\subseteq \mathcal{P}$ can be used as a test for $Q^* \in H_0^\mathcal{Q}$, and will still satisfy asymptotic level, see~
\cref{sec:testing_in_obs_res}.

Such an approach can be particularly interesting when it is more difficult to test $Q^* \in H_0^\mathcal{Q}$ in the observed domain than it is to test $\tau(Q^*) \in H_0^{\mathcal{P}}$ 
in the target domain. 
For example, testing conditional independence in the observed domain can be reduced to (unconditional) independence testing in the target domain.
Here, we may benefit from transferring the test into the target domain if one of the conditionals is known or can be estimated from data. Also testing a Verma equality \citep{Verma1991} in the observed distribution can be turned into an independence test in the target distribution, too; but here, testing directly in the observed domain may not even be possible.
Often there is a computational advantage of our approach: 
In many situations, the resampled data set, where the hypothesis is easier to test, is much smaller than the original data set, see for instance the experiment in \cref{exp:cond-ind-test}.
When the hypothesis of interest is in the observed domain, usually different choices for the target distribution are possible. 
In practice, it is helpful to choose a target distribution that yields well-behaved resampling weights (see~\cref{eq:SIR-weights}), 
which can often be achieved by matching certain marginals, see, e.g., Section~\ref{sec:expverma} \citep[see also][]{robins2000, hernan2006estimating}.

The following \cref{sec:applications} 
discusses the above and other applications of 
testing under distributional shifts in more detail.
Corresponding simulation  experiments are presented in Section~\ref{sec:expe}. \cref{sec:main-results}
provides details of our method and its theoretical guarantees. 

\section{Example applications of testing under distributional shifts} \label{sec:applications}

\subsection{Conditional independence testing} 
\label{sec:conditionaltesting}
Let us first consider 
a random vector $(X, Y, Z)$
with joint probability density function $q^*$ and assume that the conditional $q^*(z|x)$ is known.
We can then apply our framework to test 
$$
H_0^{\mathcal{Q}}=\{Q: X\indep Y\mid Z \text{ and } q(z|x) = q^*(z|x)\} \quad \text{(cond.\ ind.\ in observed domain)}
$$
by reducing the problem to an unconditional independence test.
The key idea is to factor a density $q \in H_0^{\mathcal{Q}}$ as $q(x,y,z) = q(y|x,z)q^*(z|x)q(x)$, 
replace\footnote{If the factorization happens to correspond to the factorization using a causal graph, this is similar to performing an intervention on $Z$, see \cref{sec:scm}. However, the proposed factorization is always valid, so this procedure does not make any assumptions about causal structures.} the conditional $q^*(z|x)$ by, e.g., a standard normal density $\phi(z)$ to obtain the target density $p$, and then test for unconditional independence of $X$ and $Y$. 
When $X$ is a randomized treatment, $Y$ the outcome, and $Z$ is a mediator, this corresponds to testing (non-parametrically) the existence of a direct causal effect \citep[e.g.,][]{pearl2009causality, imbens2015causal, hernan2020causal}.

Formally, we define a corresponding hypothesis in the target domain: 
$$H_0^{\mathcal{P}}:=\{P: X\indep Y \text{ and } p(z|x) = \phi(z)\} \quad \text{(ind.\ in target domain)}$$  
with $\phi$ being the standard normal density.
We can then define a map $\tau$ by
$$\tau(q)(x, y, z) := \frac{\phi(z)}{q^*(z|x)} \cdot q(x,y,z) \qquad \text{for all } (x,y,z) \in \mathcal{Z}.$$
Considering any $q \in H_0^{\mathcal{Q}}$ and writing $p := \tau(q)$, we have
$$
p(x,y,z) = \frac{\phi(z)}{q^*(z|x)}q(y|x,z)q^*(z|x)q(x) = q(y|x,z)q(x)\phi(z).
$$
This shows\footnote{The following statement holds because, clearly, $p(z|x) = \phi(z)$ and if $X \indep Y \,|\,Z$ in $q$, that is, $q(y|x,z) = q(y|z)$ for all $x,y,z$ yielding this expression well-defined, it follows $p(x,y) = p(x)p(y)$.
} that $X \indep Y \,|\,Z$ in $q$ implies $X \indep Y$ in $p$ and therefore
$\tau(H_0^\mathcal{Q}) \subseteq H_0^\mathcal{P}$.
Starting with an independence test $\varphi_m$ for $H_0^{\mathcal{P}}$, we can thus test
$\tau(Q^*) \in H_0^{\mathcal{P}}$, with level guarantee in \eqref{eq:pointwise-level}.
As we have argued in Section~\ref{sec:testingobserved}, this corresponds to testing
$Q^* \in H_0^{\mathcal{Q}}$, and thereby reduces the question of conditional independence to independence.

If, instead of $q^*(z|x)$, we know the reverse conditional
$q^*(x|z)$, we can use 
the same reasoning as above
using the factorization 
$q(x,y,z) = q(z)q^*(x|z)q(y|x,z)$ and a marginal target density $\phi(x)$
to again test $X \indep Y \,|\,Z$. 
When $X$ is a treatment, 
$Y$ the outcome, $Z$ is the full set of covariates, and 
$q^*(x|z)$ represents the 
randomization scheme, this corresponds to testing (non-parametrically) the existence of a total causal effect 
  \citep[e.g.,][]{peters2017elements}
between $X$ and $Y$.

If neither of the conditionals is known, we can 
still fit the test into our framework. 
To do so, define the hypotheses
$H_0^{\mathcal{Q}}:=\{Q: X\indep Y| Z\}$, 
$H_0^{\mathcal{P}}:=\{P: X\indep Y \text{ and } p(z|x) = \phi(z)\}$, 
and the map $\tau$ via 
$\tau(q)(x, y, z) := \frac{\phi(z)}{q(z|x)} \cdot q(x, y, z)$, for all $(x^1, \ldots, x^d) \in \mathcal{Z}$;
cf.~\eqref{eq:tauform-q}. 
Section~\ref{subsec:theoretical-guarantees} shows that one can
estimate the conditional $q(z|x)$ from data and may still maintain the level guarantee of the overall procedure.
There are other, more specialized conditional independence tests but this viewpoint may be  an interesting alternative if we can estimate one of the conditionals well, e.g., because there are many more observations of $(X,Z)$ than there are of $(X,Z,Y)$.

The assumption of knowing one conditional $q(x|z)$ is also exploited by the conditional randomization (CRT) and the conditional permutation test (CPT) by \citet{candes2018panning} and \citet{berret2020the}, respectively. They simulate
(in case of CRT) or permute (in case of CPT)
$X$ while keeping $Z$ and $Y$ fixed and construct $p$-values for the hypothesis of conditional independence.
The approaches are 
similar in that they use the known conditional to create weights. Our method, however, explicitly 
constructs a target distribution 
and, as argued above, cannot only exploit knowledge of $q^*(x|z)$ but also knowledge of $q^*(z|x)$. 

\subsection{Off-policy testing}\label{sec:offpolicytesting}
Consider a contextual bandit setup  \citep[e.g.][]{langford2007epoch,agarwal2014taming}. In each round, an agent observes a context $Z \coloneqq (Z^1, \ldots, Z^d)$ and selects an action $A \in \{a_1, \ldots, a_L\}$, based on a known policy $q^*(a|z)$. 
The agent then receives a reward $R$ depending on the chosen action $A$ and the observed context $Z$.
Suppose 
we have access to a data set $\bX_n$ of $n$ rounds containing observations $X_i := (Z_i, A_i, R_i)$,
$i =1, \ldots, n$. 
We can then 
test 
statements 
about the distribution under another policy  $p^*(a|z)$.
For example, we can test
whether the expected reward is smaller than zero. 
To do so, we define
$$
H_0 \coloneqq
\{P: 
\E_{P}[R] \leq 0 \quad \text{ and } \quad
p(a|z) = p^*(a|z)
\}
$$
and
$\tau(q)(x) \coloneqq r(x)q(x)$ with
the shift factor $r(z,a) \coloneqq p^*(a|z)/q^*(a|z)$. 
Here, the function of interest 
can be written as an expectation of a single observation,
so other, simpler approaches such as IS or 
IPW can be used, too (see~\cref{sec:intro}).

But it is also possible to test more involved hypotheses. 
This includes
testing (conditional) independence under a new policy, for example. 
Suppose that one of the covariates $Z^j$ is used for selecting actions by an observed policy $q^*(a|z)$. This creates a dependence between $Z^j$ and $R$, but it is unclear whether 
this dependence is only
due to the action $A$ being based on $Z^j$, 
 or whether $Z^j$ also depends on $R$ in other ways, for instance in that $Z^j$ has a direct effect on $R$.
 To test the latter statement, we can create a new policy $p^*(a|z)$ that does not use $Z^j$ for selecting actions. Then, we can test whether, under $p^*(a|z)$,  $R$ is independent of $Z^j$, given the other variables that the action is based on.
If not, we know that there must be a dependence between $R$ and $Z^j$ under $q^*(a|z)$ beyond the action $A$ being based on $Z^j$.
This may be relevant for learning sets of features that are invariant across different environments, that is, features $Z^J$ such that $R\mid Z^J$ is stable across environments. A policy that depends on such invariant features is guaranteed to generalize to unseen environments \citep{saengkyongam2021invariant}.
Another, more involved hypothesis for off-policy evaluation compares the reward distributions under two different policies. This can be written as a two-sample test, which we discuss next.

\subsection{Two-sample testing with one transformed sample} \label{sec:twosampletesting}
We can use the framework to perform a two-sample test, after transforming one of the two samples. 
Consider
the observed distribution $q^*$ over $X=(X^1,\ldots,X^d)\in\R^d$
and $K \in \{1,2\}$, 
where the latter indicates which of the two samples a data point belongs to. 
We now keep the first sample as it is and change the second sample, i.e., 
\begin{align*}
q^* \mapsto \tau(q^*) \quad \text{ with } \quad \tau(q^*)({x} | k=1) = q^*({x} | k=1).
\end{align*} 
We can then
test whether, after the transformation, 
the two samples come from the same distribution, i.e., whether
$$
q^*({x} | k=1) = \tau(q^*)({x} | k=2)
$$
for all ${x}$. 
For example, 
let us assume that in the second sample, we know the conditional $q^*(x^2|x^1, k=2)$
and change it to $p^*(x^2|x^1, k=2)$, which we assume to be known, too.
To formally apply 
our framework, we then define
\begin{align*}
H_0 := \{P\,:\, &(X^1, \ldots, X^d)_{|K=1} \, \overset{
{\mathcal{L}}}{=} \, (X^1, \ldots, X^d)_{|K=2}, \\
& \quad 
p(x^1|x^2, k=1) = q^*(x^1|x^2, k=1)
\text{ and } 
p(x^1|x^2, k=2) = p^*(x^1|x^2, k=2)\}
\end{align*}
and the shift 
$
\tau(q)(x^1, \ldots, x^d, k) := 
r(x^1,x^2,k) \cdot q(x^1, \ldots, x^d, k),
$
where
$$
r(x^1,x^2,k) := \left\{\begin{array}{cl}
1 & \text{if } k = 1\\ 
\frac{p^*(x^1|x^2,k=2)}{q^*(x^1|x^2,k=2)} & \text{if } k = 2. 
\end{array}
\right.
$$
Two-sample testing under distributional shifts can be used for off-policy evaluation (the setting is described in the previous section).
We first split the training 
sample 
into two subsamples ($K=1$ and $K=2$) and then test
whether the distribution of the reward is different under the two policies,
$$
H_0 := \{P\,:\, R_{|K=1} \, \overset{{\mathcal{L}}}{=} \, R_{|K=2},\; 
p(a|z, k=1) = q^*(a|z)
\text{ and } 
p(a|z, k=2) = p^*(a|z)\}.
$$
With 
a similar reasoning we can also 
test, non-parametrically, whether the expected reward  under the new policy
$p^*(a|z,k=2)$
is larger than under the current policy
$q^*(a|z,k=2)$.
To do so,
we define
$$
H_0 := \{P\,:\, \E_P [R_{|K=2}] \leq \E_P[R_{|K=1}], \;
p(a|z, k=1) = q^*(a|z)
\text{ and } 
p(a|z, k=2) = p^*(a|z)\},
$$
for example. 
Section~\ref{sec:exp-off-policy}
shows some empirical evaluations of such tests. 

\subsection{Dormant independences}\label{sec:dormant-indep-theory}
Let us consider a random vector $(X^1, \ldots, X^d)$ with a distribution $Q$ that is Markovian with respect to a directed acyclic graph and that has a density w.r.t.\ a product measure. By the global Markov condition  \citep[e.g.][]{Lauritzen1996}, we then have for all disjoint subsets $A, B, C \subset \{1, \ldots, d\}$ that $X^A \indep X^B \,|\,X^C$ if $A$ $d$-separates\footnote{Whether a $d$-separation statement holds is entirely determined from the graph; the precise definition of $d$-separation can be found in \citep[e.g.,][]{Spirtes2000} but is not important here.} $B$ given $C$.
If some of the components of the random vector are unobserved, the Markov assumption still implies conditional independence statements in the observational distribution. In addition, however, it may impose constraints on the observational distribution that are different from conditional independence constraints. 
Figure~\ref{fig:vermaL} shows a famous example, due to \citet{Verma1991}, that
gives rise to the Verma-constraint: If the random vector $(X^1, X^2, X^3, X^4, H)$ has a distribution $Q$ that is Markovian w.r.t.\ the graph $\mathcal{G}$ shown in Figure~\ref{fig:vermaL} (left), there exists a function $f$ such that, for all $x^1, x^3, x^4$,
\begin{equation} \label{eq:verma}
    \int_{-\infty}^{\infty} q(x^2| x^1) q(x^4 | x^1, x^2, x^3)\, d{x^2} = f(x^3,x^4)
\end{equation}
(in particular, $f$ does not depend on $x^1$). This constraint cannot be written as a conditional independence constraint in the observational distribution $Q$. In general, the constraint~\eqref{eq:verma} does not hold if $Q$ is Markovian w.r.t.\ $\mathcal{H}$ (see Figure~\ref{fig:vermaL}, right).
Assume now that the conditional $q(x^3|x^2) = q^*(x^3|x^2)$ is known (e.g., through a randomization experiment).
We can then hope to test for this constraint by considering the null hypothesis
$$
H_0^{\mathcal{Q}}:= \{Q: Q \text{ satisfies } \eqref{eq:verma} \text{ and } q(x^3|x^2) = q^*(x^3|x^2)\}
$$
and hence distinguish between $\mathcal{G}$ and $\mathcal{H}$.
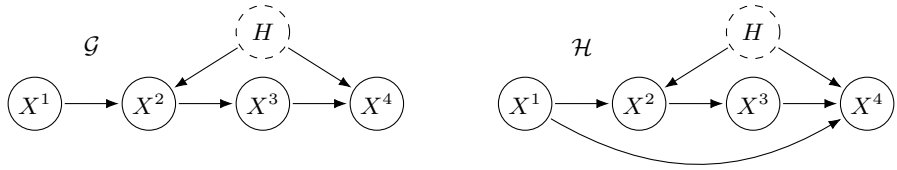
\begin{figure}
\centerline{
\begin{tikzpicture}[xscale=1.5, yscale=1.9, shorten >=1pt, shorten <=1pt]
\small
  \draw (0,3) node(a) [observedsmall] {$X^1$};
  \draw (1,3) node(b) [observedsmall] {$X^2$};
  \draw (2,3) node(c) [observedsmall] {$X^3$};
  \draw (3,3) node(d) [observedsmall] {$X^4$};
  \draw (2,3.5) node(h) [unobservedsmall] {$H$};
  \draw[-Latex, white] (a) to [bend right = 25] (d);
  \draw[-Latex] (a) -- (b);
  \draw[-Latex] (b) -- (c);
  \draw[-Latex] (c) -- (d);
  \draw[-Latex] (h) -- (b);
  \draw[-Latex] (h) -- (d);
  \node[] at (0.5,3.4) {$\mathcal{G}$};
\end{tikzpicture}
\hspace{1cm}
\begin{tikzpicture}[xscale=1.5, yscale=1.9, shorten >=1pt, shorten <=1pt]
\small
  \draw (0,3) node(a) [observedsmall] {$X^1$};
  \draw (1,3) node(b) [observedsmall] {$X^2$};
  \draw (2,3) node(c) [observedsmall] {$X^3$};
  \draw (3,3) node(d) [observedsmall] {$X^4$};
  \draw (2,3.5) node(h) [unobservedsmall] {$H$};
  \draw[-Latex] (a) to [bend right = 25] (d);
  \draw[-Latex] (a) -- (b);
  \draw[-Latex] (b) -- (c);
  \draw[-Latex] (c) -- (d);
  \draw[-Latex] (h) -- (b);
  \draw[-Latex] (h) -- (d);
  \node[] at (0.5,3.4) {$\mathcal{H}$};
\end{tikzpicture}
}
\caption{\label{fig:vermaL}
If $Q$ is Markovian w.r.t.\ graph $\mathcal{G}$ (left), then $Q$ satisfies the Verma constraint~\eqref{eq:verma}. In general, this constraint 
does not hold if $Q$ is Markovian w.r.t.\ $\mathcal{H}$ (right).
Such constraints can be tested for using the framework of statistical testing under distributional shifts, see Section~\ref{sec:dormant-indep-theory}.}
\end{figure}%
Constraints of the form~\eqref{eq:verma} have been studied recently, and in a few special cases, such as binary or Gaussian data, the constraints can be exploited to construct score-based structure learning methodology \citep{shpitser2012parameter,nowzohour2017distributional}. 
\citet{shpitser2008dormant}
show that some of such constraints, 
called dormant independence constraints,
can be written 
as a conditional independence constraint in an interventional distribution \citep[see also][]{ShpitserIntroToNested2014, Richardson2017}, and \citet{shpitser2009testing} propose an algorithm that detects constraints that arise due to dormant independences using oracle knowledge.
The Verma constraint~\eqref{eq:verma}, too, is a dormant independence, that is, we have
\begin{equation}\label{eq:verma2}
X^1 \indep X^4 \qquad \text{ in } Q^{do(X^3 := N)},
\end{equation}
where $N \sim \mathcal{N}(0,1)$, for example.
Here, $Q^{do(X^3 := N)}$,
denotes the distribution in which 
$q^*(x^3|x^2)$ is replaced by 
$\phi(x^3)$
see~\cref{sec:scm} for details.
Using the described framework, we can test \cref{eq:verma2} to distinguish between $\mathcal{G}$ and $\mathcal{H}$.

In practice, we may need to estimate the corresponding conditional, such as $q(x^3|x^2)$ 
in the example above, from data; as before, this still fits into the framework using~\cref{eq:tauform-q},
see~\cref{sec:expverma} for a simulation study.
In special cases, such as binary, applying resampling methodology to this type of problem has been considered before \citep{rohitverma}, but we are not aware of any work proposing a general testing procedure with theoretical guarantees.

\subsection{Uncovering heterogeneity for causal discovery} \label{icpplus}

For a response variable $Y$, consider the problem of finding 
the causal predictors 
$X^{\PA_Y}$, with $\PA_Y\subseteq\{1,\ldots,d\}$, 
among a set of potential predictors 
$X^1, \ldots, X^d$. 
The method of invariant causal prediction (ICP) \citep{Peters2016jrssb, HeinzeDeml2017, Pfister2018jasa}, for example, 
assumes that data are observed in different environments and that the causal mechanism for $Y$, given its causal predictors $\PA_Y$
is 
invariant over the observed environments  \citep[see also][]{Haavelmo1944, Aldrich1989, pearl2009causality}.
This allows for the following procedure: For all subsets $S \subseteq \{1, \ldots, d\}$ one tests whether the conditional $Y|X^S$ is invariant. The hypothesis is true for the set of causal parents, so taking the intersection over all such invariant sets yields, with large probability, a subset of %
$\PA_Y$
  \citep{Peters2016jrssb}.
  Environments can, for example, correspond to different interventions on 
  a node $X^j$.
Using the concept of testing under distributional shifts, we can apply a similar reasoning even if no environments are available and one causal conditional is known instead.

Assume a causal model (e.g., a structural causal model, SCM, see \cref{sec:scm}) over the variables $Y, X^1,\ldots,X^d$ and denote the causal predictors of $X^j$ by $\PA_j$. 
Assume further that there is a $j$ for which the conditional $q^*(x^j|x^{\PA_{j}})$ is known. To infer the causal parents of $Y$, we now construct a new distribution, in which the conditional
$q^*(x^j|x^{\PA_{j}})$
has been changed to another conditional 
$p^*(x^j|x^{\PA_{j}})$
 -- this corresponds to a distribution generated by an intervention on $X^j$. 
We then take the original and 
the resampled data as two `environments' and  apply the ICP methodology by testing whether the conditional $Y\mid X^S$ is invariant w.r.t.\ these two environments. That is, in the absence of `true heterogeneity', we use the known conditional to artificially sample heterogeneity.
Formally, for a candidate set $S \subseteq \{1, \ldots, d\}$
and an indicator variable $K$ indexing the two environments, we define the hypothesis
\begin{align*}
    H_{0,S} := \{P\,:\, Y\mid {X^S}_{|K=1} \, \overset{{\mathcal{L}}}{=} \, Y\mid {X^S}_{|K=2},\quad 
&p(x^j|x^{\PA_{j}}, k=1) = q^*(x^{\PA_{j}}|x^{\PA_{j}})
\text{ and } \\
&p(x^j|x^{\PA_{j}}, k=2) = p^*(x^{\PA_{j}}|x^{\PA_{j}})
\}
\end{align*}
and the shift factor $r(x^j, x^{\PA_j}, k)$ similar to the one in Section~\ref{sec:twosampletesting}. Naturally, the procedure extends to $K > 2$.
The distributional shift corresponds to an intervention on $X^j$ and it follows by modularity\footnote{Formally, given an SCM, the statement follows from the global Markov condition \citep{Lauritzen1996} in the augmented graph, including an intervention node with no parents that points into $X^j$.} that $H_{0, \PA_Y}$ is true.
Therefore, the intersection over 
all sets for which $H_{0,S}$ holds trivially satisfies 
$$
\bigcap_{S: H_{0,S} \text{ holds}} S \subseteq \PA_Y, 
$$
where we define the intersection over an empty index set as the empty set. 
Our framework allows for testing such hypotheses from finitely many data (that were generated only using the conditional $q^*(x^j|x^{\PA_{j}})$) and prove theoretical results that imply level statements for testing $H_{0,S}$. Such guarantees carry over to coverage statements for $\hat {S} := \cap_{S: H_{0,S} \text{ not rej.}} S$, that is, $\hat S \subseteq \PA_Y$ with large probability.

\subsection{Model selection under covariate shift} \label{sec:modselundercovsh}
Consider the problem of comparing models in a supervised learning task when the covariate distribution 
changes compared to the distribution that generated the training data. 
Formally, let us consider an i.i.d.\ sample $D \coloneqq \{(X_i, Y_i)\}_{i=1}^n$ from a distribution $q^*$, where $X_i \in \mathcal{X}$ are covariates with density
$q^*(x)$ and $Y_i \in \mathcal{Y}$ is a label with conditional density $q^*(y | x)$. First, we randomly split the sample into two distinct sets, which we call training set $D_{train}$ and test set $D_{test}$. Let $\hat{f}_1: \mathcal{X} \xrightarrow{} \mathcal{Y}$ and $\hat{f}_2: \mathcal{X} \xrightarrow{} \mathcal{Y}$ be outputs of two supervised learning algorithms trained on $D_{train}$. 
In model selection under covariate shift \citep[e.g.][]{quinonero2009dataset}, we are interested in comparing the performance of the predictors
$\hat{f}_1$ and $\hat{f}_2$ on a distribution $p^*$, where the covariate distribution is changed from $q^*(x)$ to $p^*(x)$, but the conditional $p^*(y | x) = q^*(y | x)$ remains the same. 
If we had an i.i.d.\ data set
$D^{sh}_{test}$
from the shifted distribution 
$p^*$,
we could compare the performances using a scoring function $\mathcal{S}(D_{test}^{sh},\hat{f})$ that for each of the predictors
outputs a real-valued evaluation score. 
However, we only have access to $D_{test}$, which comes from $q^*$. Let us for now assume that the shift from $q^*(x)$ to $p^*(x)$ is known.
Existing methods use IPW to correct for the distributional shift \citep{sugiyama2007covariate}, 
which requires that the scoring function can be expressed in terms of an expectation of a single observation, such as the mean squared error. However, such a decomposition is not immediate for many scoring functions  as for example the area under the curve (AUC).
The framework of testing under distributional shifts allows for an arbitrary 
scoring function (as long as a corresponding test exists) while maintaining statistical guarantees. To this end, we define the hypothesis 
\begin{align*}
    H_{0,\hat{f}_1,\hat{f}_2} := \{P\,:\, \E_{D_{test}^{sh} \sim P}\big[\mathcal{S}(D_{test}^{sh},\hat{f}_1) - \mathcal{S}(D_{test}^{sh},\hat{f}_2)\big] \leq 0,\quad 
&p(x) = p^*(x), p(y|x) = q^*(y|x)\},
\end{align*}
with the shift factor $r(x) \coloneqq p^*(x)/q^*(x)$. 
Using data $D_{test}$ from $q^*$, the methodology developed below allows us to test this hypothesis
$H_{0,\hat{f}_1,\hat{f}_2}$, that is, whether, in expectation, 
$\hat{f}_1$ outperforms $\hat{f}_2$ in the target distribution $p^*$, which includes the shifted covariate distribution.
In practice, the densities $p^*(x)$ or
$q^*(x)$ may not be given but one can still estimate these densities from data and apply our framework using \eqref{eq:tauform-q}.

\section{Testing by Resampling}\label{sec:main-results}
\new{
In \cref{sec:applications}, we listed various problems that can be solved by testing a hypothesis about a shifted distribution. 
In this section, 
we 
outline several approaches to test a target hypothesis $\tau(Q^*) \in H_0$, see \cref{eq:H_do}, using a sample $\bX_n$ from the observed distribution $Q^*$.
We initially consider the shift $\tau$ known, and later show that asymptotic level guarantees also apply if $\tau$ can be estimated sufficiently well from data.

Our approach relies on the existence of a hypothesis test $\varphi_m$ for the hypothesis $H_0$ in the target domain and applies this test to a resampled version of the observed data, which mimics a sample in the target domain.
We show that -- under suitable assumptions -- properties of the original test $\varphi_m$ carry over to the overall testing procedure $\psi^r_n$ 
(of combined resampling and testing, as defined in \cref{eq:resampled_test}).

This section is organised as follows. 
First, in \cref{subsec:method}, we propose a resampling scheme, which we show in \cref{subsec:theoretical-guarantees} has asymptotic guarantees. In \cref{subsec:resampling}, we discuss how to sample from the scheme in practice and we describe a number of extensions in \cref{subsec:extensions}.
In \cref{subsec:rejection-sampler} we show that a simpler rejection sampling scheme can be used if stricter assumptions are satisfied. 
}

\subsection{\texorpdfstring{\new{Distinct Replacement (DRPL) Sampling}\old{ \\ A consistent statistical test based on resampling}}{}}
\label{subsec:method}
\old{We propose a method for testing the target hypothesis $\tau(Q^*) \in H_0$, see \cref{eq:H_do}, using a sample $\bX_n$ from the 
observed
distribution $Q^*$.
The idea is to take an existing hypothesis test $\varphi_m$ for the hypothesis $H_0$ in the target domain and apply it to a resampled version of the observed data, which mimics a sample in the target domain.
We show that -- under suitable assumptions -- the pointwise asymptotic properties of the  original test $\varphi_m$ carry over to the overall testing procedure $\psi^r_n$ 
(defined in \cref{eq:resampled_test})
in the target domain.}

We consider the setting, where $\tau(q)(x) \propto r(x)q(x)$ for a known shift factor $r$; see \cref{eq:tauform}.
First, we draw a weighted resample of size $m$ from $\bX_n$ similar to the sampling importance resampling (SIR) scheme proposed by \citet{rubin1987calculation} but using a sampling scheme DRPL (`distinct replacment') that is different from sampling with or without replacement.
More precisely, we draw a resample $(X_{i_1}, \ldots, X_{i_m})$ from $\bX_n$, where $(i_1, \ldots, i_m) \in \{1, \ldots, n\}^m$ is a sequence of distinct\footnote{We use `distinct' and `non-distinct' only to refer to the potential repetitions that occur due to the resampling $(i_1, \ldots, i_m)$ and not due to potential repetitions in the values of the original sample $\bX_n$.} values;
the probability of drawing the sequence $(i_1,\ldots,i_m)$ is
\begin{align}\label{eq:SIR-weights}
    w_{(i_1, \ldots, i_m)} \propto 
    \left\{
    \begin{array}{cl}
    \prod_{\ell = 1}^m r(X_{i_\ell}) \propto \prod_{\ell=1}^m \frac{\tau(q)(X_{i_\ell})}{q(X_{i_\ell})} & \text{if } (i_1, \ldots, i_m) \text{ is distinct and}\\ 
    0 & \text{otherwise.}
    \end{array}\right.
\end{align}
We provide an efficient sampling algorithm and discuss different sampling schemes
in \cref{subsec:resampling}.
We refer to $(X_{i_1}, \ldots, X_{i_m})$ as the target sample and denote it by $\Psi_{\texttt{DRPL}}^{r, m}(\bX_n, U)$, where $U$ is a random variable representing the randomness of the resample. If the randomness is clear from context, we omit $U$ and write $\Psi_{\texttt{DRPL}}^{r, m}(\bX_n)$.
When $m$ is fixed and $n$ approaches infinity, the target sample $\Psi_{\texttt{DRPL}}^{r,m}(\bX_n)$ converges in distribution to $m$ i.i.d.\ draws from the target distribution $\tau(Q^*)$; see \citet{skare2003improved} for a proof for a slightly different sampling scheme.
Based on our proposed resampling scheme we construct a test $\psi^r_n$ for the target hypothesis \cref{eq:H_do} using only the observed data $\bX_n$ by defining
\begin{equation}
\label{eq:resampled_test}
    \psi^r_n(\bX_n) \coloneqq \varphi_m(\Psi^{r,m}_{\texttt{DRPL}}(\bX_n)),
\end{equation}
see also
\cref{alg:resampling-and-testing}.

\begin{algorithm}[t]
\caption{Testing a target hypothesis with known distributional shift and resampling}
\begin{algorithmic}[1]
\Statex \textbf{Input:} Data $\bX_{n}$, target sample size $m$, hypothesis test $\varphi_m$, shift factor $r(x^A)$.
\State $(i_1, \ldots, i_m) \gets$ sample from $\{1, \ldots, n\}^m$ with weights~\eqref{eq:SIR-weights} (see \cref{sec:sampling-DRPL})
\State $\Psi_{\texttt{DRPL}}^{r,m}(\bX_n) \gets (X_{i_1}, \ldots, X_{i_m})$
\Statex \Return $\psi_n^r(\bX_{n}) \coloneqq \varphi_m(\Psi_{\texttt{DRPL}}^{r,m}(\bX_{n}))$
\end{algorithmic}
\label{alg:resampling-and-testing}
\end{algorithm}

\new{We show in 
\cref{subsec:theoretical-guarantees} that
distinct sampling, $\Psi_{\texttt{DRPL}}$, allows us to show guarantees without introducing regularity assumptions on the test $\varphi_m$. The motivation for resampling without replacement comes from the fact that tests, as
opposed to estimation of means, may be sensitive to duplicates; an extreme but instructive
example is a test of the null hypothesis that no point mass is present in a distribution. A
resampling test with large replacement size and possible duplicates would not be able to obtain
level in such a hypothesis. Although we show in \cref{subsec:testing-with-REPL} that under stricter assumptions, sampling with replacement, that is using $\Psi_{\texttt{REPL}}$, becomes asymptotically equivalent to using $\Psi_{\texttt{DRPL}}$, this example highlights, that in the non-asymptotic regime, sampling duplicates may be harmful.}

\new{
The resampling scheme $\Psi_{\texttt{DRPL}}$ in \cref{eq:SIR-weights} is similar, but not identical to what would commonly be called `resampling without replacement' ($\Psi_{\texttt{NO-REPL}}$), where one draws a single observation $X_{i_1}$, removes $X_{i_1}$ from the list of candidates for further draws and normalizes the remaining weights to reflect the absence of $X_{i_1}$ (see also \cref{subsec:resampling}). 
$\Psi_{\texttt{DRPL}}$ and $\Psi_{\texttt{NO-REPL}}$ differ in the normalization constants, and the normalization constant in $\Psi_{\texttt{DRPL}}$ is easier to analyze theoretically. This enables \cref{lemma:sum-distinct-weights} in \cref{sec:proofs}, which describes the asymptotic behaviour of the mean and variance of \cref{eq:SIR-weights} as well as of the normalization constant of \cref{eq:SIR-weights}. 
We consider $\Psi_{\texttt{DRPL}}$ a tool that enables simpler theoretical analysis of SIR methods; in practice it is plausible that using $\Psi_{\texttt{NO-REPL}}$ instead of $\Psi_{\texttt{DRPL}}$ will yield similar results, though we are not aware of any theory justifying this.
}
\subsection{\texorpdfstring{\new{Pointwise Asymptotic Level and Power}\old{ \\ Theoretical guarantees}}{}}\label{subsec:theoretical-guarantees}

We now prove that the hypothesis test $\psi^r_n$ inherits the pointwise asymptotic properties of the test $\varphi$ in the target domain.
To do so, we require two assumptions: $m$ and $n$ have to approach 
infinity 
at a suitable rate, and we require the weights to be well-behaved. 
More precisely, we will make the following assumptions.
\begin{assumpenum}
    \item \label{assump:m-rate-n} 
    $m = m(n)$ satisfies $1 \leq m \leq n$, $m \rightarrow \infty$ and $m=o(\sqrt{n})$ for $n \rightarrow \infty$.
    \item \label{assump:finite-second-moment} 
    $\E_{Q}[r(X_i)^2] < \infty$.
\end{assumpenum}
\Cref{assump:m-rate-n} states that $m$ must approach infinity at a slower rate than $\sqrt{n}$.
\Cref{assump:finite-second-moment} is a condition to ensure the weights are sufficiently well-behaved, and is similar to conditions required for methods based on IPW, for example \citep[][]{robins2000}. 
If $r(x^A)$ only depends on a subset $A$ of variables, and $x^A$ takes finitely many values, \cref{assump:finite-second-moment} is trivially satisfied for all $Q$. In the case of an off-policy hypothesis test, such as the one described in \cref{sec:offpolicytesting}, a sufficient but not necessary condition for \Cref{assump:finite-second-moment} to hold for $Q^*$ is that the policy $q^*(a|z)$ is randomized, such that there is a lower bound on the probability of each action. 
For a Gaussian setting, where $r$ represents a change of a conditional $q(x^j|x^{j'})$ to a Gaussian marginal $p(x^j)$, we provide in \cref{sec:assumption-a3-gaussian} 
sufficient and necessary conditions under which  \cref{assump:finite-second-moment} is satisfied.
If the hypothesis of interest is in the observed domain (see \cref{sec:testingobserved}), we are usually free to choose any target density, so we can ensure that the tails decay sufficiently fast to satisfy \cref{assump:finite-second-moment}.
In \cref{sec:exp-assumptions-a2-a3} below, we 
analyze the influence of \cref{assump:m-rate-n,assump:finite-second-moment} on our test holding level in the context of synthetic data.
We now \old{state}\new{present} the first main result \new{which states that if $\alpha_\varphi \coloneqq \limsup_{k\to\infty} \P_{P^*}(\varphi_k(\bZ_k) = 1)$ is the asymptotic level of the test $\varphi$ when applied to a sample $\bZ_k$ from $P^*$, then this is also the asymptotic level of the resampling test in \cref{alg:resampling-and-testing} when applied to a sample $\bX_n$ from $Q^*$.} 
All proofs can be found in \cref{sec:proofs}.
\begin{theorem}[Pointwise asymptotics -- known weights]\label{thm:asymptotic-level-SIR}
    Consider a null hypothesis $H_0 \subseteq \mathcal{P}$ in the target domain. Let $\tau: \cQ \rightarrow \cP$ be a distributional shift for which a known map $r:\mathcal{X}\rightarrow[0,\infty)$ exists, satisfying $\tau(q)(x)=r(x)q(x)$, see \cref{eq:tauform}.
    Consider an arbitrary $Q \in \cQ$ and $P = \tau(Q)$. Let $\varphi_k$ be a sequence of tests for $H_0$ 
    and define     
    $\alpha_\varphi \coloneqq \limsup_{k\to\infty} \P_{P}(\varphi_k(\bZ_k) = 1)$.
    Let $m = m(n)$ be a resampling size and let $\psi^r_n$ be the DRPL-based resampling test defined by $\psi^r_n(\bX_n) \coloneqq \varphi_m(\Psi_{\texttt{DRPL}}^{r,m}(\bX_n))$, see \cref{alg:resampling-and-testing}.
    Then, if $m$ and $Q$  satisfy \cref{assump:m-rate-n,assump:finite-second-moment}, respectively,
    it holds that
    \begin{equation*}
        \limsup_{n\rightarrow\infty}\P_{Q} (\psi^r_{n}(\mathbf{X}_n)=1)=\alpha_\varphi.
    \end{equation*}
    The same statement holds when replacing both $\limsup$'s with $\liminf$'s.
\end{theorem}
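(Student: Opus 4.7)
The plan is to show that the joint law of the DRPL resample $\Psi^{r,m}_{\texttt{DRPL}}(\bX_n)$ is asymptotically indistinguishable (in total variation) from the product law of $m$ i.i.d.\ draws from $P$; the theorem then follows by transferring the asymptotic rejection probability of $\varphi_m$ from one input law to the other. Note that $r = dP/dQ$ since $\tau(q) = r\cdot q$ is a density, so in particular $\E_Q[r(X)] = 1$. For any bounded measurable $f : \cZ^m \to [0,1]$, conditioning on $\bX_n$ and unfolding the weights~\eqref{eq:SIR-weights} gives
\begin{equation*}
    \E_Q[f(\Psi^{r,m}_{\texttt{DRPL}}(\bX_n))] = \E_Q\!\left[\frac{1}{S_m(\bX_n)} \sum_{(i_1,\ldots,i_m)\text{ distinct}} f(X_{i_1},\ldots,X_{i_m}) \prod_{\ell=1}^m r(X_{i_\ell})\right],
\end{equation*}
where $S_m(\bX_n) = \sum_{\text{distinct}} \prod_\ell r(X_{i_\ell})$ is the random normaliser.

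The key technical step, which I expect to be the main obstacle, is the concentration of $S_m$ around its mean $N_m := n(n-1)\cdots(n-m+1)$; the identity $\E_Q[S_m] = N_m$ is immediate since in each distinct-index summand the $m$ factors are independent with mean $1$. Proving $S_m / N_m \to 1$ in $L^2$ requires a careful combinatorial variance calculation, since the $N_m$ summands share indices and are therefore correlated. Expanding $\mathrm{Var}(S_m)$ according to how many coordinates two $m$-tuples $(i_1,\ldots,i_m)$ and $(j_1,\ldots,j_m)$ share, one sees that the dominant contribution comes from pairs sharing exactly one index, yielding $\mathrm{Var}(S_m)/N_m^2 = O\bigl(m^2 (\E_Q[r^2]-1)/n\bigr)$; this is $o(1)$ precisely under \cref{assump:m-rate-n,assump:finite-second-moment}. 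I expect this is the content of the auxiliary lemma \texttt{sum-distinct-weights} referenced in the paper.

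Given this concentration, I replace $1/S_m$ by $1/N_m$ inside the expectation at the cost of an $o(1)$ error, by truncating on the event $\{|S_m/N_m - 1| > \epsilon\}$ and using that $\|f\|_\infty \leq 1$. Since each distinct-tuple summand is an ordered product of $m$ independent $Q$-coordinates, Fubini gives
\begin{equation*}
    \frac{1}{N_m}\,\E_Q\!\left[\sum_{\text{distinct}} f(X_{i_1},\ldots,X_{i_m})\prod_\ell r(X_{i_\ell})\right] = \E_Q\!\left[f(X_1,\ldots,X_m)\prod_\ell r(X_\ell)\right] = \E_P[f(\bZ_m)],
\end{equation*}
since the $N_m$ tuples are identically distributed and the density ratio $r = dP/dQ$ effects the change of measure. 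Hence $\E_Q[f(\Psi^{r,m}_{\texttt{DRPL}})] \to \E_P[f(\bZ_m)]$. Specialising to $f(\bZ_m) = \E_V[\varphi_m(\bZ_m, V)]$ (which is $[0,1]$-valued after integrating out the independent randomisation $V$) yields $|\P_Q(\psi^r_n = 1) - \P_P(\varphi_m(\bZ_m) = 1)| \to 0$. Taking $\limsup_n$ on both sides and using that $m = m(n) \to \infty$, $\limsup_n \P_Q(\psi^r_n = 1)$ coincides with $\limsup_n \P_P(\varphi_{m(n)}(\bZ_{m(n)}) = 1)$, which equals $\alpha_\varphi$ by definition; the $\liminf$ statement is identical. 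All the genuine analysis therefore sits in the variance bookkeeping for $S_m$; the remainder is a change-of-measure plus a truncation argument.
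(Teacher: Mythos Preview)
Your proposal is correct and follows essentially the same route as the paper: both express the rejection probability as the expectation of a ratio $c(n,m)/d(n,m)$, prove $L^2$-concentration of the normaliser $d(n,m)=S_m/N_m$ around $1$ via the U-statistic variance formula (this is exactly the auxiliary lemma you guessed, \texttt{lemma:sum-distinct-weights}, together with the combinatorial bound showing the $\ell=1$ term dominates and gives order $m^2/n$), then use a truncation on $\{|d(n,m)-1|>\delta\}$ combined with the change-of-measure identity $\E_Q[c(n,m)]=\P_P(\varphi_m=1)$. Your only cosmetic difference is abstracting to a general $f:\cZ^m\to[0,1]$ before specialising to $f=\E_V[\varphi_m(\cdot,V)]$, which amounts to the same computation.
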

\Cref{thm:asymptotic-level-SIR} shows that the rejection probabilities of $\psi^r$ and $\varphi$ converge towards the same limit.
In particular, the theorem states that if $\varphi$ satisfies pointwise asymptotic level in the sense of \cref{eq:pointwise-level-target-test}, and \cref{assump:finite-second-moment} holds for all $Q \in \tau^{-1}(H_0)$, then also $\psi^r$ satisfies pointwise asymptotic level~\cref{eq:pointwise-level}.
Similarly, because the statement holds for $P \notin H_0$, too, $\psi^r$ has the same asymptotic power properties as $\varphi$. 

\new{
We show in \cref{thm:asymptotic-level-SIR}, that \cref{assump:m-rate-n} is sufficient to obtain asymptotic level of the rejection procedure. In fact, as we show in the following theorem, it is also necessary: If $m,n$ approach infinity with $m \geq n^{q}$ for $q>\tfrac{1}{2}$, there exists a distribution $Q$, a shift $r$ and a sequence of tests $\varphi_k$ such that \cref{assump:finite-second-moment} is satisfied and $\alpha_\varphi \coloneqq \limsup_{k\to\infty} \P_{P}(\varphi_k(\bZ_k) = 1) < 1$ but the probability of rejecting the hypothesis on any resample of size $m$ converges to $1$. 
\begin{theorem}\label{thm:necessity-sqrt-n}
    Fix $\ell\in\{2,3,\ldots\}$ and let $\Psi^{m}$ be any resampling scheme that outputs a (not necessarily distinct) sample of size $m=n^q$ with $q > (\ell-1)/\ell$, and let $\alpha_\varphi \in (0,1)$. 
    Then there exist a distribution $Q\in\mathcal{Q}$, a distribution shift $\tau: \mathcal{Q}\rightarrow \mathcal{P}$ with a known map $r:\mathcal{X}\to [0,\infty)$, a null hypothesis $H_0 \subseteq \mathcal{P}$ and a sequence of hypothesis tests $\varphi_{k}$ such that $\tau(Q) \in H_0$, $\limsup_{k\to\infty} \mathbb{P}_{\tau(Q)}(\varphi_{k}(\mathbf{Z}_{k}) = 1) \leq \alpha_\varphi$, $\mathbb{E}_Q[r(X)^\ell] < \infty$ and
    \begin{equation*}
        \lim_{n\rightarrow\infty}\mathbb{P}_{Q}(\varphi_{m}(\Psi^m(\bX_n))=1)= 1.
    \end{equation*}
\end{theorem}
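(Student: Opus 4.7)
The plan is to construct an explicit heavy-tailed counterexample in which the resample's maximum is forced to be far smaller than what a genuine sample of size $m$ from $P$ would produce. Fix $q>(\ell-1)/\ell$, which is equivalent to $1/(1-q)>\ell$, and pick $\alpha$ with $\ell<\alpha<1/(1-q)$. Take $\mathcal{X}=\mathcal{Z}=[1,\infty)$, let $Q$ be Pareto with density $q(x)=\alpha x^{-(\alpha+1)}$, and set $r(x)=x$. Then $\E_Q[r(X)^\ell]=\alpha/(\alpha-\ell)<\infty$, and $P=\tau(Q)$ is Pareto with tail index $\alpha-1$, so $\P_P(Z\ge t)=t^{-(\alpha-1)}$.

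Next I would design a size-based test tailored to the situation. Put $c_k:=(k/|\log\alpha_\varphi|)^{1/(\alpha-1)}$, so that $\P_P(\max_{1\le j\le k} Z_j<c_k)=(1-|\log\alpha_\varphi|/k)^k\to\alpha_\varphi$, and define $\varphi_k(\bZ_k):=\mathbf{1}\{\max_{1\le j\le k}Z_j<c_k\}$ together with $H_0:=\{P\}$. Then $\tau(Q)\in H_0$ and, by construction, $\limsup_{k\to\infty}\P_P(\varphi_k(\bZ_k)=1)=\alpha_\varphi$.

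The key observation is that any resampling scheme returns values drawn from $\bX_n$, so $\max_j\Psi^m(\bX_n)_j\le M_n:=\max_{1\le i\le n}X_i$ regardless of the scheme's internal mechanism. Bernoulli's inequality gives $\P_Q(M_n\ge t)=1-(1-t^{-\alpha})^n\le nt^{-\alpha}$, and substituting $t=c_m\asymp n^{q/(\alpha-1)}$ yields a bound of order $n^{1-q\alpha/(\alpha-1)}$, whose exponent is strictly negative precisely because $\alpha<1/(1-q)$ is equivalent to $q\alpha/(\alpha-1)>1$. Consequently $\P_Q(M_n<c_m)\to 1$, and therefore
\begin{equation*}
    \P_Q(\varphi_m(\Psi^m(\bX_n))=1)=\P_Q\!\left(\max_j \Psi^m(\bX_n)_j<c_m\right)\ge \P_Q(M_n<c_m)\to 1.
\end{equation*}

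The main obstacle is identifying the admissible window $(\ell,1/(1-q))$ for the Pareto tail index and checking that the two tail exponents line up as required; once $\alpha$ is chosen in that interval, everything reduces to elementary Pareto extreme-value estimates combined with the trivial but crucial fact that a resampling scheme cannot produce values outside the observed sample $\bX_n$.
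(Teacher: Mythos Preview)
Your proof is correct and follows the same strategy as the paper: a maximum-based test $\varphi_k=\mathds{1}\{\max<c_k\}$, the trivial bound $\max_j\Psi^m(\bX_n)_j\le M_n$, and a tail calibration so that $\P_Q(M_n<c_m)\to 1$ while $\P_P(\max_{j\le m}Z_j<c_m)\to\alpha_\varphi$. The only difference is cosmetic: the paper uses bespoke piecewise-constant densities on $[0,\infty)$ (chosen so that the threshold is simply $c_m=m$ and the level is exact for every $m$), whereas your Pareto parametrization makes the admissible window $(\ell,1/(1-q))$ for the tail index more transparent.
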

In particular, letting $\ell = 2$ in \cref{thm:necessity-sqrt-n} shows that \cref{thm:asymptotic-level-SIR} does not hold without \cref{assump:m-rate-n}. 
}

\old{\Cref{,thm:asymptotic-level-SIR} consider}So far, we have considered the case in which the known shift factor $r$ does not depend on $q$. 
Next, we consider the setting in which the shift factor is allowed to explicitly depend on $q$.
This is relevant, for example, if the shift $\tau$ represents a change of the conditional of a variable $X^j$ from $q^*(x^j|x^B)$ to $p^*(x^j|x^C)$, say, but the observational conditional $q^*(x|x^B)$ is unknown, corresponding to the setting in \cref{eq:tauform-q}.
If $q^*(x^j|x^B)$ is unknown, we are not able to compute the weights 
$p^*(X_i^j|X_i^C)/q^*(X_i^j|X_i^B)$.
However, we can still try to estimate  $q^*(x^j|x^B)$ (or even $r$) and obtain approximate weights $\hat{r} \propto p^*/\hat{q}^*$. \new{Assume we have two data sets $\bX_{n_1}$ and $\bX_{n_2}$ both containing samples from $Q^*$, with $n_1$ and $n_2$ observations respectively and the first one is used to estimate $r$ and the second one to perform the test, see \cref{alg:resampling-and-testing-unknown-shift}.}
Then, if we make the following modifications to
 \cref{assump:finite-second-moment,assump:m-rate-n},
\begin{alt-assumpenum}[start=1]
    \item
    \label{assump:m-rate-n-prime}
    \new{$m = m(n_2)$ satisfies $1\leq m \leq n_2, m \to \infty$ and
    $m = o(\min(n_1^a,n_2^{1/2}))$ for $n_1,n_2\rightarrow \infty$, }
    \item $\E_Q[r_q(X_i)^2] < \infty$, \label{assump:finite-second-moment-prime}
\end{alt-assumpenum}
the following theorem states that even when estimating the weights, it is possible to obtain pointwise asymptotic level for the target hypothesis \cref{eq:H_do} -- if the weight estimation works sufficiently well.
\begin{theorem}[Pointwise asymptotics -- estimated weights]\label{thm:asymptotic-level-unknown-weights}
    Consider a null hypothesis $H_0 \subseteq \cP$ in the target domain. Let $\tau: \cQ \rightarrow \cP$ be a distributional shift,
    satisfying $\tau(q)(x) \propto r_q(x)q(x)$, see \cref{eq:tauform-q}. Consider an arbitrary $Q \in \cQ$ and $P = \tau(Q)$.
    Let $\varphi_k$ be a sequence of tests for $H_0$ and let $\alpha_\varphi \coloneqq \limsup_{k\to\infty} \P_{P}(\varphi_k(\bZ_k) = 1)$. 
    Let $\hat{r}_n$ be an estimator for $r_q$ such that there exists $a\in (0,1)$
    satisfying 
    \begin{align*}
        \lim_{n\to\infty}\sup_{x \in \cX} \E_Q \left|\left(\frac{\hat{r}_n(x)}{r_q(x)}\right)^{n^a} - 1 \right| = 0,
    \end{align*}
    where the expectation is taken over the randomness of $\hat{r}_n$. \old{Partition $\bX_n$ into two data sets $\bX_{n_1}$ and $\bX_{n_2}$ such that $n_1^a = \sqrt{n_2}$.
    Let $m = m(n)$ be a resampling size and let $\psi_n^{\hat{r}}$ be the DRPL-based resampling test defined by $\psi_n^{\hat{r}} \coloneqq \varphi_m(\Psi^{\hat{r}_{n_1}, m}_{\texttt{DRPL}}(\bX_{n_2}))$, where $\hat{r}_{n_1}$ is estimated using $\bX_{n_1}$, see \cref{alg:resampling-and-testing-unknown-shift}.}\new{Let $m = m(n_2)$ be a resampling size and let $\psi_{n_1,n_2}^{\hat{r}}$ be the DRPL-based resampling test defined by $\psi_{n_1,n_2}^{\hat{r}} \coloneqq \varphi_m(\Psi^{\hat{r}_{n_1}, m}_{\texttt{DRPL}}(\bX_{n_2}))$ from \cref{alg:resampling-and-testing-unknown-shift} in \cref{sec:sir-details}.}
    Then, if $m$ and $Q$ satisfy \cref{assump:m-rate-n-prime,assump:finite-second-moment-prime}, respectively, it holds that
    \begin{align*}
        \limsup_{n\rightarrow\infty} \P_Q(\psi_n^{\hat{r}}(\bX_n) = 1) = \alpha_\varphi.
    \end{align*}
    The same statement holds when replacing both $\limsup$'s with $\liminf$'s.
\end{theorem}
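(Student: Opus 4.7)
The plan is to reduce this to the known-weights theorem \cref{thm:asymptotic-level-SIR} via a total-variation argument. Let $\psi^{r_q}_{n_2}$ denote the hypothetical DRPL-resampling test of \cref{alg:resampling-and-testing} applied to $\bX_{n_2}$ with the \emph{true} weight function $r_q$. By the triangle inequality,
$$\bigl|\P_Q(\psi^{\hat{r}}_{n_1,n_2} = 1) - \alpha_\varphi\bigr| \leq \bigl|\P_Q(\psi^{\hat{r}}_{n_1,n_2} = 1) - \P_Q(\psi^{r_q}_{n_2} = 1)\bigr| + \bigl|\P_Q(\psi^{r_q}_{n_2} = 1) - \alpha_\varphi\bigr|.$$
The second term vanishes by \cref{thm:asymptotic-level-SIR}, since \cref{assump:m-rate-n-prime} implies $m = o(\sqrt{n_2})$ (matching \cref{assump:m-rate-n} with sample size $n_2$) and \cref{assump:finite-second-moment-prime} is \cref{assump:finite-second-moment} applied to $r_q$. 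Everything then reduces to bounding the first term, and I would do this by coupling the two tests to share the random seed of $\varphi_m$, so that the gap is controlled by the total variation between the conditional resampling distributions given $\bX_{n_1}, \bX_{n_2}$.

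Setting $\rho(x) \coloneqq \hat{r}_{n_1}(x)/r_q(x)$ and $\rho_I \coloneqq \prod_{\ell=1}^m \rho(X_{i_\ell})$ for a distinct sequence $I = (i_1, \ldots, i_m)$, a direct calculation with the weights \cref{eq:SIR-weights} yields
$$\mathrm{TV}\!\left(\Psi^{\hat{r}_{n_1},m}_{\texttt{DRPL}}(\bX_{n_2}) \,\|\, \Psi^{r_q, m}_{\texttt{DRPL}}(\bX_{n_2})\right) = \tfrac{1}{2}\, \E_{I \sim p^{r_q}}\!\left|\frac{\rho_I}{\E_{J \sim p^{r_q}}[\rho_J]} - 1\right|,$$
where $p^{r_q}$ denotes the conditional DRPL law under the true weights. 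By Cauchy--Schwarz, this is at most $\tfrac{1}{2}\sqrt{\E_{p^{r_q}}[\rho_I^2]/\E_{p^{r_q}}[\rho_I]^2 - 1}$. I would then invoke \cref{lemma:sum-distinct-weights} to approximate the DRPL moments of $\rho_I$ by i.i.d.\ expectations under $P = \tau(Q)$, so that $\E_{p^{r_q}}[\rho_I^k] \approx (\E_P[\rho(X)^k])^m$ for $k \in \{1, 2\}$. Consequently, the ratio in the square root behaves like $\kappa^m - 1$ with $\kappa = 1 + \mathrm{Var}_P(\rho)/\E_P[\rho]^2$, so the task boils down to showing $m \cdot \mathrm{Var}_P(\rho) \to 0$ with high probability in the randomness of $\hat{r}_{n_1}$.

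Here the estimation assumption enters. From $\sup_x \E_Q|(\hat{r}_n(x)/r_q(x))^{n^a} - 1| \to 0$ and Markov's inequality, I would derive the high-probability uniform bound $\sup_x |\rho(x) - 1| = o(n_1^{-a})$ on an event $\mathcal{A}_{n_1}$ with $\P_Q(\mathcal{A}_{n_1}) \to 1$. On $\mathcal{A}_{n_1}$ we get $\mathrm{Var}_P(\rho) \leq \sup_x (\rho(x) - 1)^2 = o(n_1^{-2a})$, and since $m = o(n_1^a)$ by \cref{assump:m-rate-n-prime}, this gives $m \cdot \mathrm{Var}_P(\rho) = o(n_1^{-a}) \to 0$. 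On the complementary event $\mathcal{A}_{n_1}^c$ I would bound the contribution of the test trivially by $1$, so the total contribution to the rejection-probability gap is $o(1)$. The $\liminf$ statement follows by applying the same argument to the test $1 - \varphi_m$.

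The main obstacle will be converting the averaged $L^1$-control $\sup_x \E_Q|(\hat{r}_n/r_q)^{n_1^a} - 1| = o(1)$ into the high-probability uniform control $\sup_x|\rho(x) - 1| = o(1/\sqrt{m})$ required by the moment computation -- delicate because the supremum over $x$ sits outside the expectation over $\hat{r}_{n_1}$, so the uniformity in $x$ has to be compatible with a single good event in the estimator randomness. A secondary difficulty is ensuring that the moment approximations supplied by \cref{lemma:sum-distinct-weights} are accurate enough at the scale $\kappa^m - 1$, since $\kappa$ itself is $1 + o(1)$ and the approximation errors must therefore be controlled to sufficient relative precision; once these uniform estimates are in place, the remaining bookkeeping with products of $m$ near-unit terms is routine.
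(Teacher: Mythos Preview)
Your total-variation reduction is reasonable in spirit, but the obstacle you flag in your last paragraph is genuinely blocking, not merely delicate. The hypothesis supplies $\sup_x \E_{Q_1}\bigl|\rho(x)^{n_1^a}-1\bigr|\to 0$: a supremum over $x$ \emph{outside} the expectation in the estimator randomness. Your argument needs the reverse order---a single event $\mathcal A_{n_1}$ of high $Q_1$-probability on which $\sup_x|\rho(x)-1|$ is small. Markov's inequality only controls each fixed $x$, and no union bound over uncountably many $x\in\cX$ is available without extra structure (compactness, covering numbers, smoothness of $\hat r_{n_1}$) that the theorem does not assume. So this step does not go through as stated, and your subsequent variance computation, which rests on the pointwise-uniform bound $\sup_x(\rho(x)-1)^2=o(n_1^{-2a})$, is left unsupported.

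The paper sidesteps the sup-exchange entirely by a different organization. Rather than coupling the two resampling laws and bounding a TV distance, it reproves the moment statements of \cref{lemma:sum-distinct-weights} directly for the numerator and denominator $M(\hat r_{n_1}) = \tfrac{(n_2-m)!}{n_2!}\sum_{\text{distinct}}\bigl(\prod_\ell \hat r_{n_1}(X_{i_\ell})\bigr)\delta_m$. Writing $\hat r_{n_1}=r_q\cdot\rho$ and using independence of $\bX_{n_1}$ and $\bX_{n_2}$, one integrates out the estimator randomness \emph{first}, with the data points $X_{i_\ell}$ from $\bX_{n_2}$ held fixed. For any fixed such realization, the assumption is invoked exactly in the form given: $Q_2$-a.s.,
\[
\Bigl|\E_{Q_1}\Bigl[\prod_{\ell=1}^m\rho(X_{i_\ell})\Bigr]-1\Bigr|
\;\leq\;\sup_{x}\E_{Q_1}\bigl|\rho(x)^m-1\bigr|
\;\leq\;\sup_{x}\E_{Q_1}\bigl|\rho(x)^{n_1^a}-1\bigr|
\;=\;\epsilon(n_1),
\]
using $m\leq n_1^a$ eventually and the elementary monotonicity $|c^m-1|\leq|c^{m'}-1|$ for $c\geq 0$ and $m\leq m'$. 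This sandwiches $\E_Q[M(\hat r_{n_1})]$ between $(1\pm\epsilon(n_1))\,\E_{Q_2}[M(r_q)]$, and the analogous bound with $2m$ factors controls the second moment; the remainder of the proof of \cref{thm:asymptotic-level-SIR} then applies verbatim. The conceptual point: taking $\E_{Q_1}$ first, pointwise in the $\bX_{n_2}$-data, matches the $\sup_x\E_{Q_1}$ shape of the hypothesis exactly, whereas your route implicitly demands the strictly stronger $\E_{Q_1}\sup_x$-type control.
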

\cref{thm:asymptotic-level-unknown-weights} shows that $\psi_n^{\hat{r}}$ converges to the same limit as $\varphi$. 
In particular, as for the case of known weights, if $\varphi$ satisfies pointwise asymptotic level and \cref{assump:finite-second-moment-prime} holds for all $Q \in \tau^{-1}(H_0)$, then also $\psi^{\hat{r}}$ satisfies pointwise asymptotic level for the hypothesis $\tau(Q^*) \in H_0$ and $\psi^{\hat{r}}$ inherits asymptotic power properties, too.

\subsection{\texorpdfstring{\new{Computationally Efficient Resampling with $\Psi_{\texttt{DRPL}}$}\old{ \\ Resampling Schemes}}{}} \label{subsec:resampling}
In \cref{subsec:method} we propose a sampling scheme $\Psi_{\texttt{DRPL}}$, defined by \cref{eq:SIR-weights}, and in \cref{subsec:theoretical-guarantees} we prove theoretical level guarantees when we resample the observed data using $\Psi_{\texttt{DRPL}}$. In this section, we display a number of ways to sample from $\Psi_{\texttt{DRPL}}$ in practice.

To do so, let $\Psi_{\texttt{REPL}}$ and $\Psi_{\texttt{NO-REPL}}$ 
denote weighted sampling with and without replacement, respectively, both of which are implemented in most standard statistical software packages. 
Though $\Psi_{\texttt{DRPL}}$ and $\Psi_{\texttt{NO-REPL}}$ both sample distinct sequences $(i_1, \ldots, i_m)$, they are not equal, i.e., they distribute the weights differently between the sequences (see \cref{sec:sampling-DRPL}).
We can sample from $\Psi_{\texttt{DRPL}}$ by sampling from $\Psi_{\texttt{REPL}}$ and rejecting the sample until the indices $(i_1, \ldots, i_m)$ are distinct, see \cref{subsec:acceptance-rejection-repl}. In \cref{prop:repl-becomes-dist} 
we prove that under suitable assumptions, such as $m=o(\sqrt{n})$, the probability of drawing a distinct sequence already in a single draw approaches $1$, when $n\to\infty$. 

In some cases (though these typically only occur when \cref{assump:m-rate-n} or \cref{assump:finite-second-moment} are violated, and our asymptotic guarantees do not apply), 
the above rejection sampling from $\Psi_{\texttt{REPL}}$ may take a long time to accept a sample. For these cases, we propose to use an (exact) rejection sampler based on $\Psi_{\texttt{NO-REPL}}$, which will typically be faster (since it has the same support as $\Psi_{\texttt{DRPL}}$).
We provide all details in \cref{sec:no-repl-sampling}.

If neither of the two exact sampling schemes for $\Psi_{\texttt{DRPL}}$ is
computationally feasible, we provide an approximate sampling method  that applies a Gibbs sampler to a sample from $\Psi_{\texttt{NO-REPL}}$; we refer to this scheme as $\Psi_{\texttt{DRPL-GIBBS}}$. Finally, one can simply approximate $\Psi_{\texttt{DRPL}}$ by a sample from $\Psi_{\texttt{NO-REPL}}$ -- this is computationally faster, and 
leads to similar results
in many cases. 
The details are provided in \cref{sec:no-repl-gibbs}. In practice, our implementation first attempts to sample from $\Psi_{\texttt{NO-REPL}}$ by (exact) rejection sampling, and if the number of rejections exceed some threshold, sampling without replacement is used instead.

\Cref{prop:repl-becomes-dist} (mentioned above)
has another implication. 
We prove that we can obtain the same level guarantee, when using $\Psi_{\texttt{REPL}}$ instead of $\Psi_{\texttt{DRPL}}$ (see \cref{cor:asymptotic-level-SIR-REPL} in \cref{sec:sampling-DRPL}). 
This result, however, requires an assumption that is stronger than \cref{assump:finite-second-moment}. Intuitively, stronger assumptions are required for $\Psi_{\texttt{REPL}}$ because sampling with replacement is much more prone to experience large variance due to 
observations with huge weights.

\subsection{Extensions}\label{subsec:extensions}
In this section, we discuss a number of extensions of the methodology and theory presented in the preceding sections. 

\subsubsection{\texorpdfstring{\new{Heuristic}}{} data driven choice of \texorpdfstring{$m$}{}}\label{subsec:test-resampling-worked}

\new{
Resampling distinct sequences
requires that we choose a resampling size $m$ that is
smaller than the original sample size $n$.}
If $m$ is too large when sampling distinct sequences, it can happen that 
eventually there are no more points left that are likely under the target distribution. 
Consequently, the resampling procedure disproportionally often has to sample points that are very unlikely in the target distribution. This leads to the target sample being a poor approximation of the target distribution. 
Our theoretical results show that choosing a resampling size of order $o(\sqrt{n})$ 
avoids this problem, see \cref{thm:asymptotic-level-SIR}.

However, this result is asymptotic and does not immediately translate into finite sample statements. Furthermore, in many cases also the requirement $m = o(\sqrt{n})$ is too strict, and asymptotic level can also be obtained by setting $m = o(n^a)$ for some $a \in (1/2, 1]$ (with the most extreme case being $P^*=Q^*$, where $a = 1$ can be applied). 
Since a larger $m$ typically results in increased power of the hypothesis test, we want to choose $m$ as large as possible while maintaining that the target sample still approximates the target distribution. 

Consider the case where $\tau$ corresponds to changing $q^*(x^j|x^C)$ to $p^*(x^j|x^B)$. We can then test the validity of the resampling by testing whether the target sample matches the theoretical conditional. 
Specifically, for a fixed $m$, we can verify whether the conditional $X^j\mid X^B$ in the resampled data 
$\Psi_{\texttt{DRPL}}^{r,m}(\bX_n)$
is close to the target conditional $p^*(x^j|x^B)$ by a goodness-of-fit test $\kappa(\Psi_{\texttt{DRPL}}^{r,m}(\bX_n)) \in \{0, 1\}$.
If $m$ is chosen too large, the resampling is likely to include many points with small weights, corresponding to small likelihoods 
$p^*(x^j|X^B)$, 
which will cause the goodness-of-fit test to reject the hypothesis that the target sample has the conditional 
$p^*(x^j|X^B)$.

We can use this to construct a data-driven approach to selecting $m$: For an increasing sequence of $m$'s, perform the goodness-of-fit test for several resamples $\kappa(\Psi_{\texttt{DRPL}}^{r,m}(\bX_n)_1)$, $\ldots$, $\kappa(\Psi_{\texttt{DRPL}}^{r,m}(\bX_n)_K)$. If $\tfrac{1}{K}\sum_{k=1}^K \kappa(\Psi_{\texttt{DRPL}}^{r,m}(\bX_n)_k)$ is smaller than some predefined cutoff \new{$\texttt{qt}$},
we accept $m$ as a valid target sample size\footnote{\new{Concretely, 
since under the null hypothesis, $\kappa(\Psi_{\texttt{DRPL}}^{r,m}(\bX_n)_k)$ is uniform,
we chose $\texttt{qt}$ to be the $\alpha_c$-quantile of the mean of $K$ uniform distributions for some $\alpha_c\in(0,1)$, see \cref{sec:targetheuris}. 
Doing so, ensures that for a fixed $m$, under the null hypothesis of the resample having the intended conditional, the test has level $\alpha_c$.}}. We then use the largest accepted $m$ as the resampling size in the actual test for the hypothesis of interest. We summarize the procedure for finding $m$ in \cref{alg:tuning-m} in~\cref{sec:targetheuris}.

To avoid potential dependencies between the tuning of $m$ and the hypothesis test, we could use sample splitting. In practice, however, we use the entire sample, since the dependence between the tuning step and the final test in our empirical analysis appears to be sufficiently low such that the level properties of the final tests were preserved, see e.g., the experiment in \cref{sec:exp-assumptions-a2-a3}.

If the target conditional $p^*(x^j|X^B)$ is a linear Gaussian conditional density (i.e., $X^j\mid X^B \sim \mathcal{N}(\beta^\top X^B, \sigma^2)$ for some parameters $\beta, \sigma$ under $P^*$) the goodness-of-fit test can be performed by using a linear regression and testing the hypothesis that the regression slope in the resample is $\beta$. 
For more complex conditional densities, \new{one should prefer a test that has (pointwise asymptotic) power against a wide range of alternatives. Here,}
we propose to use the kernel conditional-goodness-of-fit test by \citet{jitkrittum2020testing} to test that the resampled data $\Psi_{\texttt{DRPL}}^{r,m}(\bX_n)$ has the desired conditional. 

\new{
\subsubsection{Finite-sample level guarantees}
\label{subseq:finite-sample}
In addition to the asymptotic results presented in \cref{subsec:theoretical-guarantees}, 
we now prove that the hypothesis test $\psi_n^r$ inherits finite-sample level if the test $\varphi$ in the target domain satisfies finite-sample guarantees.

\begin{theorem}[Finite level -- known weights]\label{thm:finite-level-SIR}
    Consider a null hypothesis $H_0 \subseteq \mathcal{P}$ in the target domain. Let $\tau: \cQ \rightarrow \cP$ be a distributional shift for which a known map $r:\mathcal{X}\rightarrow[0,\infty)$ exists, satisfying $\tau(q)(x) = r(x)q(x)$, see \cref{eq:tauform}.
    Consider an arbitrary $Q \in \cQ$ and $P = \tau(Q)$. 
    Let $m$ be a resampling size and let $\varphi_m$ be a test for $H_0$ and define $\alpha_\varphi \coloneqq \P_{P}(\varphi_m(\bZ_m) = 1)$.
    Also let $\psi^r_n$ be the DRPL-based resampling test defined by $\psi^r_n(\bX_n) \coloneqq \varphi_m(\Psi_{\texttt{DRPL}}^{r,m}(\bX_n))$, see \cref{alg:resampling-and-testing}.
    Then, if $Q$ satisfies \cref{assump:finite-second-moment}, it holds that
    \begin{equation}\label{eq:finite-sample-level-bound}
        \P_{Q} (\psi^r_{n}(\mathbf{X}_n)=1) \leq \min_{\delta\in (0,1)} \left(\frac{\alpha_\varphi}{1-\delta} + \frac{V(n,m)}{V(n,m) + \delta^2}\right),
    \end{equation}
    where $V(n,m) = \binom{n}{m}^{-1} \sum_{\ell=1}^m \binom{m}{l}\binom{n-m}{m-\ell}(\E_Q[r(X_1)^2]^\ell - 1)$.
\end{theorem}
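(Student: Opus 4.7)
The plan is to work with the conditional rejection probability given $\mathbf{X}_n$ and reduce to a change-of-measure identity plus a concentration bound on the SIR normalizer. Conditionally on $\mathbf{X}_n$, the DRPL resample induces the weights from \cref{eq:SIR-weights}, so I would write
\begin{equation*}
    \P_Q(\psi_n^r(\mathbf{X}_n) = 1) = \E_Q\!\left[\frac{N_n}{Z_n}\right],\qquad N_n \coloneqq \sum_{(j_1,\ldots,j_m)\,\mathrm{distinct}} \varphi_m(X_{j_1},\ldots,X_{j_m}) \prod_{\ell=1}^m r(X_{j_\ell}),
\end{equation*}
with normalizer $Z_n$ obtained by replacing $\varphi_m$ with $1$. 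For any distinct tuple, $X_{j_1},\ldots,X_{j_m}$ are i.i.d.\ from $Q$, and the identity $r(x)q(x) = p(x)$ together with $\E_Q[r(X)] = 1$ gives $\E_Q[\varphi_m(X_{j_1},\ldots,X_{j_m})\prod_\ell r(X_{j_\ell})] = \E_P[\varphi_m(\mathbf{Z}_m)] = \alpha_\varphi$. Summing over the $\bar Z \coloneqq n(n-1)\cdots(n-m+1)$ distinct tuples yields $\E_Q[N_n] = \alpha_\varphi \bar Z$ and $\E_Q[Z_n] = \bar Z$.

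Next, for any $\delta \in (0,1)$, I would split on the event $A_\delta \coloneqq \{Z_n \geq (1-\delta)\bar Z\}$. On $A_\delta$, the bound $N_n/Z_n \leq N_n/((1-\delta)\bar Z)$ combined with $\E_Q[N_n] = \alpha_\varphi \bar Z$ gives a contribution of at most $\alpha_\varphi/(1-\delta)$. On $A_\delta^c$, the fact that $\varphi_m \in \{0,1\}$ forces $N_n \leq Z_n$, so the integrand is at most $1$ and the contribution is bounded by $\P_Q(Z_n < (1-\delta)\bar Z)$. Here I would use the one-sided Cantelli inequality rather than ordinary Chebyshev, since Cantelli delivers the denominator $\mathrm{Var}_Q(Z_n) + \delta^2 \bar Z^2$ that matches the target bound.

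The main obstacle is to verify that $\mathrm{Var}_Q(Z_n)/\bar Z^2 = V(n,m)$. To compute the variance, I would rewrite $Z_n = m!\sum_{S:|S|=m} \prod_{i\in S} r(X_i)$ and expand the resulting double sum over ordered pairs of $m$-subsets $(S,T)$. By independence and $\E_Q[r(X)] = 1$, the covariance contribution of each pair reduces to $\E_Q[r(X)^2]^{|S\cap T|} - 1$. Partitioning pairs by $\ell \coloneqq |S\cap T|$ and counting $\binom{n}{m}\binom{m}{\ell}\binom{n-m}{m-\ell}$ such pairs (the $\ell = 0$ term vanishes), then dividing by $\bar Z^2 = (m!)^2\binom{n}{m}^2$, matches $V(n,m)$ exactly.

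Putting the two contributions together gives $\P_Q(\psi_n^r(\mathbf{X}_n) = 1) \leq \alpha_\varphi/(1-\delta) + V(n,m)/(V(n,m)+\delta^2)$ for every $\delta \in (0,1)$, and taking the minimum yields \cref{eq:finite-sample-level-bound}. The combinatorial bookkeeping in the variance step is the only delicate piece; the rest is a two-event Markov/Cantelli decomposition built on top of the change-of-measure identity for $\E_Q[N_n]$, and \cref{assump:finite-second-moment} enters only to ensure $\mathrm{Var}_Q(Z_n)$ is finite so Cantelli applies.
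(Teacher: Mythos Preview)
Your proposal is correct and follows essentially the same route as the paper: both express the rejection probability as $\E_Q[N_n/Z_n]$, split on the event $\{Z_n \geq (1-\delta)\E_Q[Z_n]\}$, bound the good event by the change-of-measure identity $\E_Q[N_n]=\alpha_\varphi\,\E_Q[Z_n]$, bound the bad event by Cantelli's inequality, and then identify $\operatorname{Var}_Q(Z_n)/(\E_Q[Z_n])^2$ with $V(n,m)$. The only cosmetic difference is that the paper normalizes by $n!/(n-m)!$ up front and invokes the U-statistic variance formula (with $\zeta_v=\E_Q[r(X)^2]^v-1$) rather than your direct subset-pair count, but these are the same computation.
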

Thus if $\E_Q[r(X_1)^2]$ is known, one can evaluate the finite-sample level of the DRPL-based resampling test for any choice of $m$.
We show in \cref{app:evaluation-of-variance} that the term $V(n,m)$ can be computed efficiently and such that numerical under- or overflows is avoided, even if $n$ and $m$ are so large that evaluating the individual terms $\binom{n}{m}$ and $\binom{m}{l}\binom{n-m}{m-\ell}(\E_Q[r(X_1)^2]^\ell - 1)$ may cause under- or overflows.
Given $V(n,m)$, the minimization problem on the right hand side can easily be implemented in numerical optimizers or solved explicitly for the minimal $\delta$.\footnote{Taking the derivative with respect to $\delta$ and equating it to $0$, the resulting equation can be rewritten to a polynomial equation of degree $4$. One can then evaluate the right hand side of \cref{eq:finite-sample-level-bound} at the (at most $4$) roots and additionally the boundary point $\delta=0$,
and use the one that yields the smallest bound.}

If $\tau$ is the identity mapping, i.e. $Q^* = P^*$, then $\E_Q[r(X_1)^2]=1$, and for all $m$, $V(n,m)=0$.
As one would expect, in that case \cref{thm:finite-level-SIR} states that for any $m$, $\P_{Q^*}(\psi^r_{n}(\mathbf{X}_n)=1) \leq \alpha_\varphi$, that is, the probability
of rejecting 
when applying $\varphi$ to the resampled data is upper bounded by 
the probability of rejecting when applying $\varphi$ directly to target data. 

If $P^* \neq Q^*$ then $V(n,m)>0$, and for any $m$ the right hand side of \cref{eq:finite-sample-level-bound}
exceeds $\alpha_\varphi$.
To control the level of the resampling test, say at a rate $\alpha_\psi$, one can set the resample size $m$ small enough
such that the right hand side of \cref{eq:finite-sample-level-bound} is smaller than $\alpha_\psi$. 
We propose to use the largest $m$ such that the right hand side of \cref{eq:finite-sample-level-bound} is bounded by $\alpha_\psi$. 

In practice, we find that in many settings, the inequality \cref{eq:finite-sample-level-bound} is not strict:  the largest $m$ such that the right hand side \cref{eq:finite-sample-level-bound} is bounded by $\alpha_\psi$ is not close to being the largest $m$ such that the left hand side is bounded by $\alpha_\psi$. Hence, for practical purposes, the scheme for choosing $m$ proposed in \cref{subsec:test-resampling-worked} often returns larger values $m$ while retaining level $\alpha_\psi$ under the null hypothesis; we explore this further in \cref{sec:expe}.
}

\subsubsection{Uniform level}\label{subsec:uniform-level}
The asymptotic level guarantees implied by \cref{thm:asymptotic-level-SIR} are pointwise, meaning we are not guaranteed the same convergence rate for all distributions $Q \in \tau^{-1}(H_0)$. However, as the following theorem shows, if a uniform bound on the weights exists, i.e., $\sup_{Q\in\tau^{-1}(H_0)} \E_Q[r(X_i)^2] < \infty$, and the test $\varphi$ has uniform asymptotic level, the overall procedure can be 
shown to hold uniform asymptotic level.
\begin{theorem}[Uniform asymptotic level]\label{thm:uniform-level}
    Assume the same setup and assumptions as in \cref{thm:asymptotic-level-SIR}. If additionally $\sup_{Q\in\tau^{-1}(H_0)} \E_Q[r(X_i)^2] < \infty$ and $\limsup_{k\rightarrow\infty}\sup_{P\in H_0}\P_P(\varphi_k(\bZ_k) = 1) \leq \alpha_\varphi$, then 
    \begin{align*}
        \limsup_{n\rightarrow\infty}\sup_{Q\in\tau^{-1}(H_0)} \P_Q(\psi_n^r(\bX_n) = 1) \leq \alpha_\varphi,
    \end{align*}
    i.e., $\psi_n^r$ satisfies uniform asymptotic level $\alpha_\varphi$ for the hypothesis $\tau(Q^*)\in H_0$.
\end{theorem}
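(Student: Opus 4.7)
The plan is to leverage the proof structure of \cref{thm:asymptotic-level-SIR}, which I expect proceeds by coupling the DRPL resample to an i.i.d.\ sample $\bZ_m$ from the target distribution $P = \tau(Q)$. Concretely, I would first verify that the pointwise argument yields a bound of the form
\begin{equation*}
    \bigl|\P_Q(\psi_n^r(\bX_n)=1) - \P_{\tau(Q)}(\varphi_m(\bZ_m)=1)\bigr| \leq C\bigl(m,n,\E_Q[r(X_i)^2]\bigr),
\end{equation*}
where $C(m,n,v)\to 0$ when $m=o(\sqrt n)$ for every fixed $v<\infty$, and where $C$ is monotone non-decreasing in $v$. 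The key underlying fact, which is already needed in the pointwise proof, is a total-variation (or coupling) bound between the DRPL resample and $m$ i.i.d.\ draws from $\tau(Q)$ that, via \cref{lemma:sum-distinct-weights}, can be controlled by the variance of the normalizing constant of the weights in \cref{eq:SIR-weights}. That variance depends on $Q$ only through $\E_Q[r(X_i)^2]$ and the ratio $m/n$.

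Granted such a bound, the uniform statement follows in a few lines. Writing $M \coloneqq \sup_{Q\in\tau^{-1}(H_0)} \E_Q[r(X_i)^2]$, which is finite by assumption, monotonicity of $C$ in its third argument yields
\begin{equation*}
    \sup_{Q\in\tau^{-1}(H_0)} \bigl|\P_Q(\psi_n^r=1) - \P_{\tau(Q)}(\varphi_m(\bZ_m)=1)\bigr| \leq C(m,n,M) \xrightarrow[n\to\infty]{} 0.
\end{equation*}
For every $Q\in\tau^{-1}(H_0)$ we have $\tau(Q)\in H_0$, so $\P_{\tau(Q)}(\varphi_m(\bZ_m)=1) \leq \sup_{P\in H_0}\P_P(\varphi_m(\bZ_m)=1)$. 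Combining this with the previous display and taking $\limsup_{n\to\infty}$, using $m=m(n)\to\infty$ together with the assumed uniform asymptotic level of $\varphi$, gives
\begin{equation*}
    \limsup_{n\to\infty}\sup_{Q\in\tau^{-1}(H_0)} \P_Q(\psi_n^r=1) \leq \limsup_{k\to\infty}\sup_{P\in H_0}\P_P(\varphi_k(\bZ_k)=1) \leq \alpha_\varphi,
\end{equation*}
which is the desired conclusion.

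The main obstacle is the first step: extracting uniformity from the pointwise proof. I need the coupling/TV bound between the DRPL resample and an i.i.d.\ sample from $\tau(Q)$ to be expressed only through quantities that are uniformly controlled on $\tau^{-1}(H_0)$. If, as I expect, the pointwise proof proceeds via a Chebyshev-type argument on the normalizer $\sum_i r(X_i)$ combined with a birthday-style bound on the probability of the multinomial draws producing repeats (which together explain why $m=o(\sqrt n)$ and a finite second moment suffice), then the relevant quantities scale explicitly in $\E_Q[r(X_i)^2]$ and $m/n$, and the argument transfers verbatim. If instead the original proof hides $Q$-dependence inside an implicit dominated-convergence or continuous-mapping step, those steps will have to be sharpened into quantitative bounds in terms of $\E_Q[r(X_i)^2]$; once done, the conclusion above follows unchanged.
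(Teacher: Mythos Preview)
Your high-level strategy is right, and the crucial observation—that the only $Q$-dependence in the error term enters through $\E_Q[r(X_i)^2]$, so a uniform bound on this quantity immediately yields uniformity—is exactly what the paper uses. However, the mechanism you describe (a total-variation/coupling bound between the DRPL resample and an i.i.d.\ sample from $\tau(Q)$) is not how the paper proceeds, and is more than is needed.

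The paper works directly with the decomposition from the proof of \cref{thm:asymptotic-level-SIR}: write $\P_Q(\psi_n^r=1)=\E_Q[c(n,m)/d(n,m)]$, where $c(n,m)$ and $d(n,m)$ are the normalized numerator and denominator sums of weights, and split on the event $A_\delta=\{|d(n,m)-1|\le\delta\}$. Since only an upper bound is required here, one gets
\[
\E_Q\!\left[\frac{c(n,m)}{d(n,m)}\right]\le \frac{1}{1-\delta}\,\P_{\tau(Q)}(\varphi_m=1)+\P_Q(A_\delta^c),
\]
using $\E_Q[c(n,m)]=\P_{\tau(Q)}(\varphi_m=1)$ from \cref{lemma:sum-distinct-weights}(a). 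Chebyshev gives $\P_Q(A_\delta^c)\le \VAR_Q(d(n,m))/\delta^2$, and the U-statistic variance bound in the proof of \cref{lemma:sum-distinct-weights} shows $\VAR_Q(d(n,m))$ is bounded by an expression depending on $Q$ only through $\E_Q[r(X_i)^2]$ (and on $m,n$). With the uniform second-moment bound this goes to zero uniformly in $Q\in\tau^{-1}(H_0)$; taking $\sup_Q$, then $\limsup_n$, then $\delta\downarrow 0$ finishes. No coupling is constructed, and no birthday-type repeat bound is needed for DRPL (that ingredient belongs to the REPL analysis, not here). Your two-sided bound $|\P_Q(\psi_n^r=1)-\P_{\tau(Q)}(\varphi_m=1)|\le C(m,n,\E_Q[r^2])$ could be extracted from the same inequalities, but it is unnecessary for the one-sided level statement.
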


\subsubsection{Hypothesis testing in the observed domain}\label{sec:testing_in_obs_res}
In Section~\ref{sec:testingobserved}, we argue that one can use our  framework for testing under distributional shifts to test a hypothesis in the observed domain, too. Indeed, the results in \cref{subsec:theoretical-guarantees} directly imply the following corollary (see Corollary~\ref{cor:hypothesis-obs-space} in~\cref{sec:hypothesis-obs-space} for a more detailed version).

\begin{corollary}[Pointwise level in the observed domain] \label{cor:hypothesis-obs-space-short}
Consider hypotheses $H_0^{\mathcal{Q}} \subseteq \mathcal{Q}$ and $H_0^{\mathcal{P}} \subseteq \mathcal{P}$ and let $\tau:\cQ \rightarrow \cP$ be a distributional shift such that $\tau(H_0^{\mathcal{Q}}) \subseteq H_0^{\mathcal{P}}$. Under the same assumptions as in Theorem~\ref{thm:asymptotic-level-SIR}, if $\varphi$ is a test that satisfies pointwise asymptotic level in the target domain, $\psi^r_n$ satisfies pointwise asymptotic level for the hypothesis $Q^* \in H_0^{\mathcal{Q}}$.
\end{corollary}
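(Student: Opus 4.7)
The plan is to derive \cref{cor:hypothesis-obs-space-short} as a direct consequence of \cref{thm:asymptotic-level-SIR} by composing the level guarantee in the target domain with the containment $\tau(H_0^{\mathcal{Q}}) \subseteq H_0^{\mathcal{P}}$. No new probabilistic argument is needed, only bookkeeping between the two hypotheses.

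First, I would fix an arbitrary $Q \in H_0^{\mathcal{Q}}$ and verify that the preconditions of \cref{thm:asymptotic-level-SIR} apply to this $Q$: by assumption, \cref{assump:m-rate-n,assump:finite-second-moment} are in force (either because they have been assumed pointwise for every $Q \in H_0^{\mathcal{Q}}$, or in the stronger uniform form), and by hypothesis $\tau: \cQ \to \cP$ admits the shift factor $r$ as in \cref{eq:tauform}. Applying \cref{thm:asymptotic-level-SIR} to this $Q$ and $P := \tau(Q)$ yields
\begin{equation*}
    \limsup_{n\to\infty}\P_Q(\psi^r_n(\bX_n) = 1) \;=\; \limsup_{k\to\infty} \P_{\tau(Q)}(\varphi_k(\bZ_k) = 1).
\end{equation*}

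Next I would invoke the containment $\tau(H_0^{\mathcal{Q}}) \subseteq H_0^{\mathcal{P}}$, which implies $\tau(Q) \in H_0^{\mathcal{P}}$. Since $\varphi$ has pointwise asymptotic level $\alpha$ for $H_0^{\mathcal{P}}$ in the sense of \cref{eq:pointwise-level-target-test}, we obtain
\begin{equation*}
    \limsup_{k\to\infty} \P_{\tau(Q)}(\varphi_k(\bZ_k) = 1) \;\leq\; \sup_{P \in H_0^{\mathcal{P}}}\limsup_{k\to\infty} \P_{P}(\varphi_k(\bZ_k) = 1) \;\leq\; \alpha.
\end{equation*}
Chaining the two displays gives $\limsup_{n\to\infty}\P_Q(\psi^r_n(\bX_n) = 1) \leq \alpha$ for every $Q \in H_0^{\mathcal{Q}}$, and taking the supremum over $Q \in H_0^{\mathcal{Q}}$ establishes pointwise asymptotic level for the hypothesis $Q^* \in H_0^{\mathcal{Q}}$, completing the proof.

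There is essentially no obstacle here, since \cref{thm:asymptotic-level-SIR} already does all the heavy lifting. The only subtle point worth flagging is that \cref{assump:finite-second-moment} must hold for every $Q \in H_0^{\mathcal{Q}}$ (not only those in $\tau^{-1}(H_0^{\mathcal{P}})$); this is presumably part of the ``same assumptions as in \cref{thm:asymptotic-level-SIR}'' clause, and in practice one often has the flexibility to choose the target density in a way that guarantees it, as discussed in \cref{sec:testingobserved}. The more detailed version of the corollary in \cref{sec:hypothesis-obs-space} can make this condition explicit and also carry over the matching power statement by applying the $\liminf$ variant of \cref{thm:asymptotic-level-SIR} to alternatives $Q \notin H_0^{\mathcal{Q}}$ with $\tau(Q) \notin H_0^{\mathcal{P}}$.
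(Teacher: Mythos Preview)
Your proof is correct and follows essentially the same approach as the paper. The paper's version phrases the key step slightly differently---it first notes $\tau(H_0^{\mathcal{Q}}) \subseteq H_0^{\mathcal{P}}$ implies $H_0^{\mathcal{Q}} \subseteq \tau^{-1}(H_0^{\mathcal{P}})$ and then bounds the supremum over $H_0^{\mathcal{Q}}$ by the supremum over $\tau^{-1}(H_0^{\mathcal{P}})$---but this is just a set-level repackaging of your pointwise argument, and both conclude by invoking \cref{thm:asymptotic-level-SIR}.
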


\new{
\subsection{An Alternative for Uniformly Bounded Weights}
\label{subsec:rejection-sampler}
In \cref{subsec:method}, we propose the `distinct replacement' resampling scheme and show in \cref{thm:asymptotic-level-SIR,thm:finite-level-SIR} that this has finite and asymptotic level. 
The procedure requires \cref{assump:finite-second-moment}, that is that the weights have finite second moment.

We now consider the stricter assumption that the weights are globally bounded. Although this assumption is not met for most distributions that are not compactly supported, this is satisfied for example by distributions on finite state spaces. We show that, under this assumption, one can use a rejection sampler with finite sample guarantees. 

Suppose that $\tau(q)(x) \propto r(x)q(x)$ and there exists a known $M\in(0,\infty)$ such that $\sup_x r(x) \leq M$.
Given a sample $\bX_n = (X_1, \ldots, X_n)$ of size $n$ from $Q^*$, we can use a rejection sampler that retains observations $X_i$ with probability $r(X_i)/M$ (and otherwise discards them) to
obtain a sample from $P^* = \tau(Q^*)$, and apply a hypothesis test $\varphi_m$ to the rejection sampled data; see \cref{alg:rejection-sampler}.
\begin{algorithm}[t]
    \caption{Testing a target hypothesis with known distributional shift and rejection sampling}
    \begin{algorithmic}[1]
    \Statex \textbf{Input}: Data $\bX_n$, hypothesis test $\varphi_m$, shift factor $r(x^A)$ and bound $M$.
    \For{$i = 1, 2, \ldots, n$}
        \State Sample $U_i $ uniform on $(0, 1)$
        \If{$U_i > \tfrac{r(X_i)}{M}$}
            \State Discard $X_i$
        \EndIf
    \EndFor
    \Return $\psi_n^r(\bX_n) \coloneqq \varphi_m(X_{i_1}, \ldots, X_{i_m})$
    \end{algorithmic}
    \label{alg:rejection-sampler}
\end{algorithm}

If $\varphi_m$ has level guarantees when applied to data $\bZ_k$ from $P^*$, we can test the hypothesis $\tau(Q^*) \in H_0$ with the same level guarantee, since the rejection sampled data $(X_{i_1}, \ldots, X_{i_m})$ are i.i.d.\ distributed with distribution $P^*$. We state this as a proposition.
\begin{proposition}[Finite level -- bounded weights]\label{prop:rejection-sampler}
    Consider a null hypothesis $H_0 \subseteq \mathcal{P}$ in the target domain. Let $\tau: \cQ \rightarrow \cP$ be a distributional shift for which a known map $r:\mathcal{X}\rightarrow[0,\infty)$ exists, satisfying for all $x$: $\tau(q)(x) \propto r(x)q(x)$ and $r(x) \leq M$.
    Consider an arbitrary $Q \in \cQ$ and $P = \tau(Q)$. 
    Let $\varphi_k$ be a sequence of tests for $H_0$ and assume there exist $\alpha_\varphi\in (0,1)$ such that for each $k\in\N$: $\alpha_\varphi=\sup_{k}\P_{P}(\varphi_k(\bZ_k) = 1)$.
    Let 
    $\psi_n^r(\bX_n)$ be the rejection-sampling test defined in \cref{alg:rejection-sampler}.
    Then it holds that
    \begin{equation*}
        \P_{Q} (\psi^r_{n}(\mathbf{X}_n)=1)=\alpha_\varphi.
    \end{equation*}
\end{proposition}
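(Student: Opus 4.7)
The plan is to leverage the correctness of classical rejection sampling. Since $\tau(q)(x) \propto r(x) q(x)$, we can write $r(x) = Z \cdot p(x)/q(x)$ where $Z \coloneqq \E_Q[r(X)]$ is the normalizing constant; note $Z \leq M < \infty$. The acceptance probability $r(X_i)/M$ then equals $p(X_i) / \bigl((M/Z) \cdot q(X_i)\bigr)$, which is the textbook acceptance probability for rejection sampling from $P$ using $Q$ as proposal with envelope constant $M/Z$; the envelope condition $p(x)/q(x) \leq M/Z$ is precisely $r(x) \leq M$.

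Next, I would analyze the joint law of the accepted subsample. Defining the Bernoulli variables $B_i \coloneqq \mathbbm{1}\{U_i \leq r(X_i)/M\}$, the pairs $(X_i, B_i)$ are i.i.d.\ since $(X_i, U_i)$ are i.i.d. A one-line computation shows the density of $X_i$ given $B_i = 1$ is proportional to $q(x) \cdot r(x) \propto p(x)$, hence $X_i \mid B_i = 1 \sim P$. For any fixed index set $A \subseteq \{1,\ldots,n\}$, on the event $\{B_i = 1 \text{ for } i \in A,\, B_i = 0 \text{ for } i \notin A\}$, independence across $i$ gives that $(X_i)_{i \in A}$ is i.i.d.\ $\sim P$. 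Consequently, conditional on $M_n \coloneqq \sum_i B_i = m$, the retained sample $(X_{i_1}, \ldots, X_{i_m})$ has the same distribution as a fresh i.i.d.\ sample $\bZ_m$ from $P$.

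The conclusion follows by the tower property:
\begin{equation*}
    \P_Q(\psi_n^r(\bX_n) = 1) = \sum_{m=0}^n \P_Q(M_n = m)\, \P_P(\varphi_m(\bZ_m) = 1) = \sum_{m=0}^n \P_Q(M_n = m)\, \alpha_\varphi = \alpha_\varphi,
\end{equation*}
where the middle equality invokes the hypothesis that $\P_P(\varphi_k(\bZ_k) = 1) = \alpha_\varphi$ for every $k \in \N$.

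The main obstacle is purely bookkeeping: one must verify rigorously that the joint law of (accepted pattern, retained values) factors as a distribution over index patterns tensored with $P^{\otimes m}$ given the pattern, so that applying $\varphi_m$ to the retained sample is probabilistically indistinguishable from applying it to a fresh target sample. A minor subtlety concerns the boundary case $M_n = 0$: one either needs $\varphi_0$ to be defined with $\P_P(\varphi_0 = 1) = \alpha_\varphi$, or one replaces the equality in the conclusion with the inequality $\leq \alpha_\varphi$, which follows identically from the $\sup_k$-formulation of the assumption.
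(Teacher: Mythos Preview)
Your proof is correct and follows essentially the same approach as the paper: both verify that the retained observations are i.i.d.\ from $P$ via the standard rejection-sampling computation, then invoke the assumption that $\P_P(\varphi_k(\bZ_k)=1)=\alpha_\varphi$ for every $k$. Your version is in fact more explicit about the tower property over the random number of accepted points and about the boundary case $M_n=0$, which the paper's proof leaves implicit.
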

}

\section{Experiments} \label{sec:expe}
We present a series of simulation experiments that support the theoretical results developed in Section~\ref{subsec:theoretical-guarantees} 
and analyze the underlying assumptions. We also apply the proposed methodology to the problems described in Section~\ref{sec:applications}. 
A simulation  experiment for model selection under covariate shift (see Section~\ref{sec:modselundercovsh}) can be found in Appendix~\ref{sec:model_selection_experiment}. Unless noted otherwise, the experiments use the $\Psi_{\texttt{DRPL}}$ resampling scheme.
Code that reproduces all the experiments is available at \url{https://github.com/nikolajthams/testing-under-shifts}.

\subsection{Exploring assumptions \texorpdfstring{\cref{assump:m-rate-n,assump:finite-second-moment}}{}}
\label{sec:exp-assumptions-a2-a3}
We explore the impact of violating either \cref{assump:m-rate-n}, stating that $m = o(\sqrt{n})$, or \cref{assump:finite-second-moment}, stating that the weights must have finite second moment in the observational distribution. 
To do so, we apply the procedure discussed in \cref{sec:conditionaltesting} that reduces a conditional independence test 
$X \indep Y\,|\, Z$
in the observational domain to an unconditional independence test in the target domain. Specifically, we simulate 
$n = 10'000 $ i.i.d.\ observations from the linear Gaussian model with
\begin{align*}
    X \coloneqq \epsilon_X \qquad 
    Z \coloneqq X + 2\epsilon_Z \qquad 
    Y \coloneqq \theta X + Z + \epsilon_Y 
\end{align*}
for some $\theta\in\R$ and  $\epsilon_X, \epsilon_Z, \epsilon_Y\sim\mathcal{N}(0,1)$ inducing a distribution $Q^*$ over $(X, Y, Z)$. We assume that the conditional distribution $q^*(z | x)$ is known and replace it with an independent Gaussian distribution $\phi_{\sigma}(z)$ with mean zero and variance $\sigma^2$,
breaking the dependence between $X$ and $Z$ in the target distribution.

We then perform a test for independence of $X$ and $Y$ in the target distribution using a Pearson correlation test. 
We do this both for $\theta = 0.4$ (where $X \centernot\indep Y \,|\, Z$ and ideally we reject the hypothesis) and for $\theta = 0$ (where $X \indep Y \,|\, Z$ and ideally we accept the hypothesis).
\cref{fig:a2-a3} shows
the resulting rejection rates of the test, 
where we have repeated the procedure of simulating, resampling and testing (at level $\alpha=0.05$) $500$ (left) or $10'000$ ({right}) times.
In this experiment, the $\Psi_{\texttt{NO-REPL}}$ sampler is used, since the rejection samplers break as $m$ gets very large.
\begin{figure}[t]
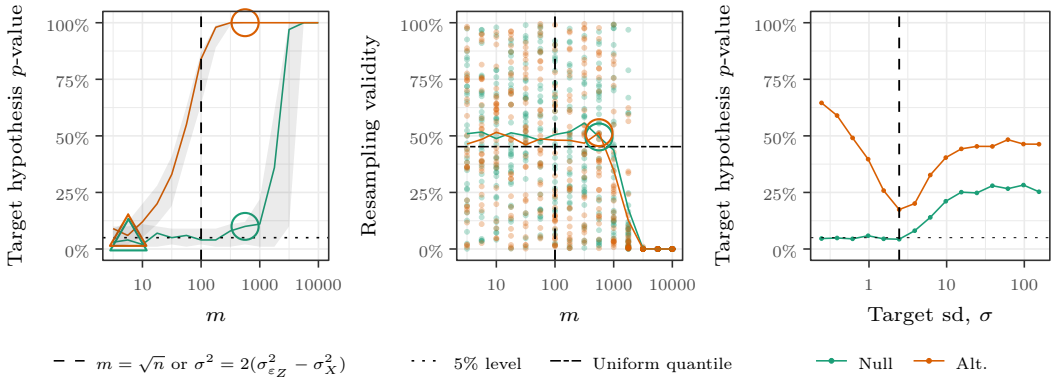

    \centering
    \scriptsize
% [inline block 0: 1 envs, 172694 chars -> data_tex | \begin{tikzpicture}[x=1pt,y=1pt] \definecolor{fillColor}{RGB}{255,255,255}...]

     \caption{
    {(All)} Rejection rates from the experiments in \cref{sec:exp-assumptions-a2-a3}. We replace the conditional distribution $q^*(z | x)$ with a marginal distribution $\phi_{\sigma}(z)$. 
    We perform a test for independence $X \indep Y$ in the target distribution, and plot the rejection rates. 
    ({Left}) Validation of \cref{assump:m-rate-n}. We run the resampling with $\sigma=1$ and different sample sizes $m$. Shaded regions show $95\%$ quantiles of the rejection rates of the hypothesis test. 
    The level seems to hold for $m \leq \sqrt{n}$, the latter  corresponding  to the asymptotic rate 
    \cref{assump:m-rate-n} (left of the dashed vertical line). Circles indicate the $m$ suggested by the middle plot \new{and triangles the $m$ suggested by the finite-sample method described in \cref{subseq:finite-sample} -- as expected, this is a conservative choice}. The target heuristic suggests an $m$ that indeed yields larger power. 
    {(Middle)} $P$-values (dots) and average $p$-values (lines) when applying \cref{alg:tuning-m} to choose $m$. We select $m$ (circled) from the first time, the goodness-of-fit test (horizontal dashed line) is rejected.
    {(Right)} Validation of \cref{assump:finite-second-moment}. Our procedure is run with different standard deviations~$\sigma$ in the Gaussian target distribution $p_{\sigma}(z)$ $\phi_{\sigma}(z)$. The dashed vertical line indicates the theoretical threshold of $\sqrt{6}$, see Section~\ref{sec:exp-assumptions-a2-a3}.}
    \label{fig:a2-a3}
\end{figure}

First, we test the impact of changing the resampling size $m$. 
For each simulated data set $\bX_n$ and each $m$, we resample $100$ target data sets $\Psi^{r,m}(\bX_n)$, and compute the rates of rejecting the hypothesis. The shaded areas in \cref{fig:a2-a3} (left) indicate $95\%$ of the resulting trajectories, and the solid lines show one example simulation.
Our theoretical results assume $m=o(\sqrt{n})$ and, indeed,
the hypothesis rejects around $5\%$ of simulations when $X\not\rightarrow Y$ ($\theta = 0$) for small $m$. As discussed in \cref{subsec:test-resampling-worked}, \cref{assump:m-rate-n} may in some cases be too strict, and we observe that the $5\%$ level is retained when $m$ moderately exceeds $\sqrt{n}$; but as $m$ grows larger, the level is eventually lost. 

For the same example simulation $\bX_n$ as in the left plot, we also apply the target heuristic for choosing $m$ as described in \cref{alg:tuning-m}, and plot the resulting $p$-values in \cref{fig:a2-a3} (middle).
Since the data are Gaussian, we can perform the goodness-of-fit test by a simple linear regression analysis.
For each $m$, we compute the average of the $p$-values (solid lines), and increase $m$ until the average $p$-value drops below the $5\%$ quantile of the distribution of $\texttt{mean}(U_1, \ldots, U_{\ell})$ where $U_1, \ldots, U_{\ell}$ are i.i.d.\ uniform random variables. The circles in the left and middle plot indicate the $m$ that is chosen by \cref{alg:resampling-and-testing} for this simulation. We observe in the left plot that the power of the test can be increased using the $m$ suggested by the middle plot, while the level approximately holds at 5\%. 

Second, we test the importance of \cref{assump:finite-second-moment}. 
For different $\sigma$ (and fixed $m = \sqrt{n})$, we compute the weights 
$r = \phi_\sigma(Z_i)/q(Z_i|X_i)$,
and in \cref{fig:a2-a3} (right) we plot the rejection rates of the test statistic when $X \rightarrow Y$ ($\theta =0.4$) and $X\not\rightarrow Y$ ($\theta=0$).
We show in \cref{sec:assumption-a3-gaussian} that \cref{assump:finite-second-moment} is satisfied if and only if $\sigma^2 < 2(\sigma_{\epsilon_{2\epsilon_Z}}^2 - \sigma_{X}^2)$, where $\sigma_{X}^2$ is the variance of $X$ and $\sigma_{\epsilon_{2\epsilon_Z}}^2$ is the variance of the noise term in the structural assignment for $Z$.
In this experiment, it follows that \cref{assump:finite-second-moment} holds if and only if $\sigma < \sqrt{6}$. 
We observe that when $\sigma$ exceeds the threshold of $\sqrt{6}$ (vertical dashed line), the level eventually deviates from the $5\%$ level. Furthermore, the power drops when $\sigma$ approaches the threshold.

\subsection{Off-policy testing} \label{sec:exp-off-policy}
We apply our method to perform statistical testing in an off-policy contextual bandit setting as discussed in Section~\ref{sec:offpolicytesting}.
We generate a data set $\bX_n$, $(n=30'000)$, consisting of observations $X_i = (Z_i, A_i, R_i)$ \new{with dimensions $d_Z = 3, d_A = d_R = 1$,} drawn according to the following data generating process:
\begin{align*}
    Z \coloneqq \epsilon_Z \qquad
    A\mid Z \sim q^*(A | Z) \qquad
    R \coloneqq \beta_A^{\top}Z + \epsilon_R,
\end{align*}
where $\epsilon_Z \sim\mathcal{N}(0, I_3)$ and $\epsilon_R \sim \mathcal{N}(0,1)$, $A$ takes values in the action space $\{a_1, \ldots, a_L\}$, \new{where $L=4$,} $q^*(a|z)$ denotes an initial policy that was used to generate the data $\bX_n$ and $\beta_{a_1}, \ldots, \beta_{a_L}$ are parameters of the reward function corresponding to each action. A uniform random policy was used as the initial 
policy, i.e., for all $a \in \{a_1, \dots a_L\}$ and $z \in \mathbb{R}^3$, 
$q^*(a | z) = 1/L$.

The goal is to test hypotheses about the reward $R$ if we were to deploy a target policy $p^*(a|z)$ instead of the policy $q^*(a|z)$.
Here, we consider three hypotheses, namely one-sample test of means, two-sample test of difference in means and two-sample test of difference in distributions. We set the false positive rate to 5\% and use $m=\sqrt{n}$ without the target heuristic in all three experiments. Rejection rates are computed from $500$ repeated simulations.

In the first experiment, we construct different target policies $p_{\delta}^*(a|z)$.
For $\delta = 0$, the target policy reduces to a uniform random policy and with
increasing $\delta$, the policy puts more mass on the optimal action (and thereby increasing the deviation from the initial policy). 
As $\delta \rightarrow \infty$, the target policy converges to an optimal policy.
More precisely, $p_{\delta}^*(a|z)$ is a linear softmax policy, i.e., $p_{\delta}^*(a|z) \propto \exp(\delta \beta_a^\top z)$. 
We then apply our method to non-parametrically
test whether $\E_{P^*_{\delta}}(R) \leq 0$ on the target distribution in which the policy $p_{\delta}^*(a|z)$ is used. For $\delta = 0$, the expected reward is zero (here, the null hypothesis is true) and for increasing $\delta$ the expected reward increases. 
To apply our methodology, we employ the Wilcoxon signed-rank test \citep{wilcoxon1992individual} in the target domain. 
Figure~\ref{fig:off-policy} (left) shows that for $\delta=0$, our method indeed holds the correct level and eventually starts to correctly reject for increasing $\delta$. For comparison, we include an estimate of the expected reward based on IPW.

In the second experiment, we use the same setup as in the first experiment, but now apply the two-sample testing method discussed in Section~\ref{sec:twosampletesting} to test whether $R_{|K=1} \, \overset{\mathcal{L}}{=} \, R_{|K=2}$, where $K=1$ indicates a sample under the initial 
policy and $K=2$ indicates a sample under a target policy. We consider two non-parametric tests, namely a kernel two-sample test based on the maximum mean discrepancy (MMD) \citep{gretton2012kernel} 
\new{(using the Gaussian kernel with the bandwidth chosen by the median heuristic \citep{Sriperumbudur2009})} and the Mann-Whitney (M-W) U test \citep{mann1947test}.
Here, for $\delta = 0$, the two policies coincide and for $\delta > 0$, there is a difference in the expected reward.
As shown in Figure~\ref{fig:off-policy} (middle), both tests are able to detect the difference. The M-W U test has more power than the MMD test.

In a third experiment, we construct different target policies $p_{\delta'}^*(a|z)$ by varying their effect on the variance of the reward distribution, while keeping the mean unchanged. More specifically, $p_{\delta'}^*(a|z)$ is a weighted random policy, i.e., $p_{\delta'}^*(a|z) \propto
\delta' \mbox{ if } a = a_1$ and $\propto 1 \mbox{ otherwise}$. This target policy yields the same expected reward as the initial policy (a uniform random policy), but yields a different variance of the reward. When $\delta'=1$, the target policy is the same as the initial policy, whereas the variance of the reward becomes smaller when $\delta'$ increases (in Figure~\ref{fig:off-policy} (right), $\delta'$ is rescaled to 0–1 range). We then apply the same two-sample testing methods used in the second experiment to test whether $R_{|K=1} \, \overset{\mathcal{L}}{=} \, R_{|K=2}$. 
This difference is not picked up by the M-W U test and this time, the MMD test has more power, see Figure~\ref{fig:off-policy} (right). 
\begin{figure}[t]
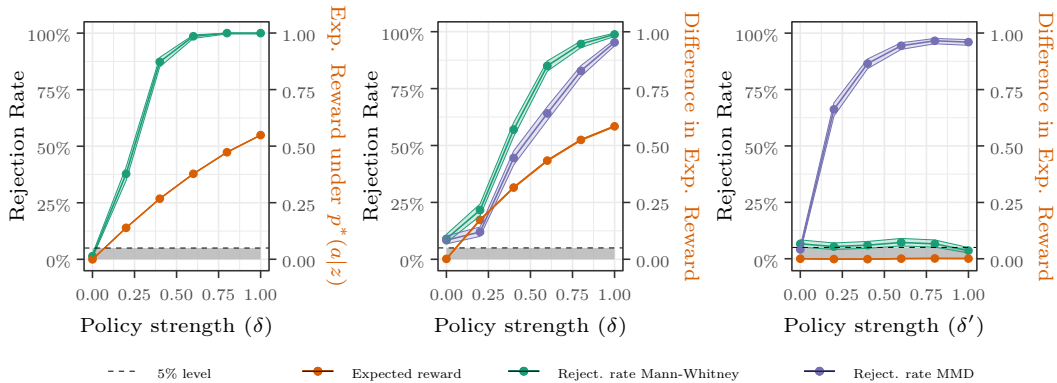

    \centering
    \scriptsize
% [inline block 1: 1 envs, 41076 chars -> data_tex | \begin{tikzpicture}[x=1pt,y=1pt] \definecolor{fillColor}{RGB}{255,255,255}...]

          \label{fig:off-policy1}
    \caption{
Off-policy statistical testing as described in \cref{sec:exp-off-policy}. ({Left}) One-sample test for testing whether the mean under $p_{\delta}^*(a|z)$ is less than or equal to $0$, see Section~\ref{sec:offpolicytesting}. {(Middle, Right)} Two-sample tests for testing whether the reward under $p^*_{\delta}(a|z)$ and $p^*_{\delta'}(a|z)$, respectively, has a different distribution than the reward under the initial policy, see  Section~\ref{sec:twosampletesting}.
In all cases, the null hypothesis is true for $\delta = 0$.
The target policies affect the mean of the reward in the left and middle plot, whereas they affect its variance in the right plot.
The framework can be combined with non-parametric tests and thereby allows for detecting complex differences in the reward distribution when comparing two policies.}
\label{fig:off-policy}
\end{figure}

\subsection{Testing a conditional independence with a complex conditional}\label{exp:cond-ind-test}
In the setting of conditional independence testing, we now compare our method -- when turning the problem into a test for unconditional independence as discussed in \cref{sec:conditionaltesting} -- to existing conditional independence tests.
We sample $n = 150$ observations from the following structural causal model
\begin{align*}
    X \coloneqq \texttt{GaussianMixture}(-2, 2) \quad
    Z \coloneqq -X^2 + \epsilon_Z\quad
    Y \coloneqq \sin(Z) + \theta X^\tau + \epsilon_Y, 
\end{align*}
inducing a distribution $Q^*$,
where \texttt{GaussianMixture}(-2, 2) is an even mixture (i.e., $p = 0.5$) of two Gaussian distributions with means $\mu_1=-2$, $\mu_2=2$ and unit variances,
$\epsilon_Z, \epsilon_Y$ are independent $\mathcal{N}(0, 4)$-variables and $\theta \in [0, 3/2]$, $\tau \in \{1, 2\}$.
\begin{figure}[t]
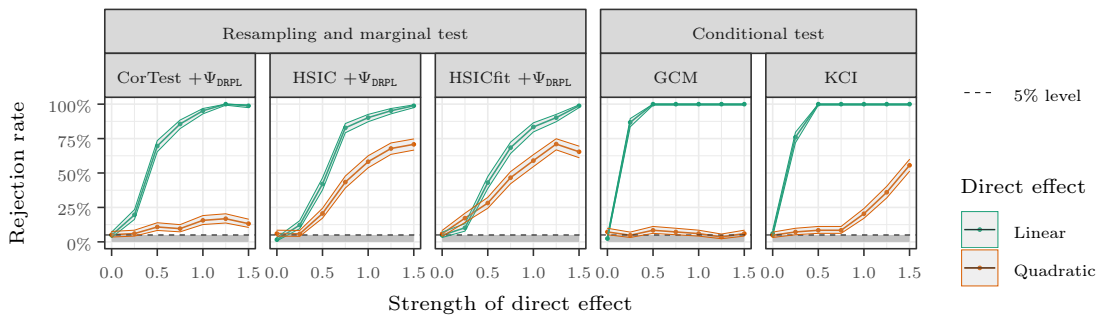

    \centering
    \scriptsize
% [inline block 2: 1 envs, 52689 chars -> data_tex | \begin{tikzpicture}[x=1pt,y=1pt] \definecolor{fillColor}{RGB}{255,255,255}...]

     \caption{Rejection rates for conditional independence tests $X \indep Y \mid Z$ in the setting from \cref{exp:cond-ind-test}. The direct effect of $X$ on $Y$ is $\theta X^\tau$, where the exponent $\tau$ is either $1$ (green) or $2$ (orange), and the $\theta$ is shown on the $x$-axis. 
    The three left panels show our method of resampling  and testing marginal independence. 
    Here, we  combine our approach with a Pearson correlation test or HSIC. HSICfit indicates that we estimate the conditional $q^*(z|x)$ from data. 
    We compare to two conditional independence tests, GCM  and KCI. As expected, GCM cannot detect the nonlinear dependence, whereas the resampling approach can be combined with any independence test. Despite its simplicity, it is able to quickly reject the alternatives.  
    }
    \label{fig:experiment-cit}
\end{figure}
Considering the conditional $q^*(z|x)$
to be known, we apply our methodology 
for testing conditional independence $X \indep Y \,|\, Z$ with a $5\%$ level and using the target heuristic in \cref{alg:tuning-m}.
To do so, we replace $q^*(z|x)$ by a marginal density $\phi(z)$, which is Gaussian with mean and variance set to the empirical versions under $Q^*$. 
In the target distribution, we test for independence of $X$ and $Y$ using either a simple correlation test (CorTest) 
or a kernel independence test (HSIC) \citep{gretton2007kernel}.
For comparison, we also conduct conditional independence tests in the observable distribution, using the generalized covariance measure (GCM) by \citet{shah2020hardness} and a kernel conditional independence (KCI) by \citet{zhang2012kernel} (both using standard versions, without hyperparameter tuning). 
Our resampling methods use knowledge of the conditional $q^*(z|x)$,
which may be seen as an unfair advantage over the conditional independence tests. 
Therefore, we also apply our method with estimated weights, called HSICfit, where the conditional $q^*(z|x)$ is estimated using a generalized additive model. 

We repeat the experiment $500$ times and plot the rejection rates in \cref{fig:experiment-cit} at various strengths $\theta$ of the edge $X \rightarrow Y$. All instances of our method have the correct level, see rejection rates for $\theta=0$. When $\tau = 1$, i.e., the direct effect $X \rightarrow Y$ is linear, the power of our method approaches $100\%$ as the causal effect increases, albeit the conditional independence tests obtain power more quickly.
When the direct effect is quadratic, CorTest and GCM have little or no power, as expected since they are based on correlations 
(we believe that the slight deviation from $5\%$ level in the left plot is due to very small sample sizes and the heuristic choice of $m$).
KCI and HSIC have comparable power in the quadratic case, with our approach even obtaining slightly more power than KCI.
Our approach has the additional benefit of low computational costs: 
Conditional independence testing is usually a more complicated procedure than marginal independence testing and, furthermore, the marginal test is applied to a data set of size $m$, which by \cref{assump:m-rate-n} is chosen much smaller than $n$.

\subsection{Testing dormant independences} \label{sec:expverma}
We now employ our method to test a dormant independence from observational data, as described in \cref{sec:dormant-indep-theory}.
We simulate data from a distribution $Q^*$ that factorizes according to the graph $\mathcal{H}$ in \cref{fig:vermaL} and test the existence of the edge $X^1 \rightarrow X^4$. As discussed by \citet{shpitser2008dormant}, the presence of this edge cannot be tested by a conditional independence test, and instead we test marginal independence between $X^1$ and $X^4$ in the target distribution $Q^{\DO(X^3 \coloneqq N)}$, which can be obtained by applying our method using 
$r_q(x^3, x^2) \coloneqq q^{\DO}(x^3)/q(x^3|x^2)$.

More precisely, we conduct three experiments. In the first experiment, we consider binary random variables for the observables $X^1, X^2, X^3$ and $X^4$, while the hidden variable $H$ is a discrete random variable with 4 possible values. We estimate $q(x^3 | x^2)$ by the empirical probabilities and use the empirical marginal distribution of $X^3$ as a target distribution, i.e., $p(x^3) = \hat{q}(x^3)$. In this setting, using the marginal distribution of $X^3$ as a target distribution corresponds to minimizing the empirical variance of the weights $r_{q}(x^3, x^2)$ (not shown), see~\cref{assump:finite-second-moment}.
We employ Fisher's exact test to determine whether $X^1$ and $X^4$ in the interventional distribution are independent of each other. We compare our method to a 
more specialized method based on binary nested Markov models \citep{shpitser2012parameter} that is based on a likelihood ratio test. \Cref{sec:sim-details} contains simulation parameters from all three experiments.

In the second experiment, we consider a linear Gaussian SCM. We estimate the conditional $q(x^3 | x^2)$ by a linear regression and, as before, use the empirical marginal distribution of $X^3$ as a target distribution. We then test for independence between $X^1$ and $X^4$ in the interventional distribution using a simple correlation test. 
Since the distribution is jointly Gaussian (with linear functions) and satisfies the `bow-free' condition \citep{brito2002new}, there is a specialized, non-trivial likelihood procedure for model selection that we can compare with: 
We perform maximum likelihood estimation as suggested by~\citet{drton2009computing} and use the penalty from \citet{nowzohour2017distributional} to score the graphs $\mathcal{G}$ and $\mathcal{H}$. 

In the third experiment, we consider a nonlinear SCM with non-Gaussian errors. We estimate the conditional $q(x^3|x^2)$ using generalized additive models. For simplicity, we consider a distribution where $X^3 - \E[X^3\mid X^2]$ is Gaussian. Our procedure also applies to
more general settings by applying conditional density estimation to learn $q(x^3|x^2)$, for example.
To the best of our knowledge, there exist no other methods for testing the dormant independence in any of such cases.

In all experiments, we consider two strategies for choosing the resampling size $m$: (1) $m = \sqrt{n}$ and (2) the target heuristic (see Algorithm~\ref{alg:tuning-m}). The resulting rejection rates over 500 repeated experiments, for several sample sizes, are shown in \cref{fig:dormant-indep-binary}.
Our method identifies both the absence and presence of the causal edge $X^1 \rightarrow X^4$ in both the binary and the Gaussian setting.
In both the binary and Gaussian settings, the tailored score-based approaches have more power to detect the absence of the edge (though in the binary case, the level of the test does not seem to hold exactly when the sample sizes are small). 
For the general case (nonlinear and non-Gaussian), our method has the correct level and 
increasing power as sample size increases. We are not aware of any other existing test that can achieve this in general.
Compared to $m = \sqrt{n}$, the choice of $m$ with the target heuristic yields larger test power without sacrificing too much the level of the test (although the level is violated for small sample sizes in the Gaussian setting).

\begin{figure}[t]
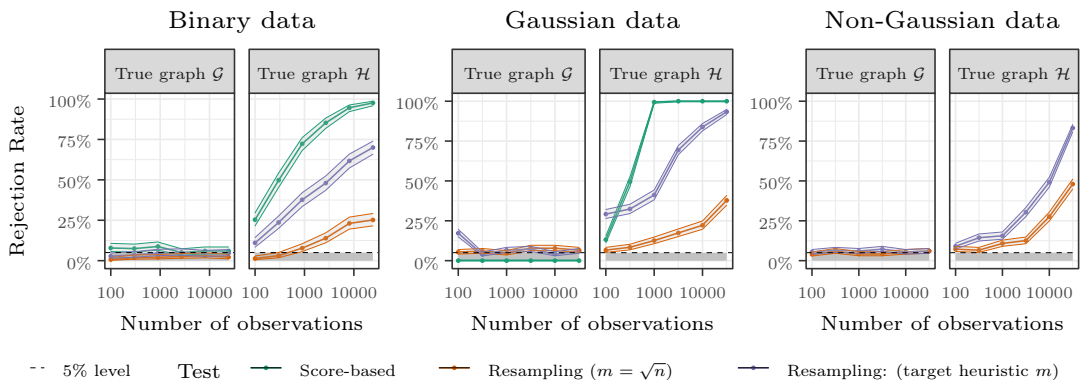

    \scriptsize
    \centering
% [inline block 3: 1 envs, 67137 chars -> data_tex | \begin{tikzpicture}[x=1pt,y=1pt] \definecolor{fillColor}{RGB}{255,255,255}...]

         \caption{
        We test the dormant independence discussed in \cref{sec:dormant-indep-theory}.
        The existence of the edge $X^1 \rightarrow X^4$ 
        can be inferred by testing a Verma constraint in the observational distribution, which translates to an independence statement in the target distribution. 
        The plots show rejection rates for the underlying graphs $\mathcal{G}$ (first, third and fifth plot) and $\mathcal{H}$ (second, fourth and sixth plot), see Figure~\ref{fig:vermaL}, as the number of observations increases.
        Tailored  score-based approaches exist in the binary and Gaussian cases (left and middle) but not in the case of  more complex distributions (right). We plot the rates both when $m=\sqrt{n}$ and when $m$ is chosen according to the target heuristic in \cref{alg:tuning-m}. The target heuristic chooses an $m$ that yields a good balance between level (graph $\mathcal{G}$) and power (graph $\mathcal{H}$).
        } 
    \label{fig:dormant-indep-binary}
\end{figure}

\subsection{Comparison to IPW}\label{sec:ipw}
Inverse probability weighting (IPW) allows us to test simple hypotheses such as 
$\E_P[f(X)] = c$ for some constant $c \in \mathbb{R}$ and a given function $f$.
If data $\bZ_m$ sampled from the target distribution $P^*$ are available, we could test the hypothesis using the test statistic 
$\sum_{i=1}^m f(Z_i)$.
If, instead, data are available from an observable distribution $Q^*$, we can estimate the corresponding test statistic in the target domain using the test statistic
\begin{align*}
    T(\bX_n) \coloneqq \sum_{i=1}^n \bar{r}(X_i) f(X_i),
\end{align*}
where $\bar{r}(X_i)$ is the normalized versions of the shift factor $r$ (elsewhere we do not require $r$ to be normalized). 
Under \cref{assump:finite-second-moment}, 
$T(\bX_n)$ is asymptotically normal with mean $\E_P[f(X_1)]$ and variance $\sigma^2 \coloneqq {\VAR(r(X_1)f(X_1))}/{\sqrt{n}}$, and one can construct a $(1-\alpha)$ confidence interval as 
$$[T(\bX_n) - z_{\alpha/2} \sigma/\sqrt{n}, T(\bX_n) + z_{\alpha/2} \sigma/\sqrt{n}],$$ where $z_{\alpha/2}$ is the $\alpha/2$ quantile from the standard normal distribution. 

To compare our approach to the IPW approach, we simulate data ($n=100$) from the following structural equation model
\begin{align*}
    X^1 \coloneqq 1 + \epsilon_{X^1} \quad X^2 \coloneqq X^1 + \epsilon_{X^2} \quad X^3 \coloneqq X^2 - X^1 + \epsilon_{X^3},
\end{align*}
with $\epsilon_{X^1}\sim\mathcal{N}(0, 3)$, $\epsilon_{X^2}\sim\mathcal{N}(0, 4)$ and $\epsilon_{X^3}\sim\mathcal{N}(0, 1)$.
In this model, the mean of $X^3$ is $\E_Q[X^3] = 0$. 
We consider the distributional shift corresponding to the intervention
$\DO(X^2 \coloneqq \mu + \widebar{\epsilon}_{X^2})$ with $\widebar{\epsilon}_{X^2}\sim\mathcal{N}(0,1)$
, where $X^3$ has mean $\E_P[X^3] = \mu - 1$, and test the hypothesis $\E_P[X^3] = 0$ for various $\mu$ using both our resampling approach (with $m$ chosen according to the target heuristic in \cref{alg:tuning-m}) and the IPW based confidence intervals. Since IPW is sensitive to degenerate weights, we also use a `clipped IPW', where we truncate the $10$ largest weights at the $10$th largest value (see e.g. \citet{cole2008constructing}).

Ideally, we accept the hypothesis for $\mu=1$ and reject the hypothesis for all other $\mu$. The larger $\mu$ becomes, the easier it should be to reject the hypothesis $\mu = 1$, if target data are available.
At the same time, since the target distribution is a Gaussian distribution centered at $\mu-1$, as $\mu$ increases, the weights get increasingly degenerate, because the weights of the  data points with the largest numerical values $X^2$ dominate the weights of all other data points. 

We observe in \cref{fig:ipw-clt} that all methods have the correct level at 5\% (when $\mu = 1$) and approximately the same power for small $\mu$. As $\mu$ grows, the plain IPW loses its power, due to weight degeneracy.
Both the clipped IPW and our resampling approach do not suffer from this issue, with power approaching $1$, even as weights get increasingly degenerate. 
This experiment indicates that our method may share some of the robustness to degenerate weights that is known from clipped IPW, and at the same time is able to estimate more complex test statistics, that cannot be estimated using IPW.

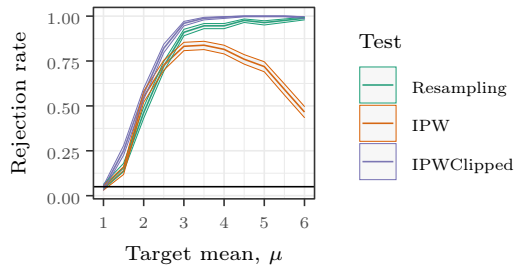
\begin{figure}[t]
    \centering
    \scriptsize
\begin{tikzpicture}[x=1pt,y=1pt]
\definecolor{fillColor}{RGB}{255,255,255}
\begin{scope}
\definecolor{drawColor}{RGB}{255,255,255}
\definecolor{fillColor}{RGB}{255,255,255}

\path[draw=drawColor,line width= 0.6pt,line join=round,line cap=round,fill=fillColor] (  0.00,  0.00) rectangle (216.81,108.41);
\end{scope}
\begin{scope}
\definecolor{fillColor}{RGB}{255,255,255}

\path[fill=fillColor] ( 38.56, 30.69) rectangle (121.33,102.91);
\definecolor{drawColor}{gray}{0.92}

\path[draw=drawColor,line width= 0.3pt,line join=round] ( 38.56, 40.35) --
	(121.33, 40.35);

\path[draw=drawColor,line width= 0.3pt,line join=round] ( 38.56, 57.28) --
	(121.33, 57.28);

\path[draw=drawColor,line width= 0.3pt,line join=round] ( 38.56, 74.22) --
	(121.33, 74.22);

\path[draw=drawColor,line width= 0.3pt,line join=round] ( 38.56, 91.15) --
	(121.33, 91.15);

\path[draw=drawColor,line width= 0.3pt,line join=round] ( 49.84, 30.69) --
	( 49.84,102.90);

\path[draw=drawColor,line width= 0.3pt,line join=round] ( 64.89, 30.69) --
	( 64.89,102.90);

\path[draw=drawColor,line width= 0.3pt,line join=round] ( 79.94, 30.69) --
	( 79.94,102.90);

\path[draw=drawColor,line width= 0.3pt,line join=round] ( 95.00, 30.69) --
	( 95.00,102.90);

\path[draw=drawColor,line width= 0.3pt,line join=round] (110.05, 30.69) --
	(110.05,102.90);

\path[draw=drawColor,line width= 0.6pt,line join=round] ( 38.56, 31.88) --
	(121.33, 31.88);

\path[draw=drawColor,line width= 0.6pt,line join=round] ( 38.56, 48.82) --
	(121.33, 48.82);

\path[draw=drawColor,line width= 0.6pt,line join=round] ( 38.56, 65.75) --
	(121.33, 65.75);

\path[draw=drawColor,line width= 0.6pt,line join=round] ( 38.56, 82.69) --
	(121.33, 82.69);

\path[draw=drawColor,line width= 0.6pt,line join=round] ( 38.56, 99.62) --
	(121.33, 99.62);

\path[draw=drawColor,line width= 0.6pt,line join=round] ( 42.32, 30.69) --
	( 42.32,102.90);

\path[draw=drawColor,line width= 0.6pt,line join=round] ( 57.37, 30.69) --
	( 57.37,102.90);

\path[draw=drawColor,line width= 0.6pt,line join=round] ( 72.42, 30.69) --
	( 72.42,102.90);

\path[draw=drawColor,line width= 0.6pt,line join=round] ( 87.47, 30.69) --
	( 87.47,102.90);

\path[draw=drawColor,line width= 0.6pt,line join=round] (102.52, 30.69) --
	(102.52,102.90);

\path[draw=drawColor,line width= 0.6pt,line join=round] (117.57, 30.69) --
	(117.57,102.90);
\definecolor{drawColor}{RGB}{27,158,119}

\path[draw=drawColor,line width= 0.6pt,line join=round] ( 42.32, 35.13) --
	( 49.84, 42.45) --
	( 57.37, 63.18) --
	( 64.89, 80.72) --
	( 72.42, 93.53) --
	( 79.94, 95.96) --
	( 87.47, 95.96) --
	( 95.00, 98.00) --
	(102.52, 97.18) --
	(110.05, 98.00) --
	(117.57, 98.81);
\definecolor{drawColor}{RGB}{217,95,2}

\path[draw=drawColor,line width= 0.6pt,line join=round] ( 42.32, 34.72) --
	( 49.84, 41.23) --
	( 57.37, 69.21) --
	( 64.89, 81.13) --
	( 72.42, 88.24) --
	( 79.94, 88.65) --
	( 87.47, 87.09) --
	( 95.00, 83.36) --
	(102.52, 80.59) --
	(110.05, 71.98) --
	(117.57, 63.45);
\definecolor{drawColor}{RGB}{117,112,179}

\path[draw=drawColor,line width= 0.6pt,line join=round] ( 42.32, 34.72) --
	( 49.84, 48.88) --
	( 57.37, 70.29) --
	( 64.89, 87.50) --
	( 72.42, 96.91) --
	( 79.94, 98.81) --
	( 87.47, 99.22) --
	( 95.00, 99.62) --
	(102.52, 99.62) --
	(110.05, 99.62) --
	(117.57, 99.22);
\definecolor{fillColor}{RGB}{179,179,179}

\path[fill=fillColor,fill opacity=0.10] ( 42.32, 36.17) --
	( 49.84, 44.07) --
	( 57.37, 65.28) --
	( 64.89, 82.56) --
	( 72.42, 94.65) --
	( 79.94, 96.82) --
	( 87.47, 96.82) --
	( 95.00, 98.55) --
	(102.52, 97.87) --
	(110.05, 98.55) --
	(117.57, 99.17) --
	(117.57, 98.21) --
	(110.05, 97.23) --
	(102.52, 96.28) --
	( 95.00, 97.23) --
	( 87.47, 94.89) --
	( 79.94, 94.89) --
	( 72.42, 92.21) --
	( 64.89, 78.76) --
	( 57.37, 61.08) --
	( 49.84, 40.98) --
	( 42.32, 34.32) --
	cycle;
\definecolor{drawColor}{RGB}{27,158,119}

\path[draw=drawColor,line width= 0.3pt,line join=round,line cap=round] ( 42.32, 36.17) --
	( 49.84, 44.07) --
	( 57.37, 65.28) --
	( 64.89, 82.56) --
	( 72.42, 94.65) --
	( 79.94, 96.82) --
	( 87.47, 96.82) --
	( 95.00, 98.55) --
	(102.52, 97.87) --
	(110.05, 98.55) --
	(117.57, 99.17);

\path[draw=drawColor,line width= 0.3pt,line join=round,line cap=round] (117.57, 98.21) --
	(110.05, 97.23) --
	(102.52, 96.28) --
	( 95.00, 97.23) --
	( 87.47, 94.89) --
	( 79.94, 94.89) --
	( 72.42, 92.21) --
	( 64.89, 78.76) --
	( 57.37, 61.08) --
	( 49.84, 40.98) --
	( 42.32, 34.32);

\path[fill=fillColor,fill opacity=0.10] ( 42.32, 35.70) --
	( 49.84, 42.78) --
	( 57.37, 71.31) --
	( 64.89, 82.96) --
	( 72.42, 89.74) --
	( 79.94, 90.13) --
	( 87.47, 88.65) --
	( 95.00, 85.10) --
	(102.52, 82.42) --
	(110.05, 74.05) --
	(117.57, 65.55) --
	(117.57, 61.35) --
	(110.05, 69.88) --
	(102.52, 78.63) --
	( 95.00, 81.51) --
	( 87.47, 85.37) --
	( 79.94, 87.00) --
	( 72.42, 86.59) --
	( 64.89, 79.20) --
	( 57.37, 67.11) --
	( 49.84, 39.87) --
	( 42.32, 33.97) --
	cycle;
\definecolor{drawColor}{RGB}{217,95,2}

\path[draw=drawColor,line width= 0.3pt,line join=round,line cap=round] ( 42.32, 35.70) --
	( 49.84, 42.78) --
	( 57.37, 71.31) --
	( 64.89, 82.96) --
	( 72.42, 89.74) --
	( 79.94, 90.13) --
	( 87.47, 88.65) --
	( 95.00, 85.10) --
	(102.52, 82.42) --
	(110.05, 74.05) --
	(117.57, 65.55);

\path[draw=drawColor,line width= 0.3pt,line join=round,line cap=round] (117.57, 61.35) --
	(110.05, 69.88) --
	(102.52, 78.63) --
	( 95.00, 81.51) --
	( 87.47, 85.37) --
	( 79.94, 87.00) --
	( 72.42, 86.59) --
	( 64.89, 79.20) --
	( 57.37, 67.11) --
	( 49.84, 39.87) --
	( 42.32, 33.97);

\path[fill=fillColor,fill opacity=0.10] ( 42.32, 35.70) --
	( 49.84, 50.78) --
	( 57.37, 72.36) --
	( 64.89, 89.04) --
	( 72.42, 97.63) --
	( 79.94, 99.17) --
	( 87.47, 99.45) --
	( 95.00, 99.62) --
	(102.52, 99.62) --
	(110.05, 99.62) --
	(117.57, 99.45) --
	(117.57, 98.75) --
	(110.05, 99.37) --
	(102.52, 99.37) --
	( 95.00, 99.37) --
	( 87.47, 98.75) --
	( 79.94, 98.21) --
	( 72.42, 95.94) --
	( 64.89, 85.81) --
	( 57.37, 68.19) --
	( 49.84, 47.12) --
	( 42.32, 33.97) --
	cycle;
\definecolor{drawColor}{RGB}{117,112,179}

\path[draw=drawColor,line width= 0.3pt,line join=round,line cap=round] ( 42.32, 35.70) --
	( 49.84, 50.78) --
	( 57.37, 72.36) --
	( 64.89, 89.04) --
	( 72.42, 97.63) --
	( 79.94, 99.17) --
	( 87.47, 99.45) --
	( 95.00, 99.62) --
	(102.52, 99.62) --
	(110.05, 99.62) --
	(117.57, 99.45);

\path[draw=drawColor,line width= 0.3pt,line join=round,line cap=round] (117.57, 98.75) --
	(110.05, 99.37) --
	(102.52, 99.37) --
	( 95.00, 99.37) --
	( 87.47, 98.75) --
	( 79.94, 98.21) --
	( 72.42, 95.94) --
	( 64.89, 85.81) --
	( 57.37, 68.19) --
	( 49.84, 47.12) --
	( 42.32, 33.97);
\definecolor{drawColor}{RGB}{0,0,0}

\path[draw=drawColor,line width= 0.6pt,line join=round] ( 38.56, 35.27) -- (121.33, 35.27);
\definecolor{drawColor}{gray}{0.20}

\path[draw=drawColor,line width= 0.6pt,line join=round,line cap=round] ( 38.56, 30.69) rectangle (121.33,102.91);
\end{scope}
\begin{scope}
\definecolor{drawColor}{gray}{0.30}

\node[text=drawColor,anchor=base east,inner sep=0pt, outer sep=0pt, scale=  0.88] at ( 33.61, 28.85) {0.00};

\node[text=drawColor,anchor=base east,inner sep=0pt, outer sep=0pt, scale=  0.88] at ( 33.61, 45.79) {0.25};

\node[text=drawColor,anchor=base east,inner sep=0pt, outer sep=0pt, scale=  0.88] at ( 33.61, 62.72) {0.50};

\node[text=drawColor,anchor=base east,inner sep=0pt, outer sep=0pt, scale=  0.88] at ( 33.61, 79.66) {0.75};

\node[text=drawColor,anchor=base east,inner sep=0pt, outer sep=0pt, scale=  0.88] at ( 33.61, 96.59) {1.00};
\end{scope}
\begin{scope}
\definecolor{drawColor}{gray}{0.20}

\path[draw=drawColor,line width= 0.6pt,line join=round] ( 35.81, 31.88) --
	( 38.56, 31.88);

\path[draw=drawColor,line width= 0.6pt,line join=round] ( 35.81, 48.82) --
	( 38.56, 48.82);

\path[draw=drawColor,line width= 0.6pt,line join=round] ( 35.81, 65.75) --
	( 38.56, 65.75);

\path[draw=drawColor,line width= 0.6pt,line join=round] ( 35.81, 82.69) --
	( 38.56, 82.69);

\path[draw=drawColor,line width= 0.6pt,line join=round] ( 35.81, 99.62) --
	( 38.56, 99.62);
\end{scope}
\begin{scope}
\definecolor{drawColor}{gray}{0.20}

\path[draw=drawColor,line width= 0.6pt,line join=round] ( 42.32, 27.94) --
	( 42.32, 30.69);

\path[draw=drawColor,line width= 0.6pt,line join=round] ( 57.37, 27.94) --
	( 57.37, 30.69);

\path[draw=drawColor,line width= 0.6pt,line join=round] ( 72.42, 27.94) --
	( 72.42, 30.69);

\path[draw=drawColor,line width= 0.6pt,line join=round] ( 87.47, 27.94) --
	( 87.47, 30.69);

\path[draw=drawColor,line width= 0.6pt,line join=round] (102.52, 27.94) --
	(102.52, 30.69);

\path[draw=drawColor,line width= 0.6pt,line join=round] (117.57, 27.94) --
	(117.57, 30.69);
\end{scope}
\begin{scope}
\definecolor{drawColor}{gray}{0.30}

\node[text=drawColor,anchor=base,inner sep=0pt, outer sep=0pt, scale=  0.88] at ( 42.32, 19.68) {1};

\node[text=drawColor,anchor=base,inner sep=0pt, outer sep=0pt, scale=  0.88] at ( 57.37, 19.68) {2};

\node[text=drawColor,anchor=base,inner sep=0pt, outer sep=0pt, scale=  0.88] at ( 72.42, 19.68) {3};

\node[text=drawColor,anchor=base,inner sep=0pt, outer sep=0pt, scale=  0.88] at ( 87.47, 19.68) {4};

\node[text=drawColor,anchor=base,inner sep=0pt, outer sep=0pt, scale=  0.88] at (102.52, 19.68) {5};

\node[text=drawColor,anchor=base,inner sep=0pt, outer sep=0pt, scale=  0.88] at (117.57, 19.68) {6};
\end{scope}
\begin{scope}
\definecolor{drawColor}{RGB}{0,0,0}

\node[text=drawColor,anchor=base,inner sep=0pt, outer sep=0pt, scale=  1.10] at ( 79.94,  7.64) {Target mean, $\mu$};
\end{scope}
\begin{scope}
\definecolor{drawColor}{RGB}{0,0,0}

\node[text=drawColor,rotate= 90.00,anchor=base,inner sep=0pt, outer sep=0pt, scale=  1.10] at ( 13.08, 66.80) {Rejection rate};
\end{scope}
\begin{scope}
\definecolor{fillColor}{RGB}{255,255,255}

\path[fill=fillColor] (132.33, 32.01) rectangle (211.31,101.58);
\end{scope}
\begin{scope}
\definecolor{drawColor}{RGB}{0,0,0}

\node[text=drawColor,anchor=base west,inner sep=0pt, outer sep=0pt, scale=  1.10] at (137.83, 87.44) {Test};
\end{scope}
\begin{scope}
\definecolor{fillColor}{RGB}{255,255,255}

\path[fill=fillColor] (137.83, 66.42) rectangle (152.29, 80.87);
\end{scope}
\begin{scope}
\definecolor{drawColor}{RGB}{27,158,119}

\path[draw=drawColor,line width= 0.6pt,line join=round] (139.28, 73.64) -- (150.84, 73.64);
\end{scope}
\begin{scope}
\definecolor{drawColor}{RGB}{27,158,119}
\definecolor{fillColor}{RGB}{179,179,179}

\path[draw=drawColor,line width= 0.3pt,line cap=rect,fill=fillColor,fill opacity=0.10] (138.26, 66.84) rectangle (151.86, 80.44);
\end{scope}
\begin{scope}
\definecolor{fillColor}{RGB}{255,255,255}

\path[fill=fillColor] (137.83, 51.96) rectangle (152.29, 66.42);
\end{scope}
\begin{scope}
\definecolor{drawColor}{RGB}{217,95,2}

\path[draw=drawColor,line width= 0.6pt,line join=round] (139.28, 59.19) -- (150.84, 59.19);
\end{scope}
\begin{scope}
\definecolor{drawColor}{RGB}{217,95,2}
\definecolor{fillColor}{RGB}{179,179,179}

\path[draw=drawColor,line width= 0.3pt,line cap=rect,fill=fillColor,fill opacity=0.10] (138.26, 52.39) rectangle (151.86, 65.99);
\end{scope}
\begin{scope}
\definecolor{fillColor}{RGB}{255,255,255}

\path[fill=fillColor] (137.83, 37.51) rectangle (152.29, 51.96);
\end{scope}
\begin{scope}
\definecolor{drawColor}{RGB}{117,112,179}

\path[draw=drawColor,line width= 0.6pt,line join=round] (139.28, 44.73) -- (150.84, 44.73);
\end{scope}
\begin{scope}
\definecolor{drawColor}{RGB}{117,112,179}
\definecolor{fillColor}{RGB}{179,179,179}

\path[draw=drawColor,line width= 0.3pt,line cap=rect,fill=fillColor,fill opacity=0.10] (138.26, 37.93) rectangle (151.86, 51.53);
\end{scope}
\begin{scope}
\definecolor{drawColor}{RGB}{0,0,0}

\node[text=drawColor,anchor=base west,inner sep=0pt, outer sep=0pt, scale=  0.88] at (157.79, 70.61) {Resampling};
\end{scope}
\begin{scope}
\definecolor{drawColor}{RGB}{0,0,0}

\node[text=drawColor,anchor=base west,inner sep=0pt, outer sep=0pt, scale=  0.88] at (157.79, 56.16) {IPW};
\end{scope}
\begin{scope}
\definecolor{drawColor}{RGB}{0,0,0}

\node[text=drawColor,anchor=base west,inner sep=0pt, outer sep=0pt, scale=  0.88] at (157.79, 41.70) {IPWClipped};
\end{scope}
\end{tikzpicture}
     \caption{Rejection rates  
    for testing whether the mean of $X^3$ equals zero, which is the case
    if and only if $\mu = 1$, see Section~\ref{sec:ipw}. As $\mu$ increases,
    the weights become more and more ill-behaved and IPW loses some of its power.
    Neither the clipped version of  IPW nor the resampling framework  suffer from this problem.
    }
    \label{fig:ipw-clt}
\end{figure}

\subsection{Resampling for heterogeneity to 
identify causal predictors} \label{sec:icpplu}
In \cref{icpplus}, we propose to use our resampling approach to create heterogeneous data
and apply ICP \citep{Peters2016jrssb} to estimate (a subset of) the causal predictors $X^{\PA_Y}$ of a response variable $Y$, even when no environments are given.  We now illustrate that this approach can infer casual relationships that would not be detectable using conditional independence statements.
We therefore generate $n=1'000$ i.i.d.\ observations of $Y, X^1, X^2,X^3,X^4$ according to a linear Gaussian SCM with the graphical representation given in \cref{fig:icp_graph_res} (left). Furthermore, we assume that the conditional distribution $q^*(x^2|x^1, x^3)$ is known (instead, one could also assume that $\PA_2=\{1, 3\}$ is known and estimate the conditional).
As described in \cref{icpplus}, we now generate two environments by considering the observational distribution and a modified distribution based on a distributional shift. Specifically, we take the entire sample to form environment $K=1$ and then, to form environment $K=2$, we resample from the same data $m=30$ (approximately $\sqrt{n}$) observations under the distributional shift generated by replacing the conditional $q^*(x^2|x^1, x^3)$ with the target distribution $p^*(x^2|x^1, x^3)$ (which flips the sign of the dependence on $x^3$). 
The precise data generating process is described in \cref{sec:sim-details}.
This results in a data set with $n+m$ observations from two environments. We then apply ICP to the joint data from both environments and output the following estimate 
of the causal predictors:
\begin{align*}
    \hat{S} \coloneqq \bigcap_{S: H_{0,S} \text{ accepted}} S,
\end{align*}
Here,  $H_{0, S}$ is the hypothesis defined in \cref{icpplus}. We use the \texttt{InvariantCausalPrediction} R-package for this experiment, which tests $H_{0,S}$ using a Chow test \citep{chow1960}.
We repeat the experiment $500$ times and report in \cref{fig:icp_graph_res} (right) how many times each set $S$ is output 
As an oracle benchmark, we also report the corresponding frequencies when we sample the target distribution directly, instead of resampling it (in particular we use the same total sample size $m+n$). 
Our method frequently returns the invariant set $\{2, 3\}$ and holds the predicted coverage guarantee: in only $4.2\%$ of the cases, the estimated set is not a subset of $\{2,3\}$.

The output of the method is guaranteed to be  (with large probability) a subset of the set of  true causal predictors, but depending on the type of heterogeneity, the method may output the empty set. E.g., if the true (unknown) underlying graph equals $X^4\rightarrow Y\rightarrow X^1\rightarrow X^2\leftarrow X^3$, then (for the same experiment), both ICP based on the resampled data and ICP based on the true target distribution always output the empty set.

The difference between the oracle method and the resampling method (see~\cref{fig:icp_graph_res}) indicates 
that the resampled distribution does not equal the target distribution. 
Indeed, in some regions where the  target density has substantial mass, there are no data points that can be sampled. 
This, however, does not show any
effect on the level of the overall procedure. 
Thus, in the 
resampled data the conditional distribution of
$X^2$, given $X^1$ and $X^3$ 
differs from $q^*(x^2|x^1,x^3)$
(even though it does not equal the target conditional
$q^*(x^2|x^1,x^3)$).
We hypothesize that the result is therefore similar to choosing a different target distribution in the first place. 
Indeed, when changing
$p^*(x^1, x^2, x^3)$ to match the data support, 
the set frequencies of the oracle version closely match the resampled version.

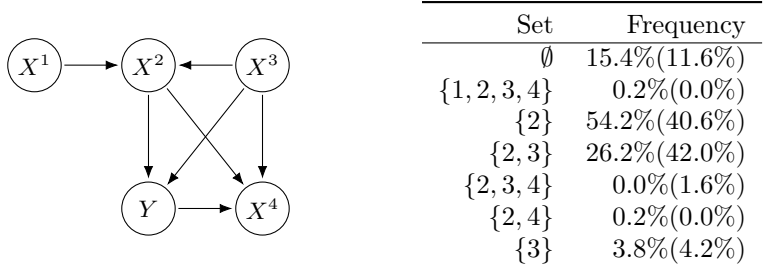
\begin{figure}
\begin{minipage}{0.5\textwidth}
\centering
\begin{tikzpicture}[xscale=1.5, yscale=1.9, shorten >=1pt, shorten <=1pt]
\small
\draw (0,3) node(x1) [observedsmall] {$X^1$};
\draw (1,3) node(x2) [observedsmall] {$X^2$};
\draw (2,3) node(x3) [observedsmall] {$X^3$};
\draw (1,2) node(y) [observedsmall] {$Y$};
\draw (2,2) node(x4) [observedsmall] {$X^4$};
\draw[-Latex] (x1) -- (x2);
\draw[-Latex] (x3) -- (x2);
\draw[-Latex] (x2) -- (x4);
\draw[-Latex] (x3) -- (x4);
\draw[-Latex] (y) -- (x4);
\draw[-Latex] (x2) -- (y);
\draw[-Latex] (x3) -- (y);
\end{tikzpicture}
\end{minipage}%
\begin{minipage}{0.5\textwidth}
\begin{tabular}{rrr}
\toprule
Set & Frequency \\ \hline 
$\emptyset$ &$15.4\% (11.6\%)$\\ 
$\{1,2,3,4\}$ &$0.2\% (0.0\%)$\\ 
$\{2\}$ &$54.2\% (40.6\%)$\\ 
$\{2,3\}$ &$26.2\% (42.0\%)$\\ 
$\{2,3,4\}$ &$0.0\% (1.6\%)$\\ 
$\{2,4\}$ &$0.2\% (0.0\%)$\\ 
$\{3\}$ &$3.8\% (4.2\%)$\\ 
 \bottomrule
\end{tabular}
\end{minipage}
\caption{
(Left) Graphical representation of the SCM used in the simulation of Section~\ref{sec:icpplu}. We assume that the conditional $q^*(x^2|x^1, x^3)$ is known and use this to generate resamples that mimic a heterogeneity in the data that we can then exploit for causal discovery.
(Right) Frequencies of the estimated sets.
The theory guarantees that in at most $5\%$ of the cases, the estimated set is not a subset of $\{2,3\}$, the correct set of causal predictors.
The numbers in parenthesis are oracle benchmarks that sample the data directly from the actual target distribution instead of using resampling.
}
\label{fig:icp_graph_res}
\end{figure}

\section{Conclusion and future work}
\label{sec:summary_future_work}
We formally introduce statistical testing under distributional shifts and illustrate that it can be applied in a diverse set of areas such as reinforcement learning, conditional independence testing and causal inference. We provide a general testing procedure based on weighted resampling and prove pointwise asymptotic level guarantees under mild assumptions. 
Our simulation experiments underline the usefulness of our method:
It is able to test complicated hypotheses, such as dormant independences -- for which to-date no test with provable level guarantees exists -- and 
can be applied to test complex hypotheses in off-policy testing or covariate shift. 
The framework is competitive even in some of the problems, where more specialized solutions exist. 
Its key strength is that it is very easy to apply and can be combined with any existing test making it an attractive go-to method for complicated testing problems.

We believe that several directions would be worthwhile to investigate further. 
In many of the empirical experiments, the requirement that $m = o(\sqrt{n})$ seems too strict and can be relaxed, see also \cref{subsec:test-resampling-worked}.
We hypothesize that under further restrictions on the weights or the test statistics, the assumption for the theoretical results can be relaxed to $m=o(n)$. \citet{bickel2012resampling} consider the `$\binom{n}{m}$' bootstrap, which resamples distinct sequences without weights, and show that under mild assumptions, bootstrap estimates converge if $m=o(n)$. Further work is required to extend this to the case of weighted samples. 

We show in \cref{subsec:uniform-level} that the main convergence result, \cref{thm:asymptotic-level-SIR}, can be extended to uniform level, if we make uniform assumptions on the target test, $\varphi$, and that the weights are uniformly bounded over $\tau^{-1}(H_0)$. In many model classes, the latter assumption may be too strict, and a better understanding of necessary conditions would help. 

\new{
While \cref{thm:asymptotic-level-unknown-weights} provides guarantees when $r(x^A)$ is unknown, the theorem requires guarantees on the relative error $\hat{r}(x^A) / r(x^A)$. 
A more natural guarantee would be on the absolute error $|\hat{r}(x^A) - r(x^A)|$, and we hope further work can shed light on the appropriate conditions (such as model classes $p$ and $q$ or properties of the estimator $\hat{r}$) to achieve such a guarantee.}

Resampling distinct sequences is less prone to weight degeneracy than IPW or resampling with replacement, but in setups with well-behaved weights this may come at a cost of power when resampling only $m \ll n$ points. Resampling non-distinct sequences share many similarities with IPW (for fixed $n$, expectations of $\Psi_{\texttt{REPL}}^{r,m}$ converge to the IPW estimate when $m \to \infty$), but additionally benefits from the ability to test hypotheses where the test statistic cannot be written as an average over the data points (see \cref{sec:ipw}).
Further investigation of the differences between the sampling schemes and benefits and disadvantages in comparison to IPW is needed.

Our methodology considers the setting where we only observe data $\bX_n$ from the distribution $Q^*$. If additionally a sample $\bZ_{n'}$ from the target distribution $P^*$ is already available, one can combine the two data sets, to get a larger approximate sample from $P^*$\new{, a problem known as `domain adaptation' in the literature \citep{finn2017model}.}
In particular, if $\bZ_{n'}$ is also available, one could perform the testing on the combined data set $(\Psi(\bX_n), \bZ_{n'})$.
We believe that similar theoretical guarantees can be proved.

When testing for a hypothesis in the observed domain, we often have the freedom to choose a target distribution which could help us improve the performance of our test (as discussed in \cref{sec:testingobserved}). In the experiments, e.g., \cref{sec:exp-off-policy} and \cref{exp:cond-ind-test}, we choose the target distribution that matches certain marginals, which often helps to minimize the variance of the weights. Another possibility is to choose a target distribution such that the alternative becomes easier to detect which can be achieved by minimizing the $p$-value of the test with respect to the choice of the target distribution.

\begin{ack}
We thank Peter Rasmussen for valuable ideas about the combinatorics, Mathias Drton for helpful
discussions on hidden variable models and Tom Berrett for insightful comments during a discussion of an earlier version of this paper. 
NT, SS, and JP were supported by a research grant (18968) from VILLUM FONDEN and JP was, in addition, supported by the Carlsberg Foundation. NP was supported by a research grant (0069071) from Novo Nordisk Fonden.
\end{ack}

\medskip
{
\small
\bibliography{references.bib}
}

\newpage
\appendix
\section{Further comments on the framework}

\subsection{Forward and backward shifts, \texorpdfstring{$\tau$ and $\eta$}{}}\label{sec:forward-backward-map}

In this paper, as mentioned in \cref{subsec:framework}, we take the starting point that $Q^*$ is observed, and view $P^* = \tau(Q^*)$ as a shifted version of $Q^*$. 
One could instead suppose that we started with a distribution of interest $P^*$, from which no sample is available, and then construct a map $\eta$ such that $Q^* = \eta(P^*)$ is a distribution which can be sampled from in practice. 
If $\tau$ and $\eta$ are invertible, the two views are mathematically equivalent, but if not, there is a subtle difference; the corresponding level guarantees take a supremum either over $Q \in \{Q' \mid \eta^{-1}(Q')\cap H_0 \neq \emptyset \}$ ($\eta$ view) or over $Q \in \{Q'|\tau(Q') \in H_0\}$ ($\tau$ view). To see this, 
we first start with the (natural) level guarantee from the $\eta$ view: 
$\sup_{P \in H_0} \P_{\eta(P)}(\psi_n(\bX_n, U) = 1) \leq \alpha$. We then have
\begin{align*}
&\sup_{P \in H_0} \P_{\eta(P)}(\psi_n(\bX_n, U) = 1) \leq \alpha \\
    \Leftrightarrow \quad 
&\sup_{Q \in \eta(H_0)} \P_Q(\psi_n(\bX_n, U) = 1) \leq \alpha\\
    \Leftrightarrow \quad 
&\sup_{Q \in \{Q' \,|\, \eta^{-1}(Q') \cap H_0 \neq \emptyset \}} \P_Q(\psi_n(\bX_n, U) = 1) \leq \alpha.
\end{align*}
If, alternatively, we start with the level guarantee from the $\tau$ view, we find 
\begin{align*}
&\sup_{P \in H_0}
\sup_{Q \in \tau^{-1}(P)}
\P_{Q}(\psi^r_n(\bX_n, U) = 1) 
\leq \alpha 
 \\ 
    \Leftrightarrow \quad&\sup_{Q \in \tau^{-1}(H_0)} \P_Q(\psi_n(\bX_n, U) = 1) \leq \alpha\\
     \Leftrightarrow \quad
&\sup_{Q \in \{Q' \,|\, \tau(Q') \in H_0 \}} \P_Q(\psi_n(\bX_n, U) = 1) \leq \alpha. \end{align*}
Comparing the last two lines yields the claim. 

\subsection{Example: Interventions in causal models} \label{sec:scm}
One example of a distributional shift $\tau$ is the case where $\tau$ represents an intervention in a structural causal model (SCM) over $X^1, \ldots, X^d$ \citep{pearl2009causality}.
 An SCM $\scm$ over $X^1, \ldots, X^d$ is a collection of structural assignments $f^1, \ldots, f^d$ and noise distributions $Q_{N^1}, \ldots, Q_{N^1}$ such that for each $j = 1, \ldots, d$, we have $X^j \coloneqq f^j(\PA^{j}, N^j)$.
Here, the noise variables $N^j$ are distributed according to $N^j \sim Q_{N^j}$ and are assumed to be jointly independent.
The sets $\PA^{j}\subseteq\{X^1, \ldots, X^d\}\backslash\{X^j\}$\footnote{For notational convenience we sometimes refer to the parent sets by their indices, i.e., $\PA^{j}\subseteq\{1,\ldots,d\}\setminus\{j\}$.}  denote the causal parents of $X^j$.
The induced graph over $X^1, \ldots, X^d$ is the graph obtained by drawing directed edges from each variable on the right-hand side of each assignment to the variables on the left-hand side; 
see \citet{bongers2021foundations} for a more formal introduction to SCMs.

Let us assume that $\scm$ induces a unique observational distribution $Q$ over $X^1, \ldots, X^d$ (which is the case if the graph is acyclic, for example), and assume that $Q$ 
admits a joint density $q$ with respect to a product measure. 
Then $q$ satisfies the factorization property 
(see \citet{lauritzen1990independence} or Theorem 1.4.1 in \citet{pearl2009causality}): $q(x^1, \ldots, x^d) = \prod_{j=1}^d q_{X^j\mid \PA^j}(x^j | x^{\PA^j})$. 
In an SCM, an intervention on a variable $X^k$ replaces the tuple $(f^k, \PA^k, Q_{N^k})$ with $(\bar{f}^k, \widebar{\PA}^k, \bar{Q}_{N^j})$ in the structural assignment for $X^k$, and we denote the replacement by $\DO(X^k \coloneqq \bar{f}^k(\widebar{\PA}^k, \bar{N}^k))$ \citep{pearl2009causality}.
This new mechanism determines a 
conditional that we denote by 
$p^*(x^k| x^{\widebar{\PA}^k})$.
The interventional distribution is the induced distribution with the new structural assignment, and we denote this by 
$P \coloneqq Q^{\DO(X^k := \bar{f}(\widebar{\PA^k},\bar{N}^k))}$. 
If $P$ admits the density $p$, only the conditional density of $X^k$ changes \citep[e.g.,][]{Haavelmo1944,Aldrich1989,pearl2009causality, peters2017elements}, 
that is, for $j \neq k$, we have $p(x^j | x^{\PA^j}) = q(x^j| x^{\PA^j})$, for all $x^j$ and $x^{\PA^j}$. 
Assume that for the true but unknown distribution $Q^*$ we know the conditional 
$q^*(x^k | x^{\PA^k})$ (e.g., because this was part of the design when generating the data).
Due to the factorization property, the intervention $\DO(X^k \coloneqq \bar{f}^k(\widebar{\PA}^k, \bar{N}^k))$ can then be represented as a map $\tau$ that acts on the density $q$:
\begin{align*}
    \tau(q)(x^1, \ldots, x^d) \coloneqq \frac{p^*(x^k | x^{\widebar{\PA}^k})}{q^*(x^k | x^{\PA^k})}\cdot q(x^1, \ldots, x^d).
\end{align*}
Defining $r(x^{\{k\}\cup \PA^k\cup\widebar{\PA}^k}) := {p^*(x^k | x^{\widebar{\PA}^k})}/{q^*(x^k | x^{\PA^k})}$, this takes the form of \cref{eq:tauform}. As the conditional $p^*(x^k | x^{\widebar{\PA}^k})$ is fully specified by the intervention, we therefore know the function $r$.
Our proposed framework allows us to test statements about the distribution $Q^{\DO(X^k := \bar{f}(\widebar{\PA^k},\bar{N}^k))}$.
We obtain similar expressions when intervening on several variables at the same time.

Similar distributional shifts can be obtained, of course, if the factorization is non-causal (see also Section~\ref{sec:conditionaltesting}), so while our framework contains
intervention distributions as a special case, 
it equally well applies to 
non-causal models.

\new{
\section{Efficient computation of \texorpdfstring{$V(n,m)$ in \cref{thm:finite-level-SIR}}{}}
\label{app:evaluation-of-variance}
In this section, 
we show that for $n,m \in \N$ and $K \geq 1$
\begin{equation*}
    V(n,m) = \binom{n}{m}^{-1} \sum_{\ell=1}^m \binom{m}{l}\binom{n-m}{m-\ell}(K^\ell - 1)
\end{equation*}
can be evaluated efficiently. If $m  n/2$, such that for some $\ell$ one has $m-\ell \geq n-m$, we use the convention that if $a > b$ then $\binom{b}{a} = 0$.
If one evaluated the term $\binom{n}{m}^{-1}$ separately, this could potentially cause numerical underflow, and similarly terms in the sum could get very 
large, such as the summand including $K^m - 1$.

Denote the summands by $s_\ell,$ that is
\begin{equation*}
    s_\ell = \frac{\binom{m}{l}\binom{n-m}{m-\ell}(K^\ell - 1)}{\binom{n}{m}}.
\end{equation*}
We can compute $s_1$ by:
\begin{align*}
    s_1 &= \frac{\binom{m}{1}\binom{n-m}{m-1}}{\binom{n}{m}}(K-1) \\
    &= m^2 (K-1) \frac{(n-m)!(n-m)!}{n!(n-2m+1)!} \\
    &=(K-1) \frac{m^2}{n-m + 1} \prod_{j=0}^{m-2} \frac{n-m-j}{n-j}.
\end{align*}
This can be evaluated in $O(m)$ time.
Further, if $s_\ell\neq 0$, the ratio of two consecutive summands is
\begin{align*}
    \tfrac{s_{\ell+1}}{s_\ell} &= \frac{\binom{m}{\ell+1}\binom{n-m}{m-\ell-1}(K^{\ell+1} - 1)}{\binom{m}{\ell}\binom{n-m}{m-\ell}(K^{\ell} - 1)} \\
    &= \frac{(m-\ell)^2}{(\ell+1) (n-2m+\ell+1)} \frac{K^{\ell+1} -1}{K^\ell -1},
\end{align*}
which for a given $\ell$, can be evaluated in $O(1)$ time.
Hence, we can compute $\sum_{\ell=1}^m s_\ell$, by first computing $s_1$, and for each $\ell$, compute $s_{\ell+1} = \tfrac{s_{\ell+1}}{s_\ell}s_\ell$, as long as $s_\ell \neq 0$ (after which the remaining terms are $0$). The overall computational cost of computing $V(n,m) = \sum_{\ell=1}^m s_\ell$ is thus $O(m)$.

}

\section{Algorithm for hypothesis testing with unknown distributional shift}\label{sec:sir-details}
This section contains \cref{alg:resampling-and-testing-unknown-shift}, which describes our method for testing under distributional shifts for the case where the shift factor $r_q$ is unknown,
but can be estimated by an estimator $\hat{r}$. \Cref{alg:resampling-and-testing-unknown-shift} is similar to \cref{alg:resampling-and-testing} but one additionally splits the sample $\bX_n$ into two disjoint samples $\bX_{n_1}$ and $\bX_{n_2}$ and uses $\bX_{n_1}$ for estimating the weights $\hat{r}_{n_1}$, which are then, together with $\bX_{n_2}$, used as an input to \cref{alg:resampling-and-testing}.

We view the sample splitting as a theoretical device. In practice, we are using the full sample both for estimating the weights and for applying the test.
\begin{algorithm}[H]
\caption{Testing a target hypothesis with unknown distributional shift and resampling}
\begin{algorithmic}[1]
\Statex \textbf{Input:} Data $\bX_{n}$, target sample size $m$, hypothesis test $\varphi_m$, estimator $\hat{r}$ for $r_q$, and $a$.
\State Let $n_1, n_2$ be s.t.\ $n_1 + n_2 = n$ and $n_1^a = \sqrt{n_2}$
\State $\bX_{n_1} \gets X_{1}, \ldots, X_{n_1}$
\State $\bX_{n_2} \gets X_{n_1 + 1}, \ldots, X_{n_1 + n_2}$
\State $\hat{r}_{n_1} \gets$ estimate of $r_q$ based on on $\bX_{n_1}$
\State $(i_1, \ldots, i_m) \gets$ sample from $\{1, \ldots, n_2\}^m$ with weights~\eqref{eq:SIR-weights} based on $\hat{r}_{n_1}$.
\State 
$\Psi_{\texttt{DRPL}}^{\hat{r}_{n_1},m}(\bX_{n_2}, U) \gets (X_{n_1 + i_1}, \ldots, X_{n_1 + i_m})$
\Statex \Return $\psi_n^{\hat{r}}(\bX_{n}, U) \coloneqq \varphi_m(\Psi_{\texttt{DRPL}}^{\hat{r}_{n_1},m}(\bX_{n_2}, U))$
\end{algorithmic}
\label{alg:resampling-and-testing-unknown-shift}
\end{algorithm}

\section{Sampling from \texorpdfstring{$\Psi_{\texttt{DRPL}}$}{}}\label{sec:sampling-DRPL}
This section provides details on sampling from 
$\Psi_{\texttt{DRPL}}^{r,m}$, as defined by \cref{eq:SIR-weights}.
We have defined $\Psi_{\texttt{REPL}}^{r,m}$ and $\Psi_{\texttt{NO-REPL}}^{r,m}$ as weighted resampling with and without replacement, respectively.
$\Psi_{\texttt{NO-REPL}}^{r,m}$ can be implemented as a sequential procedure that first draws $i_1$ with weights $r(X_{i})/\sum_{j=1}^n r(X_j)$, and then draws $i_2$ with weights $r(X_i)/\sum_{j=1,j\neq i_1}^n r(X_j)$, and so forth. Although both 
$\Psi_{\texttt{NO-REPL}}^{r,m}$
and 
$\Psi_{\texttt{DRPL}}^{r,m}$
sample distinct sequences $(i_1, \ldots, i_m)$, they are, in general, not equivalent, as can be seen from the form of the weights $w_{(i_1, \ldots, i_m)}^{\texttt{DRPL}}$ and $w_{(i_1, \ldots, i_m)}^{\texttt{NO-REPL}}$ below.
When there is no ambiguity, we omit superscripts and write $\Psi_{\texttt{DRPL}}$, for example. We also interchangeably consider a sample from $\Psi_{\texttt{DRPL}}$ to be a sequence $(i_1, \ldots, i_m)$ and a subsample $(X_{i_1}, \ldots, X_{i_m})$ of $\bX_n$.

The procedures $\Psi_{\texttt{DRPL}}, \Psi_{\texttt{REPL}}$ and $\Psi_{\texttt{NO-REPL}}$ sample a sequence $(i_1, \ldots, i_m)$ with weights $w_{(i_1,\ldots, i_m)}$ that are, respectively, given by:
\begin{align*}
    w_{(i_1,\ldots, i_m)}^{\texttt{DRPL}} &= \frac{\prod_{\ell=1}^m r(X_{i_\ell})}{\sum\limits_{\substack{(j_1, \ldots, j_m) \\ \text{distinct}}}\prod_{\ell=1}^m r(X_{j_\ell})} \quad \text{for distinct}\,(i_1, \ldots, i_m)& \\
    w_{(i_1,\ldots, i_m)}^\texttt{REPL} &= \frac{\prod_{\ell=1}^m r(X_{i_\ell})}{\sum\limits_{(j_1, \ldots, j_m)}\prod_{\ell=1}^m r(X_{j_\ell})} \quad \text{for all}\,(i_1, \ldots, i_m) \\
    w_{(i_1,\ldots, i_m)}^\texttt{NO-REPL} &= \frac{\prod_{\ell=1}^m r(X_{i_\ell})}
    {\sum\limits_{j_1 = 1}^n r(X_{j_1}) \sum\limits_{\substack{j_2 = 1 \\j_2 \neq i_1}}^nr(X_{j_2}) \cdots \sum\limits_{\substack{j_m=1\\j_{m} \notin \{i_1, \ldots, i_{m-1}\}}}^n r(X_{j_m})}
    \quad \text{for distinct}\,(i_1, \ldots, i_m) 
\end{align*}
Here, the comment `$\text{for distinct}\,(i_1, \ldots, i_m)$' implies that the weights are zero otherwise.
Most statistical software have standard implementations for sampling from $\Psi_{\texttt{REPL}}$ and $\Psi_{\texttt{NO-REPL}}$ (known simply as sampling with or without replacement). We now detail a number of ways to sample a sequence $(i_1, \ldots, i_m)$ from $\Psi_{\texttt{DRPL}}$. 
The first two sampling methods are exact, the third sampling method is approximate.

\subsection{Acceptance-rejection sampling with \texorpdfstring{$\Psi_{\texttt{REPL}}$}{} as proposal}\label{subsec:acceptance-rejection-repl}
Given a sample $\bX_n$, one can sample from $\Psi_{\texttt{DRPL}}$ by acceptance-rejection sampling from $\Psi_{\texttt{REPL}}$, by drawing sequences $(i_1, \ldots, i_m)$ from $\Psi_{\texttt{REPL}}$ until one gets a draw that is distinct, which is then used as the draw from $\Psi_{\texttt{DRPL}}$. This is a valid sampling method for $\Psi_{\texttt{DRPL}}$, because for any distinct sequence $(i_1, \ldots, i_m)$ 
we have
\begin{align*}
    \P_Q(\Psi_\texttt{REPL} = (i_1, \ldots, i_m) \mid \Psi_\texttt{REPL} \,\text{distinct}, \bX_n) 
    &= \frac{\P_Q(\Psi_\texttt{REPL} = (i_1, \ldots, i_m), \Psi_\texttt{REPL} \,\text{distinct}\mid \bX_n)}{\P_Q(\Psi_\texttt{REPL} \,\text{distinct} \mid \bX_n)} \\
    &= \frac{\P_Q(\Psi_\texttt{REPL} = (i_1, \ldots, i_m) \mid \bX_n)}{\P_Q(\Psi_\texttt{REPL} \,\text{distinct} \mid \bX_n)} \\
    &= \frac{w^{\texttt{REPL}}_{(i_1, \ldots, i_m)}}{\sum\limits_{\substack{(j_1, \ldots, j_m) \\ \text{distinct}}} w^{\texttt{REPL}}_{(j_1, \ldots, j_m)}} \\
    &= \frac{\prod_{\ell=1}^m r(X_{i_\ell})}{\sum\limits_{(j_1, \ldots, j_m)}\prod_{\ell=1}^m r(X_{j_\ell})}
    \frac{\sum\limits_{(j_1, \ldots, j_m)}\prod_{\ell=1}^m r(X_{j_\ell})}{\sum\limits_{\substack{(j_1, \ldots, j_m) \\ \text{distinct}}}\prod_{\ell=1}^m r(X_{j_\ell})} \\
    &= w_{(i_1,\ldots, i_m)}^{\texttt{DRPL}}\\
    & = \P_Q(\Psi_\texttt{DRPL} = (i_1, \ldots, i_m)\mid \bX_n).
\end{align*}
By integrating over $\bX_n$, this implies that $\P_Q(\Psi_\texttt{REPL} = (i_1, \ldots, i_m) \mid \Psi_\texttt{REPL} \,\text{distinct}) =  \P_Q(\Psi_\texttt{DRPL} = (i_1, \ldots, i_m))$.
\Cref{prop:repl-becomes-dist} shows that under certain assumptions, the probability of sampling a distinct sample from $\Psi_{\texttt{REPL}}$ converges to $1$.

\subsection{Acceptance-rejection sampling with \texorpdfstring{$\Psi_{\texttt{NO-REPL}}$}{} as proposal}\label{sec:no-repl-sampling}
If $m$ is large compared to $n$, it may be that most of the samples drawn from $\Psi_{\texttt{REPL}}$ are not distinct, and so the acceptance rejection scheme in \cref{subsec:acceptance-rejection-repl} may take too many attempts to produce a distinct sample. 
As an alternative, one can use $\Psi_{\texttt{NO-REPL}}$ as a proposal distribution for an acceptance-rejection sampler, which is typically faster, since $\Psi_{\texttt{NO-REPL}}$ has the same support as $\Psi_{\texttt{DRPL}}$.
Given a sample $\bX_n$, we thus need to identify an $M$ such that $$\forall\,\, \text{distinct } (i_1, \ldots, i_m): \frac{\P_Q(\Psi_{\texttt{DRPL}} = (i_1, \ldots, i_m)\mid \bX_n)}{\P_Q(\Psi_{\texttt{NO-REPL}} = (i_1, \ldots, i_m)\mid \bX_n)} \leq M.$$ 
We have 
$$
    \frac{\P_Q(\Psi_{\texttt{DRPL}} = (i_1, \ldots, i_m)\mid \bX_n)}{\P_Q(\Psi_{\texttt{NO-REPL}} = (i_1, \ldots, i_m) \mid \bX_n)} = \frac{w^{\texttt{DRPL}}_{(i_1, \ldots, i_m)}}{w^{\texttt{NO-REPL}}_{(i_1, \ldots, i_m)}} = \frac{\sum\limits_{j_1} r(X_{j_1}) \sum\limits_{j_2 \neq i_1} r(X_{j_2}) \cdots \sum\limits_{j_m \neq i_1, \ldots, i_{m-1}} r(X_{j_m})}{\sum\limits_{\substack{(j_1, \ldots, j_m) \\ \text{distinct}}}\prod_{\ell=1}^m r(X_{j_\ell})}.
$$
The denominator does not depend on $i_1, \ldots, i_m$, and the numerator can be upper bounded by
\begin{align*}
    (1-0)(1-p_{(1)})(1 - p_{(1)} - p_{(2)}) \cdots (1 - p_{(1)} - \ldots - p_{(m-1)}),
\end{align*}
where $p_{(1)} = \min \{r(X_1), \ldots, r(X_n)\}$ is the smallest of the weights, $p_{(2)}$ is the second smallest, etc. Thus, we can choose
\begin{align*}
    M \coloneqq \frac{(1-0)(1-p_{(1)})(1 - p_{(1)} - p_{(2)}) \cdots (1 - p_{(1)} - \ldots - p_{(m-1)})}{\sum\limits_{\substack{(j_1, \ldots, j_m) \\ \text{distinct}}}\prod_{\ell=1}^m r(X_{j_\ell})}.
\end{align*}
We now proceed with an ordinary acceptance-rejection sampling scheme: We sample a (distinct) sequence $(i_1, \ldots, i_m)$ from $\Psi_{\texttt{NO-REPL}}$ and an independent, uniform variable $V$ on the interval $(0,1)$. We accept $(i_1, \ldots, i_m)$ if 
\begin{align*}
    V &\leq \frac{\P_Q(\Psi_{\texttt{DRPL}}=(i_1, \ldots, i_m) \mid \bX_n)}{M\cdot\P_Q(\Psi_{\texttt{NO-REPL}}=(i_1, \ldots, i_m) \mid \bX_n)} \\
    &= \frac{(1-0)(1-p_{i_1})(1 - p_{i_1} - p_{i_2}) \cdots (1 - p_{i_1} - \ldots - p_{i_{m-1}})}{(1-0)(1-p_{(1)})(1 - p_{(1)} - p_{(2)}) \cdots (1 - p_{(1)} - \ldots - p_{(m-1)})}.
\end{align*}
Here, we have used that the denominator of $M$ cancels with the normalization constant of $\P_Q(\Psi_{\texttt{DRPL}}=(i_1, \ldots, i_m) \mid \bX_n)$.
If the sample is not accepted, we
draw another
sample from $\Psi_{\texttt{NO-REPL}}$
until one sample is accepted.

\subsection{Approximate Gibbs sampling starting from \texorpdfstring{$\Psi_{\texttt{NO-REPL}}$}{}}\label{sec:no-repl-gibbs}
There are cases, where the sampling schemes presented in \cref{subsec:acceptance-rejection-repl,sec:no-repl-sampling} 
do not yield an accepted sample in a reasonable amount of time
(this is typically due to $m$ being too large compared to $n$, see \cref{assump:m-rate-n}). 
In such cases, one can get an approximate sample of $\Psi_{\texttt{DRPL}}$ by sampling $\Psi_{\texttt{NO-REPL}}$ and shifting it towards $\Psi_{\texttt{DRPL}}$ using a Gibbs sampler \citep{geman1984stochastic}. 

Let therefore $(i_1, \ldots, i_m)$ be an initial (distinct) sample from $\Psi_{\texttt{NO-REPL}}$, and define $i_{-\ell}$ to be the sequence without the $\ell$'th entry. The Gibbs sampler sequentially samples $i_\ell$ from the conditional distribution $j \mid i_{-\ell}$ in $\Psi_{\texttt{DRPL}}$. 
To compute this conditional probability let $\Psi_{\texttt{DRPL}}^\ell$ be the $\ell$'th index of a sample. Then
\begin{align*}
    \P_Q(\Psi_{\texttt{DRPL}}^\ell = j | \Psi_{\texttt{DRPL}}^{-\ell} &= i_{-\ell}) \\
    &= \frac{\P_Q(\Psi_{\texttt{DRPL}} = (i_1, \ldots, j, \ldots,  i_m))}{\P_Q(\Psi_{\texttt{DRPL}}^{-\ell} = i_{-\ell})} \\
    &= \frac{r(X_{i_1})\cdots r(X_j) \cdots r(X_{i_m})}{\sum_{v \notin i_{-\ell}} r(X_{i_1})\cdots r(X_v)\cdots r(X_{i_m})} = \frac{r(X_{j})}{\sum_{v\notin i_{-\ell}} r(X_v)},
\end{align*}
i.e., the conditional distribution of one index $i_\ell$ given $i_{-\ell}$ is just a weighted draw among $\{i_1, \ldots, i_m\}\backslash i_{-\ell}$. This is simple to sample from and the Gibbs sampler now iterates through the indices $(i_1, \ldots, i_m)$, at each 
iteration replacing the index $i_\ell$ by a sample from the conditional given $i_{-\ell}$. Iterating this a large number of times produces an approximate sample from $\Psi_{\texttt{DRPL}}$.

\subsection{Sampling with replacement}\label{subsec:testing-with-REPL}
Instead of the sampling scheme $\Psi_{\texttt{DRPL}}$ presented above, we can also use weighted sampling with replacement, which we denote $\Psi_{\texttt{REPL}}$ (see \cref{sec:sampling-DRPL} for details). 
Sampling from $\Psi_{\texttt{REPL}}$ is simpler than from $\Psi_{\texttt{DRPL}}$, and while sampling from $\Psi_{\texttt{REPL}}$ is in some cases disadvantageous for testing (e.g., if we test whether the target distribution has a point mass), if the test is not prone to duplicate data points, testing based on $\Psi_{\texttt{REPL}}$ may be advantageous over $\Psi_{\texttt{DRPL}}$ (further examination is needed to clarify this relationship). When sampling without weights, \citet{bickel2012resampling} present regularity conditions on the test statistic that guarantee consistency even with $m=o(n)$.

Here we show that under additional assumptions, the probability of a non-distinct sample from $\Psi_{\texttt{REPL}}$ converges to $0$. Consider the following strengthening of \cref{assump:finite-second-moment}.
\begin{assumpenum}[start=3]
    \item \label{assump:bounded-weights} 
    There exists $L \in \R$ such that for all $v \geq 1$, $\E_Q[r(X_i)^{v+1}] \leq L^v$. 
\end{assumpenum}
This is for instance trivially satisfied if $r(X_i)$ is $Q$-a.s. bounded by a constant $L$.
The following proposition shows that under \cref{assump:m-rate-n,assump:bounded-weights} the probability of drawing a distinct sample from $\Psi_{\texttt{REPL}}$ converges to $1$.
\begin{proposition}[Asymptotic equivalence of REPL and DREPL for bounded weights]\label{prop:repl-becomes-dist}
    Let $\tau: \cQ \rightarrow \cP$ be a distributional shift for which a known map $r:\mathcal{X}\rightarrow[0,\infty)$ exists, satisfying $\tau(q)(x)\propto r(x)q(x)$, see \cref{eq:tauform}.
    Consider an arbitrary $Q \in \cQ$ and $P = \tau(Q)$.
    Let $m=m(n)$ be a resampling size and let $\Psi_{\texttt{REPL}}$ 
    be the weighted resampling with replacement defined in \cref{subsec:method}. Then, if $m$ and $Q$ satisfy \cref{assump:m-rate-n,assump:bounded-weights}, it holds that
    \begin{align*}
        \lim_{n\rightarrow\infty}\P_Q(\Psi^{r,m}_{\texttt{REPL}}(\bX_n, U) \,\text{distinct})=1.
    \end{align*}
\end{proposition}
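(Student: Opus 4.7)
The plan is to show that the probability of drawing a non-distinct sample from $\Psi_{\texttt{REPL}}^{r,m}$ is of order $O(m^2/n)$, which under \cref{assump:m-rate-n} vanishes. The key observation is that, conditional on $\bX_n$, the $m$ draws from $\Psi_{\texttt{REPL}}$ are i.i.d. from the categorical distribution on $\{1,\ldots,n\}$ with probabilities $r(X_j)/S_n$, where $S_n \coloneqq \sum_{k=1}^n r(X_k)$. Hence, for any two draw positions $\ell \neq \ell'$,
\begin{equation*}
    \P_Q(i_\ell = i_{\ell'} \mid \bX_n) = \sum_{j=1}^n \left(\frac{r(X_j)}{S_n}\right)^2 = \frac{T_n}{S_n^2},
\end{equation*}
where $T_n \coloneqq \sum_{j=1}^n r(X_j)^2$. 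A union bound over the $\binom{m}{2}$ pairs then gives the conditional bound $\P_Q(\Psi_{\texttt{REPL}} \text{ not distinct} \mid \bX_n) \leq \binom{m}{2} T_n/S_n^2$.

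To take expectations, I would first normalize (without loss of generality) so that $\E_Q[r(X_1)] = 1$, which is valid since $\Psi_{\texttt{REPL}}$ is invariant under rescaling of $r$. Under \cref{assump:bounded-weights} with $v = 1$, we have $\E_Q[r(X_1)^2] \leq L$, so by Chebyshev's inequality $\P_Q(S_n \leq n/2) \leq 4(L-1)/n$. Split the event according to whether $S_n \geq n/2$:
\begin{equation*}
    \P_Q(\Psi_{\texttt{REPL}} \text{ not distinct}) \leq \binom{m}{2}\E_Q\!\left[\frac{T_n}{S_n^2}; S_n \geq n/2\right] + \P_Q(S_n < n/2).
\end{equation*}
On the event $\{S_n \geq n/2\}$, $T_n/S_n^2 \leq 4T_n/n^2$, and since $\E_Q[T_n] = n\E_Q[r(X_1)^2] \leq nL$, the first term is bounded by $\binom{m}{2}\cdot 4L/n = O(m^2/n)$. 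The second term is $O(1/n)$. Under \cref{assump:m-rate-n}, $m = o(\sqrt{n})$ ensures $m^2/n \to 0$, so the combined bound vanishes and we conclude $\P_Q(\Psi_{\texttt{REPL}}^{r,m}(\bX_n,U)\text{ distinct}) \to 1$.

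The argument is essentially a birthday-problem calculation with random weights, and the only mildly delicate step is the event split controlling the denominator $S_n^2$. Note that only the $v=1$ case of \cref{assump:bounded-weights} (i.e., a finite second moment) is actually used in this proposition; the stronger higher-moment control in \cref{assump:bounded-weights} becomes relevant in the downstream asymptotic level result \cref{cor:asymptotic-level-SIR-REPL}, where one must additionally control the total variation between $\Psi_{\texttt{REPL}}$ and $\Psi_{\texttt{DRPL}}$ after conditioning on distinctness.
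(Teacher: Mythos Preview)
Your proof is correct and takes a genuinely different---and considerably more elementary---route than the paper's.

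The paper expresses $\P_Q(\Psi_{\texttt{REPL}}\text{ not distinct})$ as the expectation of a ratio whose numerator is $n^{-m}\sum_{\text{not distinct}}\prod_\ell \bar r(X_{i_\ell})$ and whose denominator is the full sum. It then invokes a separate combinatorial lemma (\cref{lemma:sum-nondistinct-weights}) that partitions non-distinct sequences by the number $k$ of distinct values, bounds each block via $\E_Q[\bar r(X)^t]\le L^{t-1}$ (this is where the full \cref{assump:bounded-weights} enters), compares successive terms in $k$, and finishes with Stirling's approximation (\cref{lemma:convergence-m-n}). Your approach bypasses all of this: conditioning on $\bX_n$, a pairwise collision has probability $T_n/S_n^2$, the union bound gives $\binom{m}{2}T_n/S_n^2$, and a Chebyshev split on $\{S_n\ge n/2\}$ reduces everything to $O(m^2/n)$. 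The paper's route does not buy anything extra for this particular proposition; it appears to reuse machinery built for other results. Your route, as you observe, shows that only a finite second moment---i.e.\ \cref{assump:finite-second-moment}---is needed here.

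One small correction to your closing remark: the proof of \cref{cor:asymptotic-level-SIR-REPL} in the paper does not involve any separate total-variation control. It simply decomposes $\P_Q(\varphi_m(\Psi_{\texttt{REPL}})=1)$ by conditioning on distinctness and uses the exact identity $\P_Q(\varphi_m(\Psi_{\texttt{REPL}})=1\mid\text{distinct})=\P_Q(\varphi_m(\Psi_{\texttt{DRPL}})=1)$ from \cref{subsec:acceptance-rejection-repl}, together with \cref{thm:asymptotic-level-SIR}. Since \cref{thm:asymptotic-level-SIR} already holds under \cref{assump:finite-second-moment}, your argument in fact implies that \cref{cor:asymptotic-level-SIR-REPL} also holds under \cref{assump:finite-second-moment} alone, without the stronger \cref{assump:bounded-weights}.
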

As a corollary to \cref{thm:asymptotic-level-SIR}, we also have pointwise asymptotic level of a test when $\Psi_{\texttt{REPL}}$ is used instead of $\Psi_{\texttt{DRPL}}$.
\begin{corollary}[Pointwise asymptotics -- REPL]\label{cor:asymptotic-level-SIR-REPL}
    Assume the same setup and assumptions as in \cref{thm:asymptotic-level-SIR} and additionally assume \cref{assump:bounded-weights}.
    Let $\Psi_{\texttt{REPL}}$ 
    be the weighted resampling with replacement defined in \cref{subsec:method} and let $\psi^r_n$ be the REPL-based resampling test defined by $\psi^r_n(\bX_n, U) \coloneqq \varphi_m(\Psi_{\texttt{REPL}}^{r,m}(\bX_n, U))$.
    Then, it holds that
    \begin{equation*}
        \limsup_{n\rightarrow\infty} \P_{Q}(\psi^r_{n}(\mathbf{X}_n, U)=1)=\alpha_\varphi.
    \end{equation*}
    The same statement holds when replacing both $\limsup$'s (including the one in $\alpha_\varphi$) with $\liminf$'s.
\end{corollary}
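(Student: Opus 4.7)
The plan is to reduce the REPL-based test directly to the DRPL-based test by conditioning on the event that the drawn sequence is distinct, and then to use the two ingredients already available: Theorem~\ref{thm:asymptotic-level-SIR} (which handles $\Psi_{\texttt{DRPL}}$) and Proposition~\ref{prop:repl-becomes-dist} (which controls the probability of non-distinct draws under the stronger assumption~\ref{assump:bounded-weights}).

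First, I would invoke the identity derived in Section~\ref{subsec:acceptance-rejection-repl}: conditional on $\bX_n$ and on the event $D_n \coloneqq \{\Psi^{r,m}_{\texttt{REPL}}(\bX_n, U) \text{ distinct}\}$, the sequence drawn by $\Psi_{\texttt{REPL}}$ has exactly the same conditional law as $\Psi_{\texttt{DRPL}}$. Integrating over $\bX_n$ gives $\P_Q(\varphi_m(\Psi^{r,m}_{\texttt{REPL}}(\bX_n, U)) = 1 \mid D_n) = \P_Q(\varphi_m(\Psi^{r,m}_{\texttt{DRPL}}(\bX_n, U)) = 1)$ whenever $\P_Q(D_n) > 0$.

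Next, I would decompose
\begin{equation*}
\P_Q(\psi^r_n(\bX_n, U) = 1) = \P_Q(\varphi_m(\Psi^{r,m}_{\texttt{REPL}}(\bX_n, U)) = 1, D_n) + \P_Q(\varphi_m(\Psi^{r,m}_{\texttt{REPL}}(\bX_n, U)) = 1, D_n^c).
\end{equation*}
The second term is bounded above by $\P_Q(D_n^c)$, which vanishes as $n\to\infty$ by Proposition~\ref{prop:repl-becomes-dist}. For the first term, by the equivalence above it equals $\P_Q(D_n) \cdot \P_Q(\varphi_m(\Psi^{r,m}_{\texttt{DRPL}}(\bX_n, U)) = 1)$, and since $\P_Q(D_n) \to 1$, the $\limsup$ (resp.\ $\liminf$) of this product equals the $\limsup$ (resp.\ $\liminf$) of $\P_Q(\varphi_m(\Psi^{r,m}_{\texttt{DRPL}}(\bX_n, U)) = 1)$, which is $\alpha_\varphi$ by Theorem~\ref{thm:asymptotic-level-SIR} (noting that \ref{assump:bounded-weights} implies \ref{assump:finite-second-moment}, so that theorem applies).

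I do not anticipate a main obstacle: the work has already been done in Proposition~\ref{prop:repl-becomes-dist} (whose proof presumably uses a union bound over pairwise collisions together with the moment bound from~\ref{assump:bounded-weights} and the rate $m = o(\sqrt{n})$ from~\ref{assump:m-rate-n}) and in Theorem~\ref{thm:asymptotic-level-SIR}. The one subtlety is handling the $\limsup$/$\liminf$ carefully when combining the two contributions, but since $\P_Q(D_n) \to 1$ and the second summand vanishes, standard sub/super-additivity of $\limsup$ and $\liminf$ gives the claim in both cases.
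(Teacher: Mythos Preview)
Your proposal is correct and follows essentially the same approach as the paper: decompose on the event that the $\texttt{REPL}$ draw is distinct, use the identity from Section~\ref{subsec:acceptance-rejection-repl} to reduce the distinct case to $\Psi_{\texttt{DRPL}}$, invoke Proposition~\ref{prop:repl-becomes-dist} to send the non-distinct probability to zero, and conclude via Theorem~\ref{thm:asymptotic-level-SIR}. (Your parenthetical guess about the proof of Proposition~\ref{prop:repl-becomes-dist} is not quite how the paper does it, but that is irrelevant here since you only use the proposition as a black box.)
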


\section{Target heuristic for choosing \texorpdfstring{$m$}{}} \label{sec:targetheuris}
The target heuristic is a data driven procedure to choose $m$ in finite sample settings, see Section~\ref{subsec:test-resampling-worked}. It is summarized in  
Algorithm~\ref{alg:tuning-m}.
\begin{algorithm}[ht]
\caption{Target heuristic: Choosing $m$ by testing resampling validity}
\begin{algorithmic}[1]
\Statex \textbf{Input:} Data $\bX_{n}$, shift factor $r(x^A)$, threshold $\alpha_c$, initial target size $m_0$, increment size $\Delta$, repetitions $K$, conditional goodness-of-fit test $\kappa$.
\State $\texttt{qt} \gets $ $\alpha_c$-quantile of $\texttt{mean}(U_1, \ldots, U_K)$, where $U_i \sim Unif(0, 1)$
\State $m \gets m_0$
\State $\texttt{m\_valid} \gets \texttt{true}$ 
\While {\texttt{m\_valid}}
\For{$k=1, \ldots K$}
    \State $\texttt{res}_k \gets \kappa(\Psi^{r,m}(\bX_n))$
\EndFor
\If {$\texttt{mean}(\texttt{res}_1, \ldots, \texttt{res}_K) > \texttt{qt}$}
    \State $m \gets m + \Delta$
\Else
    \State $\texttt{m\_valid} \gets \texttt{false}$
    \State $m \gets m - \Delta$
\EndIf
\EndWhile
\State \Return m
\end{algorithmic}
\label{alg:tuning-m}
\end{algorithm}

\section{Additional experiments}
\subsection{Assumption~\texorpdfstring{\cref{assump:m-rate-n}}{} when sample size increases}
As indicated by the results in \cref{sec:exp-assumptions-a2-a3}, \cref{assump:m-rate-n} may sometimes be too strict of an assumption, in the sense that level can be attained also when \cref{assump:m-rate-n} is violated.
We explored this for a continuous data example in \cref{sec:exp-assumptions-a2-a3}, but here, we investigate
the effect of violating \cref{assump:m-rate-n} as the sample size $n$ increases for a binary setting.

We simulate data $\bX_n$ from a binary distribution $\P_Q(X = 0) = 0.9$, $\P_Q(X = 1) = 0.1$, and consider the target distribution $P$ where the probabilities are flipped
, i.e. $\P_P(X = 0) = 0.1$, $\P_P(X = 1) = 0.9$. 
We consider the hypothesis $H_0: \P_P(X = 1) > 0.8$, which is clearly satisfied for $P$, but may not be detected by the resample if $m = n^a$ for $a>0.5$ chosen too large. We repeat the experiment $200$ times and compute the resulting rejection rates for various sample sizes and rates $m=n^a$.
The results are shown in \cref{fig:rate}.
\begin{figure}[t]
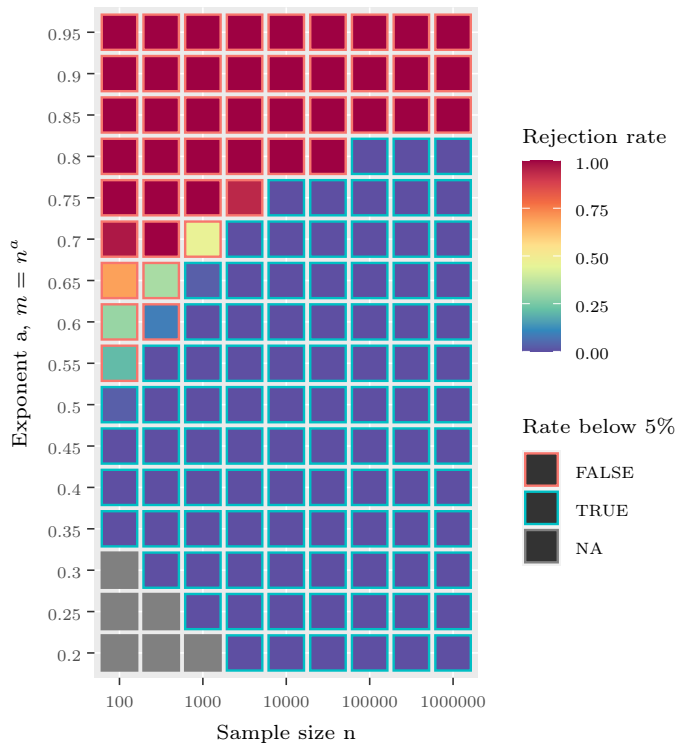

    \centering
    \scriptsize
% [inline block 4: 1 envs, 29691 chars -> data_tex | \begin{tikzpicture}[x=1pt,y=1pt] \definecolor{fillColor}{RGB}{255,255,255}...]

     \caption{Rejection rates when $m = n^a$ points are resampled for various exponents $a$ and sample sizes $n$. Each tile represents a combination of sample size $n$ and exponent $a$, and the color indicates the rate at which the hypothesis $\P_P(X=1)>0.8$ is rejected in the resampled data, when resampling $m=n^a$ points; ideally this should be low (blue) since we consider a target distribution where $\P_P(X=1) = 0.9$. Even though this is not provided by our theoretical results, it seems that level is possible even for rates larger than $0.5$.}
    \label{fig:rate}
\end{figure}
Under \cref{assump:m-rate-n}, \cref{thm:asymptotic-level-SIR} guarantees asymptotic level. 
Indeed, at all sample sizes $n$, the test has the correct level, 
when \cref{assump:m-rate-n} is satisfied (that is, when $a < 0.5)$. Yet, though this is not guaranteed by theory, we have indications of asymptotic level also for $a > 0.5$.

\subsection{Model selection under covariate shift}
\label{sec:model_selection_experiment}
In this section, we apply our testing method to the problem of model selection under covariate shift as discussed in Section~\ref{sec:modselundercovsh}. We generate a data set $D \coloneqq \{(X_i, Y_i)\}_{i=1}^n$ of size $n = 3'000$. Each $(X_i, Y_i)$ is drawn i.i.d.\ according to the following data generating process:
\begin{align*}
    (X^1, X^2) \sim \texttt{GaussianMixture}(
    \begin{bmatrix} 3 & 3 \end{bmatrix}, 
    \begin{bmatrix} -3 & -3 \end{bmatrix}) \qquad
    Y \coloneqq \begin{cases} 
     \sin(X_2 + \epsilon_{Y}),
     &\mbox{if } X_1 \geq 0 \\
     \sin(3 + X_1 + \epsilon_{Y}),
     & \mbox{if } X_1 < 0,
    \end{cases}
\end{align*}
where $\texttt{GaussianMixture}(
    \begin{bmatrix} 3 & 3 \end{bmatrix}, 
    \begin{bmatrix} -3 & -3 \end{bmatrix})$ is an even mixture (i.e., $p = 0.5$) of two 2-dimensional Gaussian distributions with means $\mu_1= (3,3)^\top$, $\mu_2=(-3,-3)^\top$ 
    and unit covariance matrix, and $\epsilon_Y$ is a standard Gaussian $\mathcal{N}(0, 1)$-variable. \cref{fig:model_selection_data} illustrates a sample of size $1'000$ from this data generating process.
    
We randomly split the data $D$ into a training set $D_{train}$ of size $2'000$ and a test set $D_{test}$ of size $1'000$. Using $D_{train}$, we train two candidate classifiers, namely logistic regression (\texttt{LR}) and random forest (\texttt{RF}) to predict $Y$ from $X$. Both models are trained using the \texttt{Scikit-Learn} Python package \citep{scikit-learn} with default parameters. We consider the area under the curve (AUC) as the scoring function, where we denote the AUC scores for the models \texttt{LR} and \texttt{RF} by $\text{AUC}(\texttt{LR})$ and $\text{AUC}(\texttt{RF})$, respectively. Then, we apply our resampling approach on $D_{test}$ to test whether \texttt{LR} outperforms \texttt{RF} when the distribution of
$(X^1, X^2)$ is changed to a single 2-dimensional Gaussian distribution with mean $\mu = (3,3)^\top$ and unit covariance matrix. In this experiment, we choose the resampling size $m$ by the target heuristic (see Algorithm~\ref{alg:tuning-m}) and assume that the shift factor $r(x) \coloneqq p^*(x)/q^*(x)$ is known, where $q^*(x)$ is a pdf of the mixture of Gaussian $\texttt{GaussianMixture}(
\begin{bmatrix} 3 & 3 \end{bmatrix}, 
\begin{bmatrix} -3 & -3 \end{bmatrix})$ and $p^*(x)$ is a pdf of the Gaussian $\mathcal{N}((3, 3)^\top, \boldsymbol{I}_2)$. We employ DeLong's test \citep{delong1988comparing} to test the hypothesis $\text{AUC}(\texttt{LR}) \leq \text{AUC}(\texttt{RF})$
using the R-package \texttt{pROC} \citep{pROC}.

We repeat the experiment $500$ times and report in \cref{fig:model_selection_exp} how many times our resampling test (resampling) rejects and returns that AUC(\texttt{LR}) $>$ AUC(\texttt{RF}). As a benchmark, we also report the corresponding rejection rate when we perform the test directly on a sample from the target distribution (oracle) in which the covariate distribution is changed to $p^*(x)$. We also report the rejection rate when we perform the test on the observed test set directly $D_{test}$ (observed), without resampling it first.

As shown in \cref{fig:model_selection_exp}, under the observed distribution we have $\text{AUC}(\texttt{LR}) \leq \text{AUC}(\texttt{RF})$ (the rejection rate is $0$ in the observed sample). However, under the target distribution  $\text{AUC}(\texttt{LR})$ is higher than $\text{AUC}(\texttt{RF})$ (the rejection rate is $1$ in the target sample). 
Our resampling approach yields 
high power against the alternative hypothesis, even without having access to the oracle information but using resampling instead.

\begin{figure}[t]
    \centering
    \includegraphics[width=0.6\textwidth]{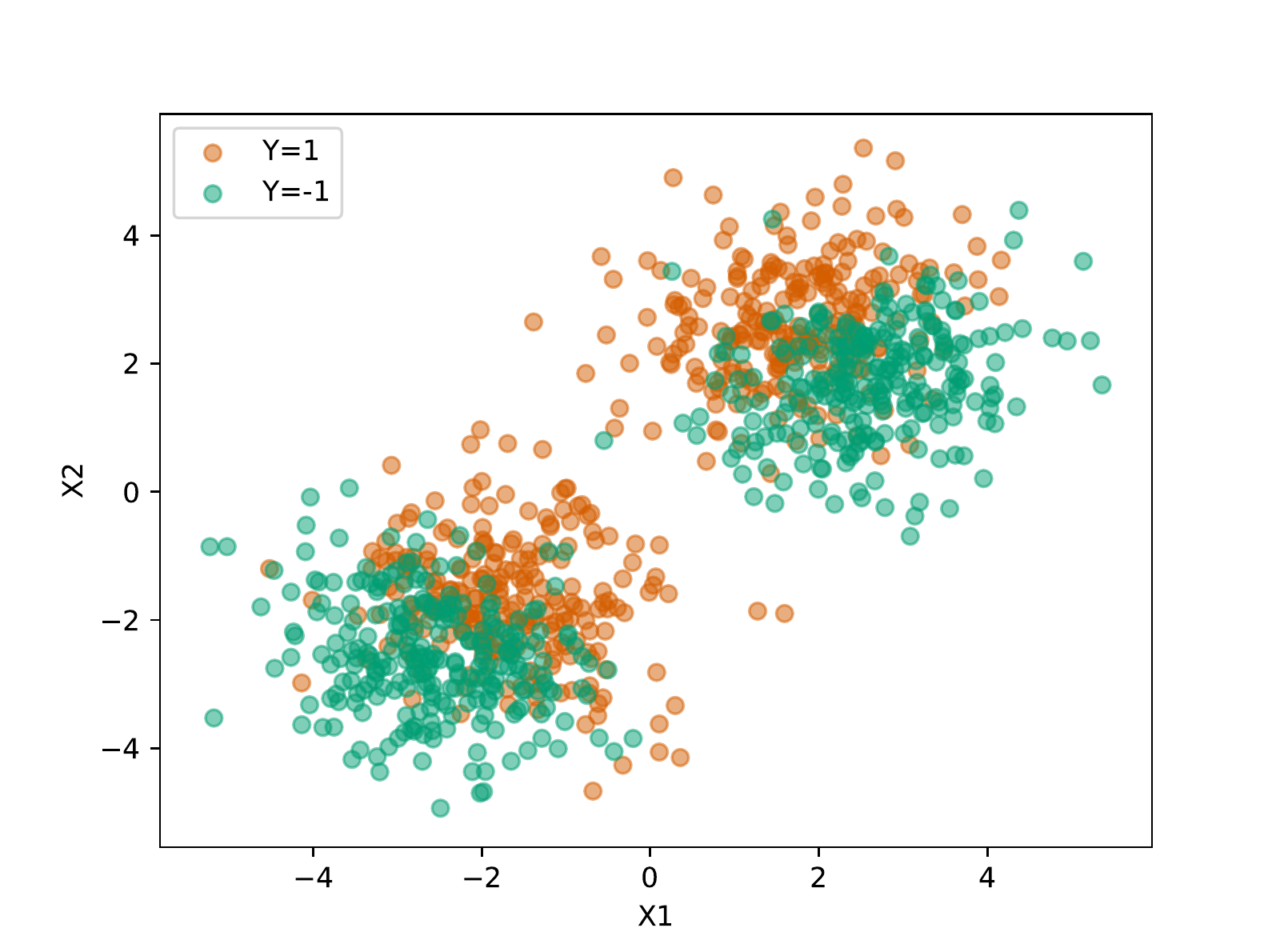}
    \caption{Sample from the data generating process for the experiment described in \cref{sec:model_selection_experiment}.}
    \label{fig:model_selection_data}
\end{figure}

\begin{table}[t]
    \centering
	\begin{tabular}{ccc}
    \toprule
resampling & oracle & observed \\ \hline 
$0.874 \pm 0.029$ & 1.0 & 0.0 \\      \bottomrule
    \end{tabular}
    \caption{Rejection rates of the test with alternative AUC(\texttt{LR}) $>$ AUC(\texttt{RF}) over 500 repetitions. The resampling test shows high power to detect the alternative.}
    \label{fig:model_selection_exp}
\end{table}

\subsection{Additional simulation details}\label{sec:sim-details}
In this section, we state the models used for generating the data in \cref{sec:expverma,sec:icpplu}.

\subsubsection*{\texorpdfstring{\cref{sec:expverma}}{}}
The binary data in \cref{sec:expverma} was generated by first sampling hyper-parameters:
\begin{align*}
    &p_H \sim \textrm{Dirichlet}^{1\times 4}(3, 3, 3, 3) \quad\quad p_1 \sim B^{1\times 1}(1, 1) \quad\quad p_2 \sim B^{2 \times 4}(1, 1) p_3 \sim B^{1\times 2}(1, 1) \\ 
    &\mathcal{G}: p_4 \sim B^{2, 1}(1,1) \quad\quad \mathcal{H}: p_4 \sim B^{2, 2}(1,1),
\end{align*}
where $B(a,b)$ is a Beta distribution and the superscript indicates the dimension of the sampled parameter matrix. 
The matrices $p_1, \ldots, p_4$ correspond to conditional probability tables given parent variables in the graphs $\mathcal{G}$ or $\mathcal{H}$. The distributions are the same when sampling from $\mathcal{G}$ and $\mathcal{H}$, except for $X^4$, which has an additional parent in $\mathcal{H}$. 
In each repetition, given hyper-parameters, a data set is sampled from the structural equation model
\begin{align*}
    &H \sim \textrm{choice}(\{1, \ldots, 4\}, \textrm{weights} = p_H) \quad \quad X^1 \sim \textrm{Bernoulli}(p_1) \\ 
    &X^2 \sim  \textrm{Bernoulli}(p_{2_{X^1, H}}) \quad\quad X^3 \sim \textrm{Bernoulli}(p_{3_{X^2}}) \\
    &\mathcal{G}: X^4 \sim \textrm{Bernoulli}(p_{4_{X^3}}) \quad\quad \mathcal{H}: X^4 \sim \textrm{Bernoulli}(p_{4_{X^3, X^1}})
\end{align*}
where the subscript $p_{2_{X^1, U}}$ indicate that for an outcome of $(X^1, U)$, the Bernoulli distribution uses the probability in the corresponding entry of $p_2$ (and similar for $p_3$ and $p_4$).

The Gaussian data in \cref{sec:expverma} was generated by the structural equation model
\begin{align*}
    H \coloneqq \epsilon_H
    \quad\quad X^1 \coloneqq \epsilon_{X^1}
    \quad\quad X^2 \coloneqq X^1 + H + \epsilon_{X^2} \\
    X^3 \coloneqq X^2 + 2\epsilon_{X^3}
    \quad\quad X^4 \coloneqq \theta \cdot X^1 + X^3 + H + \epsilon_{X^4}
\end{align*}
where $\epsilon_H, \epsilon_{X^j} \sim \mathcal{N}(0, 1)$, and $\theta \in\{0, 0.3\}$ indicates the absence or presence of the edge $X^1\rightarrow X^4$.

The non-Gaussian data in \cref{sec:expverma} was generated by the structural equation model
\begin{align*}
    H \coloneqq \tfrac{1}{2}\cdot\epsilon_H\cdot \epsilon_H
    \quad\quad X^1 \coloneqq \gamma_{X^1}
    \quad\quad X^2 \coloneqq X^1\cdot H + \epsilon_{X^2} \\
    X^3 \coloneqq X^2\cdot X^2 + \tfrac{3}{2}\epsilon_{X^3}
    \quad\quad X^4 \coloneqq \theta \cdot X^1 + X^3 + H + \epsilon_{X^4},
\end{align*}
where $\epsilon_H, \epsilon_{X^j} \sim \mathcal{N}(0,1)$ and $\gamma_{X^1}$ follows a $\Gamma(2)$-distribution, and $\theta \in\{0, 0.3\}$ indicates the absence or presence of the edge $X^1\rightarrow X^4$.

\subsubsection*{\texorpdfstring{\cref{sec:icpplu}}{}}
The data in \cref{sec:icpplu} was generated by the structural equation model
\begin{align*}
        X^1 \coloneqq \epsilon_{X^1}
        \quad\quad X^3 \coloneqq \epsilon_{X^3}
        \quad\quad X^2 \coloneqq X^3 + X^1 + 2 \epsilon_{X^2} \\
        Y  \coloneqq X^2 + X^3 + 0.3 \epsilon_Y
        \quad\quad X^4 \coloneqq -Y + X^2 + X^3 + 0.7 \epsilon_{X^4}
\end{align*}

\section{Proofs}\label{sec:proofs}

\subsection{Proof of \texorpdfstring{\Cref{thm:asymptotic-level-SIR}}{}}
\begin{proof}[Proof of \cref{thm:asymptotic-level-SIR}]
We show the statement only for $\limsup$, the corresponding statement for $\liminf$ follows by replacing $\limsup$ with $\liminf$ everywhere. 

Let $p$ and $q$ denote the respective densities of $P$ and $Q$ with respect to the dominating measure $\mu$.
By assumption $p = \tau(q)$, so $p(x) \propto r(x)q(x)$. 
Let $\bar{r}$ be the normalized version of $r$ satisfying $p(x) = \bar{r}(x) q(x)$. 
Recall that we call a sequence $(i_1, \ldots, i_m)$ distinct if for all $\ell \neq \ell'$ we have $i_\ell \neq i_{\ell'}$.
The resampling scheme $\Psi_{\texttt{DRPL}}$, defined by \cref{eq:SIR-weights}, samples from the space of distinct sequences $(i_1, \ldots, i_m)$, where every sequence has probability 
$w_{(i_1, \ldots, i_m)} \propto \prod_{\ell=1}^m r(X_{i_\ell})$. The normalization constant here is the sum over the weights in the entire space of distinct sequences, that is,
    \begin{align*}
        w_{(i_1, \ldots, i_m)} = \frac{\prod_{\ell=1}^m r(X_{i_\ell})}{\sum_{\substack{(j_1, \ldots, j_m)\\ \text{distinct}}} \prod_{\ell=1}^m r(X_{j_\ell})} = \frac{\prod_{\ell=1}^m \bar{r}(X_{i_\ell})}{\sum_{\substack{(j_1, \ldots, j_m)\\ \text{distinct}}} \prod_{\ell=1}^m \bar{r}(X_{j_\ell})}.
    \end{align*}
    Thus, taking an expectation involving $\varphi_m(\Psi_{\texttt{DRPL}}^{r, m}(\bX_n, U))$, amounts to evaluating $\varphi_m$ in all distinct sequences $X_{i_1}, \ldots, X_{i_m}$ and weighting with the probabilities $w_{(i_1, \ldots, i_m)}$.
    \begin{align}
    \P_{Q}(\varphi_m(\Psi_{\texttt{DRPL}}^{r, m}(\bX_n, U)) = 1)
        = \E_{Q}\left[\frac{\frac{1}{\tfrac{n!}{(n-m)!}}\sum\limits_{\substack{(i_1, \ldots, i_m) \\ \text{distinct}}}
        \left(\prod\limits_{\ell=1}^m \bar{r}(X_{i_\ell})\right)\mathds{1}_{\{\varphi_m(X_{i_1}, \ldots, X_{i_m}) = 1\}}}
        {\frac{1}{\tfrac{n!}{(n-m)!}}\sum\limits_{\substack{(j_1, \ldots, j_m) \\ \text{distinct}}}\prod\limits_{\ell=1}^m \bar{r}(X_{j_\ell})}\right], \label{eq:proof-ratio-of-averages}
    \end{align}
where we divide by the number of distinct sequences $\tfrac{n!}{(n-m)!}$ in both numerator and denominator.

Let $c(n,m)$ and $d(n,m)$ be the numerator and denominator terms of \cref{eq:proof-ratio-of-averages}, i.e.,
\begin{align*}
    c(n,m) &\coloneqq \frac{1}{\tfrac{n!}{(n-m)!}}\sum\limits_{\substack{(i_1, \ldots, i_m) \\ \text{distinct}}}
        \left(\prod\limits_{\ell=1}^m \bar{r}(X_{i_\ell})\right)\mathds{1}_{\{\varphi_m(X_{i_1}, \ldots, X_{i_m}) = 1\}}, \\
        d(n,m) &\coloneqq \frac{1}{\tfrac{n!}{(n-m)!}}\sum\limits_{\substack{(j_1, \ldots, j_m) \\ \text{distinct}}}\prod\limits_{\ell=1}^m \bar{r}(X_{j_\ell}).
\end{align*}
We want to show that $\limsup_{n\rightarrow\infty}\E_Q\left[\frac{c(n,m)}{d(n,m)}\right]=\alpha_{\varphi}$. To see this, define for all $\delta>0$ the set $A_{\delta}\coloneqq\{|d(n,m)-1|\leq\delta\}$. It holds for all $\delta\in(0,1)$ that
\begin{align*}
    \E_Q\left[\frac{c(n,m)}{d(n,m)}\right]
    &=\E_Q\left[\frac{c(n,m)}{d(n,m)} \mathds{1}_{A_{\delta}}\right]+\E_Q\left[\frac{c(n,m)}{d(n,m)} \mathds{1}_{A_{\delta}^c}\right]\\
    &\leq\E_Q\left[\frac{c(n,m)}{1-\delta} \mathds{1}_{A_{\delta}}\right]+\E_Q\left[\frac{c(n,m)}{d(n,m)} \mathds{1}_{A_{\delta}^c}\right]\\
    &\leq\E_Q\left[\frac{c(n,m)}{1-\delta}\right]+\P_Q\left(A_{\delta}^c\right)\\
    &=\frac{1}{1-\delta}\P_P(\varphi_m(X_1, \ldots, X_m)=1)+\P_Q\left(A_{\delta}^c\right),
\end{align*}
where we used that $\frac{c(n,m)}{d(n,m)}\leq 1$ and \cref{lemma:sum-distinct-weights} (a). Further combining Chebyshev's inequality with \cref{lemma:sum-distinct-weights} (b) and (d), it follows that
\begin{equation*}
    \lim_{n\rightarrow\infty}\P_Q\left(A_{\delta}^c\right)\leq\lim_{n\rightarrow\infty}\frac{\E_Q[(d(n,m)-1)^2]}{\delta^2}=\lim_{n\rightarrow\infty}\frac{\VAR_Q(d(n,m))}{\delta^2}=0.
\end{equation*}
Hence, using that $\limsup_{k\rightarrow\infty}\P_{P}(\varphi_k(X_1,\ldots,X_k)=1)=\alpha_{\varphi}$ we have shown for all $\delta\in(0,1)$ that
\begin{equation}
\label{eq:upperbound_limit}
    \limsup_{n\rightarrow\infty}\E_Q\left[\frac{c(n,m)}{d(n,m)}\right]\leq \frac{1}{1-\delta}\alpha_{\varphi}.
\end{equation}
Similarly, we also get for all $\delta\in (0,1)$ the following lower bound
\begin{align*}
    \E_Q\left[\frac{c(n,m)}{d(n,m)}\right]
    &=\E_Q\left[\frac{c(n,m)}{d(n,m)} \mathds{1}_{A_{\delta}}\right]+\E_Q\left[\frac{c(n,m)}{d(n,m)}\mathds{1}_{A_{\delta}^c}\right]\\
    &\geq\E_Q\left[\frac{c(n,m)}{1+\delta} \mathds{1}_{A_{\delta}}\right]\\
    &\geq\frac{1}{1+\delta}\E_Q\left[c(n,m)\right],
\end{align*}
where in the last inequality we used that $c(n,m)\geq 0$. Again using \cref{lemma:sum-distinct-weights} (a) and that $\limsup_{k\rightarrow\infty}\P_{P}(\varphi_k(X_1,\ldots,X_k)=1)=\alpha_{\varphi}$, we get that  for all $\delta\in(0,1)$ that
\begin{equation}
\label{eq:lowerbound_limit}
    \limsup_{n\rightarrow\infty}\E_Q\left[\frac{c(n,m)}{d(n,m)}\right]\geq \frac{1}{1+\delta}\alpha_{\varphi}.
\end{equation}
Combining both limits \eqref{eq:upperbound_limit} and \eqref{eq:lowerbound_limit} and 
using that $\delta\in(0,1)$ is arbitrary, this proves that
\begin{equation*}
    \limsup_{n\rightarrow\infty}\E_Q\left[\frac{c(n,m)}{d(n,m)}\right]=\alpha_{\varphi},
\end{equation*}
which completes the proof of \cref{thm:asymptotic-level-SIR}.

\end{proof}

\begin{lemma}[Distinct draws]\label{lemma:sum-distinct-weights}

    Let $P\in \cP$ and $Q\in\mathcal{Q}$ have densities $p$ and $q$ with respect to a dominating measure $\mu$. Let $\bar{r}:\mathcal{X}\rightarrow [0,\infty)$ satisfy for all $x\in\mathcal{X}$ that $p(x)=\bar{r}(x)q(x)$. Let $c(n,m)$ and $d(n,m)$ be defined by
    \begin{align}
        c(n,m) &\coloneqq \frac{1}{\tfrac{n!}{(n-m)!}}\sum\limits_{\substack{(i_1, \ldots, i_m) \\ \text{distinct}}}
        \left(\prod\limits_{\ell=1}^m \bar{r}(X_{i_\ell})\right)\mathds{1}_{\{\varphi_m(X_{i_1}, \ldots, X_{i_m}) = 1\}}, \label{eq:conv-distinct-weights-num} \\
        d(n,m) &\coloneqq \frac{1}{\tfrac{n!}{(n-m)!}}\sum\limits_{\substack{(j_1, \ldots, j_m) \\ \text{distinct}}}\prod\limits_{\ell=1}^m \bar{r}(X_{j_\ell}). \label{eq:conv-distinct-weights-denom}
    \end{align}
    Then, if $m$ and $Q$ satisfy \cref{assump:m-rate-n} and \cref{assump:finite-second-moment} it holds that
    \begin{enumerate}
        \item[(a)] $\E_Q[c(n,m)]=\P_{P}(\varphi_m(X_1,\ldots,X_m)=1)$,
        \item[(b)] $\E_Q[d(n,m)]=1$,
        \item[(c)] $\lim_{n\rightarrow\infty}\VAR_Q[c(n,m)]=0$,
        \item[(d)] $\lim_{n\rightarrow\infty}\VAR_Q[d(n,m)]=0$.
    \end{enumerate}
\end{lemma}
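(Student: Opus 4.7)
Claims (a) and (b) are immediate computations. For any distinct tuple $(i_1,\dots,i_m)$ the components $X_{i_1},\dots,X_{i_m}$ are i.i.d.\ from $Q$, so a change of measure via $\bar r q = p$ gives
\[
\E_Q\Bigl[\prod_{\ell=1}^m \bar r(X_{i_\ell})\,\mathds{1}_{\{\varphi_m(X_{i_1},\dots,X_{i_m})=1\}}\Bigr]
= \P_P(\varphi_m(X_1,\dots,X_m)=1),
\]
and the analogous identity without the indicator yields $1$. Since both $c(n,m)$ and $d(n,m)$ are averages over the $n!/(n-m)!$ distinct tuples, linearity of expectation gives (a) and (b).

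The substantive work is in (c) and (d), and I would prove (d) first. Expanding $d(n,m)^2$ and taking the expectation, I would classify pairs of distinct tuples $((i_1,\dots,i_m),(j_1,\dots,j_m))$ by the cardinality $k = |\{i_1,\dots,i_m\}\cap\{j_1,\dots,j_m\}|$. For such a pair, independence across non-shared indices and the fact that shared indices contribute $\bar r(X)^2$ give $\E_Q[\prod_\ell \bar r(X_{i_\ell}) \prod_\ell \bar r(X_{j_\ell})] = K^k$ with $K := \E_Q[\bar r(X)^2] = \E_Q[r(X)^2]/(\E_Q[r(X)])^2 < \infty$ by \cref{assump:finite-second-moment}. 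The number of such pairs equals $\tfrac{n!}{(n-m)!}\binom{m}{k}\binom{n-m}{m-k}\,m!$, which yields
\[
\E_Q[d(n,m)^2] \;=\; \binom{n}{m}^{-1}\sum_{k=0}^m \binom{m}{k}\binom{n-m}{m-k} K^k.
\]
Vandermonde's identity handles the $k=0$ term together with the $1$'s and leaves
\[
\VAR_Q(d(n,m)) \;=\; \binom{n}{m}^{-1}\sum_{k=1}^m \binom{m}{k}\binom{n-m}{m-k}(K^k-1) \;=\; V(n,m).
\]

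The main obstacle is showing $V(n,m)\to 0$ under \cref{assump:m-rate-n}. For this I would bound
\[
\binom{n}{m}^{-1}\binom{m}{k}\binom{n-m}{m-k} \;\le\; \binom{m}{k}\cdot\frac{m(m-1)\cdots(m-k+1)}{(n-m+1)(n-m)\cdots(n-2m+k+2)} \;\le\; \frac{m^{2k}}{k!\,(n-2m)^k},
\]
so the tail bound
\[
V(n,m) \;\le\; \sum_{k=1}^\infty \frac{1}{k!}\Bigl(\frac{K m^2}{n-2m}\Bigr)^k \;=\; \exp\!\Bigl(\tfrac{Km^2}{n-2m}\Bigr)-1
\]
tends to zero as $n\to\infty$ thanks to $m=o(\sqrt n)$. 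This yields (d).

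For (c) I would repeat the variance computation but track the indicators. For $k=0$ the two blocks are independent, giving the term $\binom{n}{m}^{-1}\binom{n-m}{m}\,\P_P(\varphi_m=1)^2$; for $k\ge 1$ I bound the indicators by $1$ and recover the $K^k$ factors. Combining with $\E_Q[c(n,m)]^2 = \P_P(\varphi_m=1)^2$ gives
\[
\VAR_Q(c(n,m)) \;\le\; V(n,m) \;+\; \bigl(1-\tbinom{n}{m}^{-1}\tbinom{n-m}{m}\bigr),
\]
and the second term is $\prod_{j=0}^{m-1}\frac{n-j}{n-m-j}-1$ reciprocated, which is itself $O(m^2/n)\to 0$ (this is essentially the $K=1$ degenerate case of the bound on $V(n,m)$), so (c) follows.
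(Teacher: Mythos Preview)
Your argument is correct and lands on exactly the same combinatorial object as the paper: both routes reduce the variance of $d(n,m)$ (and the bound for $c(n,m)$) to the hypergeometric-weighted sum $\binom{n}{m}^{-1}\sum_{k\ge 1}\binom{m}{k}\binom{n-m}{m-k}K^k$ with $K=\E_Q[\bar r(X)^2]$. The paper reaches this via the Hoeffding variance formula for U-statistics (citing Serfling), recognising $c(n,m)$ and $d(n,m)$ as U-statistics with kernel $\prod_\ell \bar r(X_{i_\ell})\,\delta_m$ and bounding the conditional variances $\zeta_v\le K^v$; you reach it by a direct second-moment expansion, grouping pairs of ordered tuples by the size of their intersection. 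These are two presentations of the same identity.

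The genuine difference is in showing the sum vanishes. The paper proves a separate lemma bounding the ratio $s_{k+1}/s_k\le \tfrac{Km^2}{2(n-2m+2)}$ and summing a geometric series; your bound $\binom{n}{m}^{-1}\binom{m}{k}\binom{n-m}{m-k}\le \tfrac{m^{2k}}{k!\,(n-2m)^k}$ leading to $\exp(Km^2/(n-2m))-1$ is arguably cleaner and gives the same $O(m^2/n)$ rate. (Your displayed intermediate step with denominator $(n-m+1)\cdots(n-2m+k+2)$ has the wrong number of factors---it should be a $k$-term product, not $m-k$---but the final bound is correct: the exact identity is $\binom{m}{k}\,(m)_k(n-m)_{m-k}/(n)_m$, and pairing the $(n-m)_{m-k}$ factors with the first $m-k$ factors of $(n)_m$ leaves $(m)_k$ over the last $k$ factors of $(n)_m$, each at least $n-m+1$.) For part~(c) the paper treats $c$ and $d$ uniformly by bounding $\delta_m\le 1$ inside the $\zeta_v$ computation, whereas you keep the $k=0$ term exact and bound only $k\ge 1$; your resulting extra term $1-\binom{n}{m}^{-1}\binom{n-m}{m}$ is indeed $O(m^2/n)$, so both approaches close the same way.
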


\begin{proof}
(A) we first prove the statements for the means, i.e., (a) and (b), and (B) we then prove the statements for the variances, i.e., (c) and (d).

\hypertarget{linkto:parta}{\textbf{Part A (means):}}
Define $\delta_m := \mathds{1}_{\{\varphi_m(X_{i_1}, \ldots, X_{i_m}) =1\}}$ (for the case \cref{eq:conv-distinct-weights-num}) or $\delta_m :=1$ (for the case \cref{eq:conv-distinct-weights-denom}). Then, in both cases it holds that
\begin{align*}
        &\E_Q\left[\frac{1}{\tfrac{n!}{(n-m)!}}\sum\limits_{\substack{(i_1, \ldots, i_m) \\ \text{distinct}}} \left(\prod\limits_{\ell=1}^m \bar{r}(X_{i_\ell})\right)\delta_m\right] \\
        &= \frac{1}{\tfrac{n!}{(n-m)!}}\sum\limits_{\substack{(i_1, \ldots, i_m) \\ \text{distinct}}} \E_Q\left[\left(\prod\limits_{\ell=1}^m \bar{r}(X_{i_\ell})\right)\delta_m\right] \\
        &= \frac{1}{\tfrac{n!}{(n-m)!}}\sum\limits_{\substack{(i_1, \ldots, i_m) \\ \text{distinct}}} \int \left(\prod_{\ell=1}^m \bar{r}(x_{i_\ell})q(x_{i_\ell})\right)\delta_m \mathrm{d}\mu^m(x_{i_1}, \ldots, x_{i_m})\\
        &= \frac{1}{\tfrac{n!}{(n-m)!}} \sum\limits_{\substack{(i_1, \ldots, i_m) \\ \text{distinct}}} \int\left(\prod_{\ell=1}^m p(x_{i_\ell})\right)\delta_m \mathrm{d}\mu^m(x_{i_1}, \ldots, x_{i_m})\\
        &= \frac{1}{\tfrac{n!}{(n-m)!}}\sum\limits_{\substack{(i_1, \ldots, i_m) \\ \text{distinct}}} \E_P[\delta_m] \\
        &= \E_P[\delta_m]
    \end{align*}
    In the second and fourth equality, we use that $i_1,\ldots,i_{m}$ are all distinct, and in the last equality, we use that the number of distinct sequences $(i_1, \ldots, i_m)$ is $\tfrac{n!}{(n-m)!}$.
    Consequently the term in \cref{eq:conv-distinct-weights-num} has mean $\E_P[\mathds{1}_{\{\varphi_m(X_{i_1}, \ldots, X_{i_m}) =1\}}] = \P_P(\varphi_m(X_1, \ldots, X_m) = 1)$ and
    the term in \cref{eq:conv-distinct-weights-denom} has mean $1$. 
    
    \hypertarget{linkto:partb}{\textbf{Part B (variances):}} We begin by expressing $\frac{1}{\frac{n!}{(n-m)!}}\sum\limits_{\substack{(i_1, \ldots, i_m) \\ \text{distinct}}} \left(\prod\limits_{\ell=1}^m \bar{r}(X_{i_\ell})\right)\delta_m$
    as a U-statistic \citep{serfling1980approximation}. A U-statistic has the form
    \begin{equation}
    \label{eq:ustat-def}
        \frac{1}{\frac{n!}{(n-m)!}}\sum\limits_{\substack{(i_1, \ldots, i_m) \\ \text{distinct}}} h_m(Z_{i_1},\ldots,Z_{i_m})
    \end{equation}
    for some symmetric function
    $h_m(z_1,\ldots,z_m)$ (called a kernel function). In our case, the kernel function is $h_m(X_{i_1}, \ldots, X_{i_m}) := \prod_{\ell=1}^m \bar{r}(X_{i_\ell})\delta_m$. The variance of the corresponding U-statistic \citep[see][Section 5.2]{serfling1980approximation} is given by
    \begin{align}
        \VAR_Q\left(\frac{1}{\frac{n!}{(n-m)!}}\sum\limits_{\substack{(i_1, \ldots, i_m) \\ \text{distinct}}} \left(\prod\limits_{\ell=1}^m \bar{r}(X_{i_\ell})\right)\delta_m \right) = \binom{n}{m}^{-1}\sum_{v=1}^m \binom{m}{v}\binom{n-m}{m-v}\zeta_v \label{eq:var-u-statistic}
    \end{align}
    where for all $v \in \{1, \ldots, m\}$
    \begin{align*}
        \zeta_v := \VAR_Q\left(\E_Q[h_m(X_{i_1}, \ldots, X_{i_m}) \mid X_{i_1}, \ldots, X_{i_v}]\right).
    \end{align*}
    We now bound $\zeta_v$ from above by the second moment as follows
    \begin{equation*}
        \zeta_v\leq \E_Q\left[\E_Q[h_m(X_{i_1}, \ldots, X_{i_m}) \mid X_{i_1}, \ldots, X_{i_v}]^2\right].
    \end{equation*}
    Moreover, using that $\delta_m$ is upper bounded by $1$, we get for both cases \cref{eq:conv-distinct-weights-denom,eq:conv-distinct-weights-num} that
    \begin{align}
       \zeta_v &\leq\E_Q\left[\E_Q[h_m(X_{i_1}, \ldots, X_{i_m}) \mid X_{i_1}, \ldots, X_{i_v}]^2\right] \nonumber \\
       &\leq \E_Q\left[\E_Q\left[\prod_{\ell=1}^m \bar{r}(X_{i_\ell}) \mid X_{i_1}, \ldots, X_{i_v}\right]^2\right]. \label{eq:U-stat-zeta-j}
    \end{align}
    Next, since $(i_1, \ldots, i_m)$ are distinct, the variables $X_{i_1}, \ldots, X_{i_m}$ are independent. Hence we have that
        \begin{align}
        &\E_Q\left[\prod_{\ell=1}^m \bar{r}(X_{i_\ell}) \mid X_{i_1}, \ldots, X_{i_v}\right] \nonumber\\
        &= \left(\prod_{\ell=1}^v \bar{r}(X_{i_\ell})\right)\prod_{\ell=v+1}^m \E_Q \left[ \bar{r}(X_{i_\ell})\right] \nonumber\\
        &= \left(\prod_{\ell=1}^v \bar{r}(X_{i_\ell})\right), \label{eq:zeta-product-of-weights}
    \end{align}
    where the last equality follows because 
    \begin{align*}
        \E_Q[\bar{r}(X_{i_\ell})] = \int \bar{r}(x)q(x)\mathrm{d}\mu(x)
        = \int p(x) \mathrm{d}\mu(x)
        = 1.
    \end{align*}
    Next, combining \cref{eq:zeta-product-of-weights,eq:U-stat-zeta-j} we get that
    \begin{align*}
        \zeta_v &\leq \E_Q\left[\right(\prod_{\ell=1}^v \bar{r}(X_{i_\ell})\left)^2\right] \\
        &= \prod_{\ell=1}^v \E_Q\left[\bar{r}(X_{i_\ell})^2\right] \\
        &= \E_Q\left[\bar{r}(X_{i_1})^2\right]^v.
    \end{align*}
    Here, we again use the independence of the distinct terms. Plugging this into \cref{eq:var-u-statistic}, we get
    \begin{align*}
        &\VAR_Q\left(\frac{1}{\frac{n!}{(n-m)!}}\sum\limits_{\substack{(i_1, \ldots, i_m) \\ \text{distinct}}} \left(\prod\limits_{\ell=1}^m \bar{r}(X_{i_\ell})\right)\delta_m \right)\\
        &\leq \binom{n}{m}^{-1}\sum_{\ell=1}^m \binom{m}{\ell}\binom{n-m}{m-\ell}\E_Q\left[\bar{r}(X_{i_1})^2\right]^\ell.
    \end{align*}
    
    By \cref{assump:finite-second-moment}, $\E_Q\left[\bar{r}(X_{i_1})^2\right] < \infty$, so \cref{lemma:sum-of-binomials} implies that this converges to $0$ for $n \rightarrow \infty$.
    This shows that the variance converges to zero in both cases \cref{eq:conv-distinct-weights-denom,eq:conv-distinct-weights-num}, 
    which completes the proof of \cref{lemma:sum-distinct-weights}. 
\end{proof}

\begin{lemma}\label{lemma:sum-of-binomials}
    Let $m = o(\sqrt{n})$ as $n$ goes to infinity. Then for any $K \geq 0$, it holds that
\begin{align}\label{eq:U-statistic}
    \lim_{n\rightarrow\infty}\frac{1}{\binom{n}{m}}\sum_{\ell=1}^m \binom{m}{\ell}\binom{n-m}{m-\ell}K^\ell = 0.
\end{align}
\end{lemma}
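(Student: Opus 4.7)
The plan is to exploit the Vandermonde identity together with a ratio test on the summands. Write $b_\ell = \binom{m}{\ell}\binom{n-m}{m-\ell}K^\ell$ and $T(n,m) = \binom{n}{m}^{-1}\sum_{\ell=1}^m b_\ell$. The Vandermonde identity $\sum_{\ell=0}^m \binom{m}{\ell}\binom{n-m}{m-\ell} = \binom{n}{m}$ supplies the correct normalization scale: it tells us that $\binom{n}{m}$ is exactly the size of the ``full'' sum starting from $\ell=0$, which suggests that the subtleties come from the relative weight that $K^\ell$ puts on terms with large $\ell$. The case $K=0$ is trivial; I assume $K>0$ below.

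The key step is a direct computation of the ratio of consecutive summands,
\begin{equation*}
\frac{b_{\ell+1}}{b_\ell} \;=\; K\cdot\frac{(m-\ell)^2}{(\ell+1)(n-2m+\ell+1)},
\end{equation*}
which is maximized at $\ell=0$ and is thus bounded above by $K m^2/(n-2m+1)$ for all $\ell\in\{0,\ldots,m-1\}$. Since $m=o(\sqrt{n})$, this upper bound tends to $0$. In particular, for all sufficiently large $n$ we have $Km^2/(n-2m+1)\le 1/2$, so $b_{\ell+1}/b_\ell \le 1/2$ uniformly in $\ell$, and the geometric bound gives $\sum_{\ell=1}^m b_\ell \le 2b_1$.

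It then suffices to show $b_1/\binom{n}{m}\to 0$. This is a direct computation: $b_1 = Km\binom{n-m}{m-1}$, and expanding the binomials yields
\begin{equation*}
\frac{b_1}{\binom{n}{m}} \;=\; Km^2\cdot\frac{(n-m)(n-m-1)\cdots(n-2m+2)}{n(n-1)\cdots(n-m+1)} \;\le\; \frac{Km^2}{n-m+1},
\end{equation*}
where the inequality follows by pairing the $m-1$ numerator factors with any $m-1$ denominator factors (all of which are larger), leaving the leftover denominator factor $n-m+1$. Under $m=o(\sqrt{n})$, this bound vanishes, completing the argument.

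The only mild obstacle I anticipate is the careful bookkeeping in the ratio bound on $b_1/\binom{n}{m}$ — one must align the correct number of factors in numerator and denominator — but this is purely algebraic; once the ratio formula $b_{\ell+1}/b_\ell$ is in place, the rest is routine.
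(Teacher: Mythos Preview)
Your proposal is correct and follows essentially the same route as the paper's proof: both compute the ratio $b_{\ell+1}/b_\ell = K(m-\ell)^2/[(\ell+1)(n-2m+\ell+1)]$, bound it uniformly by a quantity of order $Km^2/n$ that tends to $0$, use a geometric-series argument to reduce everything to $b_1/\binom{n}{m}$, and then show this first term is $O(m^2/n)$. The only cosmetic difference is in the final bookkeeping—where the paper uses the identity $\binom{n}{m} = \tfrac{n}{m}\binom{n-1}{m-1}$ together with $\binom{n-m}{m-1}/\binom{n-1}{m-1}\le 1$, you expand the factorials directly and pair off factors—but these are equivalent manipulations.
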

\begin{remark}
    The Chu-Vandermonde identity states that $\frac{1}{\binom{n}{m}}\sum_{\ell=0}^m \binom{m}{\ell}\binom{n-m}{m-\ell} = 1$. In light of this identity, one may be surprised that when including the exponentially growing term, $K^\ell$, the sum vanishes. The reason is that the summation in \cref{eq:U-statistic} starts at $\ell = 1$, not $\ell = 0$, and since $n$ grows at least quadratically in $m$, $\binom{n-m}{m-\ell}$ for $\ell = 0$ dominates all the other summands as $n$ (and thereby also $m$) approaches $\infty$.
\end{remark}
\begin{proof}
Denote by $s_\ell$ the $\ell$'th summand, i.e., 
\begin{align*}
    s_\ell := \binom{m}{\ell}\binom{n-m}{m-\ell}K^\ell.
\end{align*}
It then holds for all $\ell\in\{1,\ldots,m-1\}$ that
\begin{align*}
    \frac{s_{\ell+1}}{s_\ell} &= 
    \frac{\binom{m}{\ell+1}\binom{n-m}{m-\ell-1}K^{\ell+1}}{\binom{m}{\ell}\binom{n-m}{m-\ell}K^{\ell}} \\
    &=\frac{\frac{m!}{(\ell+1)!(m-\ell-1)!}\frac{(n-m)!}{(m-\ell-1)!(n-2m+\ell+1)!}}{\frac{m!}{\ell!(m-\ell)!}\frac{(n-m)!}{(m-\ell)!(n-2m+\ell)!}}K \\
    &= \frac{(m-\ell)^2}{(\ell+1)(n-2m+\ell+1)}K \\
   &\leq \frac{m^2}{2(n-2m+2)}K.
\end{align*}
Since, by assumption, $m=o(\sqrt{n})$, this converges to $0$ as $n$ goes to infinity. In particular, there exists a constant $c\in(0,1)$ such that for $n$ sufficiently large it holds for all $\ell\in\{1,\ldots,m-1\}$ that $\frac{s_{\ell+1}}{s_\ell} \leq c$. This implies that $s_\ell \leq s_1 c^{\ell-1}$, and hence also
\begin{align*}
    \sum_{\ell=1}^m s_\ell \leq s_1 \sum_{\ell=1}^m c^{\ell-1} \leq s_1 \frac{1}{1-c},
\end{align*}
where for the last inequality we used the explicit solution of a geometric sum.
We now conclude the proof by explicitly bounding \cref{eq:U-statistic} as follows
\begin{align}
    \frac{1}{\binom{n}{m}}\sum_{\ell=1}^m \binom{m}{\ell}\binom{n-m}{m-\ell}K^\ell &= \frac{1}{\binom{n}{m}}\sum_{\ell=1}^m s_\ell \nonumber\\
    &< \frac{1}{1-c} \frac{s_1}{\binom{n}{m}} \nonumber\\
    &= \frac{K}{1-c}\frac{m \binom{n-m}{m-1}}{\binom{n}{m}} \nonumber\\
    &= \frac{K}{1-c} \frac{m^2}{n}\frac{\binom{n-m}{m-1}}{\binom{n-1}{m-1}},\label{eq:upperboundlemma3}
\end{align}
where in the last equation we use the relation $\binom{n}{m} = \frac{n}{m}\binom{n-1}{m-1}$. Using that by assumption $\lim_{n\rightarrow\infty}\frac{m^2}{n}=0$ and that $\frac{\binom{n-m}{m-1}}{\binom{n-1}{m-1}} \leq 1$, it immediately follows that \eqref{eq:upperboundlemma3} converges to zero. This completes the proof of \cref{lemma:sum-of-binomials}.
\end{proof}

\subsection{Proof of \texorpdfstring{\cref{thm:necessity-sqrt-n}}{}}
\begin{proof}
We explicitly construct an example hypothesis test for which the worst case
rate is achieved.
We construct a target and observation density on
$[0,\infty)$. First, for fixed $\alpha\in (0,1)$
and all $v\in\mathbb{N}\setminus \{0\}$ define
\begin{equation*}
  c_{v}\coloneqq (1-\alpha)^{\frac{1}{v}}
  \quad\text{and}\quad
  p_{v}\coloneqq 1-(v+1)^{-\epsilon},
\end{equation*}
and $c_0:=0$ and
$p_0:=0$, with $\epsilon\in (0, \frac{\ell}{\ell-1})$ 
to be chosen below. Then, for all $v\in\mathbb{N}$ define
\begin{equation*}
  f_{v}\coloneqq c_{v+1}-c_{v}
  \quad\text{and}\quad
  g_{v}\coloneqq p_{v+1}-p_{v}.
\end{equation*}
Using these sequences, we 
define
the following two densities:
\begin{itemize}
\item[(1)] Target density (cdf is denoted by $F$): For all $x \in \mathbb{R}$, we define
  \begin{equation*}
    f(x):=\sum_{v=0}^{\infty}\mathds{1}_{\{v\leq x< v+1\}}f_{v},
  \end{equation*}
\item[(2)] Observation density (cdf is denoted by $G$): For all $x \in \mathbb{R}$, we define
  \begin{equation*}
    g(x):=\sum_{v=0}^{\infty}\mathds{1}_{\{v\leq x < v+1\}}g_{v}.
  \end{equation*}
\end{itemize}
As $\lim_{v \rightarrow \infty} c_{v} = \lim_{v \rightarrow \infty} p_{v} =1$,
these functions are indeed
densities.
Moreover, one can verify that
for all $m \in \mathbb{N}$, we have
\begin{equation*}
  F(m)=c_{m}
  \quad\text{and}\quad
  G(m)=p_{m}.
\end{equation*}
A visualization of the densities is given in \cref{fig:densities}.

\begin{figure}[t]
    \centering
    \includegraphics[width=0.5\textwidth]{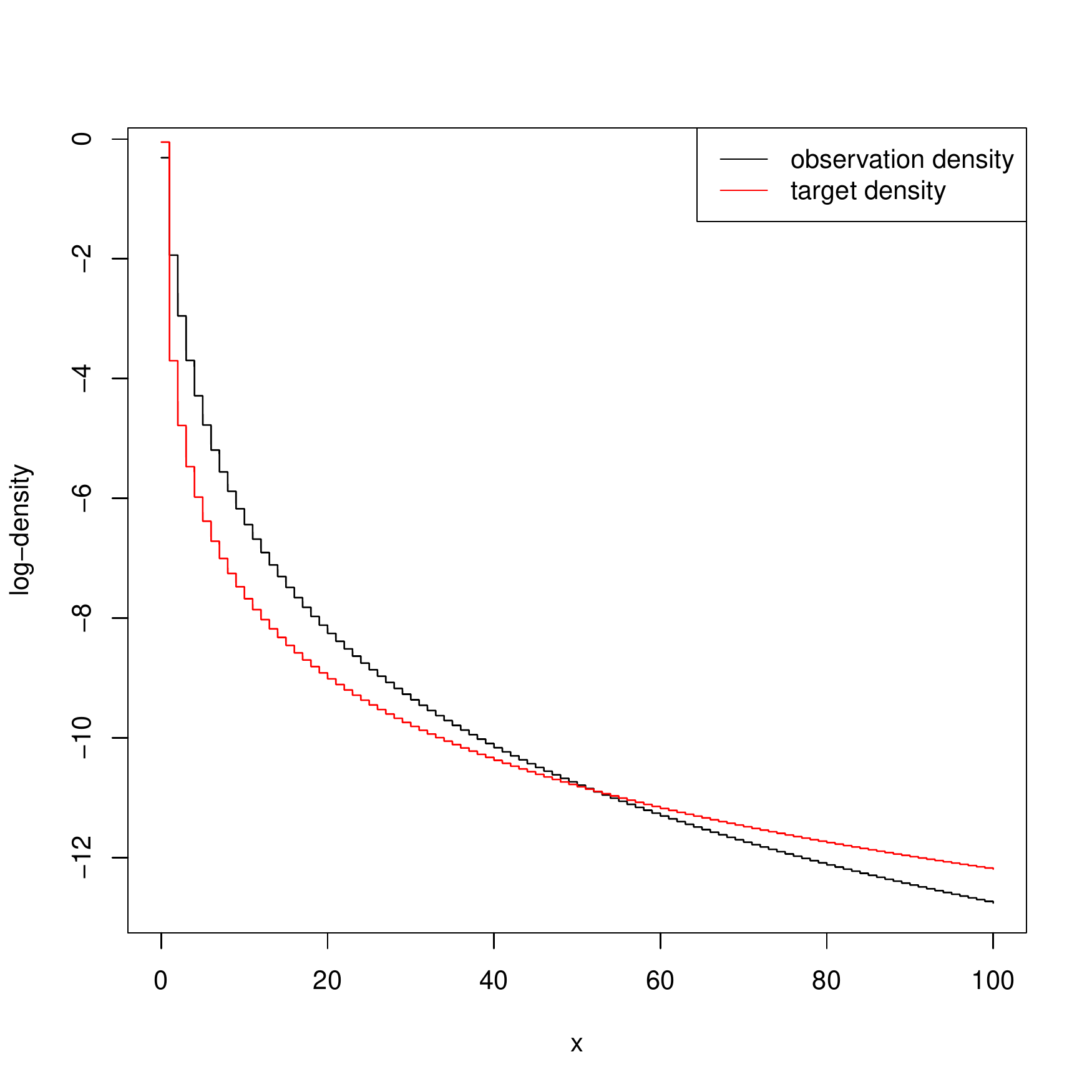}
    \caption{Visualization of densities in the proof of \cref{thm:necessity-sqrt-n} with $\epsilon=1.9$. The tail of the target density eventually becomes larger than that of the observation density.}
    \label{fig:densities}
\end{figure}
Finally, if we define, for all $x \geq 0$, $r(x):=\frac{f(x)}{g(x)}$ then we get
\begin{align*}
  \mathbb{E}_{g}\left[r(X)^\ell\right]
  &=\int_0^{\infty}r(x)^2g(x)dx\\
  &=\sum_{v=1}^{\infty}\left(\frac{f_{v}}{g_{v}}\right)^\ell g_{v}\\
  &=\sum_{v=1}^{\infty}\frac{(c_{v+1}-c_{v})^\ell}{(p_{v+1}-p_{v})^{\ell-1}}.
\end{align*}
This series converges for all possible parameter choices $\epsilon\in(0,\frac{\ell}{\ell-1})$ because
\begin{itemize}
\item[(a)] $(c_{v+1}-c_{v})\sim -\log(1-\alpha)v^{-2}$ as $v \rightarrow \infty$ and
\item[(b)] $(p_{v+1}-p_{v})\sim \epsilon v^{-(\epsilon+1)}$ as $v \rightarrow \infty$.
\end{itemize}
\begin{quote}
    (Indeed, both results follow from the mean value theorem as follows: First, for (a) applying
the mean value theorem to $x \mapsto (1-\alpha)^{1/x}$ implies that for all $v \in \mathbb{N}$ there exists $\xi_v \in [v, v+1]$ such that
\begin{align*}
    \frac{c_{v+1} - c_v}{v+1 - v} = -\log(1-\alpha)\frac{(1-\alpha)^{1/\xi_v}}{\xi_v^2}.
\end{align*}
We therefore get 
$$
\lim_{v \rightarrow \infty}
({c_{v+1} - c_v}){v^2} 
= 
-\log(1-\alpha)
\lim_{v \rightarrow \infty}
\frac{v^2(1-\alpha)^{1/\xi_v}}{\xi_v^2} =
-\log(1-\alpha).
$$
Similarly, for (b), we apply the mean value theorem to $x \mapsto 1-(x+1)^{-\epsilon}$ to get that for all $v \in \mathbb{N}$ there is a $\xi_v \in [v, v+1]$ such that
\begin{align*}
    \frac{p_{v+1}-p_v}{v+1-v} = \epsilon(\xi_v + 1)^{-(\epsilon+1)}.
\end{align*}
Again taking the limits we get
$$
\lim_{v \rightarrow \infty}
({p_{v+1} - p_v})v^{(\epsilon+1)} 
= \epsilon\lim_{v \rightarrow \infty}
(\xi_v + 1)^{-(\epsilon+1)}v^{(\epsilon+1)} = \epsilon.
$$
This completes the proofs of (a) and (b).)
\end{quote}

Now, consider the null hypothesis
\begin{equation*}
  H_0\coloneqq \{P_{f}\}.
\end{equation*}
For all $m\in\mathbb{N}$, we define the test
$\phi_m:[0,\infty)^m\rightarrow \{0,1\}$ for all $x_1,\ldots,x_n\in [0,\infty)$ by
\begin{equation*}
  \phi_m(x_1,\ldots,x_m)\coloneqq \mathds{1}_{\{\max(x_1,\ldots,x_m)\leq m\}}.
\end{equation*}
Then, it holds that
\begin{align*}
  \P_{f}(\phi_m(X_1,\ldots,X_m)=0)
  &=\P_{f}(\max(X_1,\ldots,X_m)> m)\\
  &=1-\P_{f}(X_i\leq m)^m\\
  &=1-F(m)^m\\
  &=1 - c_m^m \\
  &= 1 - [(1-\alpha)^{\tfrac{1}{m}}]^m \\
  &=\alpha.
\end{align*}
Hence, $\phi_m$ achieves valid level in the target distribution $f$. Our goal is now to show that any resampling procedure for testing under distributional shifts
with $m=n^q$ and $q> \frac{\ell-1}{\ell}$ cannot achieve asymptotic
level. Let $\Psi^m$ be the resampling scheme from the theorem that outputs a (not necessarily distinct) sample of size $m=n^q$. Then, it holds that
\begin{align*}
  \mathbb{P}_{g}(\phi_{m}(\Psi^m(X_1,\ldots,X_n))=1)
  &=\mathbb{P}_{g}(\max(\Psi^m(X_1,\ldots,X_n))\leq m)\\
  &\geq\mathbb{P}_{g}(\max(X_{1},\ldots,X_{n})\leq m)\\
  &=\mathbb{P}_{g}(X_{i}\leq m)^n\\
  &=G(m)^n\\
  &=p_m^n \\
  &=(1-(m+1)^{-\epsilon})^n\\
  &=\exp(n\log(1-(m+1)^{-\epsilon})).
\end{align*}
Taylor expanding $x \mapsto \log(x)$ in $x_0 = 1$ yields
\begin{align*}
	\log(x) = \log(x_0) + \frac{1}{x_0}(x-x_0) + \frac{1}{2}\frac{-1}{\xi_x^2}(x- x_0)^2
\end{align*}
for an $\xi_x \in (x, 1)$. Plugging in $x = 1 - (m+1)^{-\epsilon}$, we get
\begin{align} \label{eq:logoneminus}
\log(1-(m+1)^{-\epsilon}) = 	-(m+1)^{-\epsilon} -\frac{1}{2\xi_m^2}(m+1)^{-2\epsilon},
\end{align}
where $\xi_x$ is lower bounded by $(1-(m+1)^{-\epsilon})$, and so $\xi_m\to 1$ 
and thus $\xi_m^2\to 1$
for $m\to\infty$. 
Next, observe that
\begin{equation*}
  n(m+1)^{-\epsilon}\leq nm^{-\epsilon}=n^{1-q\epsilon}.
\end{equation*}
Now if we select $\frac{\ell-1}{\ell}>\epsilon>1/q$ (this is always possible because
$q>1/2$), we have that $q\epsilon > 1$, and it holds that
$\lim_{n\rightarrow\infty}n^{1-q\epsilon}=0$. 
For the same reason, we have $n (m+1)^{-2\epsilon} \leq nm^{-\epsilon} \to 0$ as $n \rightarrow \infty$. 
Combining this with the
\eqref{eq:logoneminus}, we get
$$
\lim_{n \rightarrow \infty} n\log(1-(m+1)^{-\epsilon}) = 0
$$
and thus 
\begin{equation*}
  \lim_{n\rightarrow\infty}\mathbb{P}_{g}(\phi_{m}(\Psi^m(X_{1},\ldots,X_{n}))=1)\geq 1.
\end{equation*}
This completes the proof of \cref{thm:necessity-sqrt-n}.
\end{proof}

\subsection{Proof of \texorpdfstring{\cref{thm:asymptotic-level-unknown-weights}}{}}
\begin{proof}
The proof is similar to the the proof of \cref{thm:asymptotic-level-SIR} 
but we will need to adjust for the estimation of the distributional shift factor. In particular, we will reprove the results in \cref{lemma:sum-distinct-weights} when using the estimator $\hat{r}_{n_1}$.

Fix any $P \in H_0$ and let $Q \in \tau^{-1}(\{P\})$. Denote by $p$ and $q$ their respective densities with respect to the dominating measure $\mu$.
We begin by recalling the details for the sample splitting procedure described in \cref{alg:resampling-and-testing-unknown-shift}: $\bX_n$ is split into two disjoint data sets $\bX_{n_1}$ and $\bX_{n_2}$ of sizes $n_1, n_2$, where $n_1 + n_2 = n$. The assumptions $n_1^a = \sqrt{n_2}$ and $m = o(\min(n^a, \sqrt{n}))$, ensure that $m = o(n_1^a)$ and $m = o(\sqrt{n_2})$.\footnote{When $n_1^a = \sqrt{n_2}$ and $n_1 + n_2 = n$, we have $n = n_1^{2a} + n_1$ and $n = n_2 + n_2^{1/(2a)}$. If $a > 1/2$, we have $m = o(\sqrt{n}) = o(\sqrt{n_1^{2a}}) = o(n_1^a)$, and $m = o(\sqrt{n}) = o(\sqrt{n_2})$. Similar arguments apply if $a < 1/2$.} We use $\bX_{n_1}$ to fit an estimator $\hat{r}_{n_1}$ of $r_q$ and then use $\bX_{n_2}$ for the resampling. When taking expectations over $\bX_{n_1}$ we use the notation $\E_{Q_1}$. Similarly, $\E_{Q_2}$ denotes an expectation over $\bX_{n_2}$. We write $\E_Q$ when taking expectations with respect to the entire sample $\bX_n$.
Let $I_1 \coloneqq \{1, \ldots, n_1\}$ and $I_2\coloneqq \{n_1 + 1, \ldots, n_1 + n_2\}$ be the indices of $\bX_{n_1}$ and $\bX_{n_2}$ respectively. 

Using the same argument as we used to derive \cref{eq:proof-ratio-of-averages} in the proof of \cref{thm:asymptotic-level-SIR}, we get that
\begin{align}
    \P_Q(\varphi_m(\Psi^{\hat{r}_{n_1}}(\bX_{n_2}, U) = 1))
    = \E_Q\left[\frac{\frac{1}{\tfrac{n_2!}{(n_2-m)!}}\sum\limits_{\substack{(i_1, \ldots, i_m)\\ \text{distinct from }I_2}}
    \left(\prod\limits_{\ell=1}^m \hat{r}_{n_1}(X_{i_v})\right)\mathds{1}_{\{\varphi_m(X_{i_1}, \ldots, X_{i_m}) = 1\}}}
    {\frac{1}{\tfrac{n_2!}{(n_2-m)!}}\sum\limits_{\substack{(i_1, \ldots, i_m) \\ \text{distinct from }I_2}}\prod\limits_{\ell=1}^m \hat{r}_{n_1}(X_{i_\ell})}\right],\label{eq:proof-ratio-of-averates-unkown-weights}
\end{align}
where $X_{i_\ell}$ are observations from $\bX_{n_2}$. As in the proof of \cref{lemma:sum-distinct-weights}, we prove the convergence in probability of the numerator and denominator in \cref{eq:proof-ratio-of-averates-unkown-weights} separately. Again, we do this in two steps: (A) We show that the means converge to the desired quantity and (B) we show that the variances converge to zero. First we show the following intermediate result. 

\textbf{Intermediate result:} 
Let $\epsilon(n_1) \coloneqq \sup_{x \in \cX} \E_{Q_1} \left|\left(\frac{\hat{r}_{n_1}(x)}{r_q(x)}\right)^{n_1^a} - 1 \right|$ and consider a sequence  $i_1, \ldots, i_m$ from the indices of $\mathbf{X}_2$. Then, for $n_1$ sufficiently large and using Jensen's inequality, it holds $Q_{n_2}$-a.s. that
\begin{align}
    \left|\E_{Q_1}\left[\prod\limits_{\ell=1}^m \frac{\hat{r}_{n_1}(X_{i_\ell})}{r_q(X_{i_\ell})}\right] -1 \right|\nonumber
    &\leq \E_{Q_1}\left[\left|\prod\limits_{\ell=1}^m \frac{\hat{r}_{n_1}(X_{i_\ell})}{r_q(X_{i_\ell})} - 1\right|\right]\nonumber \\
    &\leq \sup_{x_{i_1}, \ldots, x_{i_m} \in\cX} \E_{Q_1}\left[\left|\prod\limits_{\ell=1}^m \frac{\hat{r}_{n_1}(x_{i_\ell})}{r_q(x_{i_\ell})} - 1\right|\right] \nonumber \\
    &\leq \sup_{x\in\cX}\E_{Q_1}\left[\left|\left(\frac{\hat{r}_{n_1}(x)}{r_q(x)}\right)^m - 1\right|\right] \nonumber\\
    &\leq \sup_{x\in\cX}\E_{Q_1}\left[\left|\left(\frac{\hat{r}_{n_1}(x)}{r_q(x)}\right)^{n_1^a} - 1\right|\right] \nonumber\\
    &=\epsilon(n_1).\label{eq:proof-intermediate-bound1}
\end{align}
The last inequality holds because by assumption, $m = o(n_1^a)$ when $n_1 \to \infty$, so for $n_1$ sufficiently large, $n_1^a > m$. For any $m, k\in\mathbb{N}$ we have $c^{m+k}\geq c^m \geq 1$ if $c>1$ and $c^{m+k}\leq c^m \leq 1$ if $0\leq c \leq 1$, and in either case 
it holds that $|c^{m+k}-1|\geq |c^m-1|$.
Similarly, for $i_1, \ldots, i_m$ and $i'_1, \ldots, i'_m$ from $I_2$, we get $Q_{n_2}$-a.s. that
\begin{align}
    &\left|\E_{Q_1}\left[\left(\prod\limits_{\ell=1}^m \frac{\hat{r}_{n_1}(X_{i_\ell})}{r_q(X_{i_\ell})}\right)
    \left(\prod\limits_{\ell=1}^m \frac{\hat{r}_{n_1}(X_{i_\ell'})}{r_q(X_{i_\ell'})}\right)\right] -1 \right|\nonumber\\
    &\quad\leq \E_{Q_1}\left[\left|\left(\prod\limits_{\ell=1}^m \frac{\hat{r}_{n_1}(X_{i_\ell})}{r_q(X_{i_\ell})}\right)
    \left(\prod\limits_{\ell=1}^{m} \frac{\hat{r}_{n_1}(X_{i_\ell'})}{r_q(X_{i_\ell'})}\right)- 1\right|\right]\nonumber \\
    &\quad\leq \sup_{x\in\cX}\E_{Q_1}\left[\left|\left(\frac{\hat{r}_{n_1}(x)}{r_q(x)}\right)^{2m} - 1\right|\right] \nonumber\\
    &\quad\leq \sup_{x\in\cX}\E_{Q}\left[\left|\left(\frac{\hat{r}_{n_1}(x)}{r_q(x)}\right)^{n_1^a} - 1\right|\right] \nonumber\\
    &\quad=\epsilon(n_1),\label{eq:proof-intermediate-bound2}
\end{align}
using that for $n_1$ sufficiently large, $n_1^a > 2m$. This concludes the intermediate result.

Before showing parts A and B, we introduce some notation. Depending on whether we consider the numerator or denominator case, we define either $\delta_m \coloneqq \mathds{1}_{\{\varphi_m(X_{i_1}, \ldots, X_{i_m}) = 1\}}$ or $\delta_m\coloneqq 1$. Furthermore, we introduce for any function $r:\mathcal{X}\rightarrow (0,\infty)$ the following random variable
\begin{equation*}
    M(r)\coloneqq \frac{1}{\tfrac{n_2!}{(n_2-m)!}}\sum\limits_{\substack{(i_1, \ldots, i_m) \\ \text{distinct from }I_2}} \left(\prod\limits_{\ell=1}^m r(X_{i_\ell})\right) \delta_m.
\end{equation*}

\textbf{Part A (means):} Since $\hat{r}_{n_1}(x) = r_q(x) \tfrac{\hat{r}_{n_1}(x)}{r_q(x)}$, using the independence between $\bX_{n_1}$ and $\bX_{n_2}$ we get that
\begin{align}
    \E_Q\left[M(\hat{r}_{n_1})\right]
    &= \frac{1}{\tfrac{n_2!}{(n_2-m)!}}\sum\limits_{\substack{(i_1, \ldots, i_m) \\ \text{distinct from }I_2}}\E_Q\left[ \left(\prod\limits_{\ell=1}^m r_q(X_{i_\ell})\right)\left(\prod\limits_{\ell=1}^m \frac{\hat{r}_{n_1}(X_{i_\ell})}{r_q(X_{i_\ell})}\right) \delta_m\right] \nonumber\\
    &= \frac{1}{\tfrac{n_2!}{(n_2-m)!}}\sum\limits_{\substack{(i_1, \ldots, i_m) \\ \text{distinct from }I_2}}\E_{Q_2}\left[\left(\prod\limits_{\ell=1}^m r_q(X_{i_\ell})\right)\E_{Q_1}\left[\prod\limits_{\ell=1}^m \frac{\hat{r}_{n_1}(X_{i_\ell})}{r_q(X_{i_\ell})}\right] \delta_m\right]. \label{eq:numerator-with-estimated-weights}
\end{align}
We emphasize that the expectation $\E_{Q_1}$ only averages over the randomness in estimating $\hat{r}$, and does take expectations over $X_{i_\ell}$. 
which is drawn from $Q_{n_2}$.
Furthermore, using the intermediate result \cref{eq:proof-intermediate-bound1},
we get the following  upper bound
\begin{equation*}
    \E_Q\left[M(\hat{r}_{n_1})\right]
    \leq \E_{Q_2}\left[M(r_q)\right](1+\epsilon(n_1))
\end{equation*}
and lower bound
\begin{equation*}
    \E_Q\left[M(\hat{r}_{n_1})\right]
    \geq \E_{Q_2}\left[M(r_q)\right](1-\epsilon(n_1)).
\end{equation*}
Since $m=o(\sqrt{n_2})$, we can apply \cref{lemma:sum-distinct-weights} (a) and (b) to get that the means $\E_{Q}[M(\hat{r}_{n_1})]$ of the denominator and numerator converge to the desired values.

\textbf{Part B (variances):} Next, we show that both for the numerator and denominator the variance converges to zero. To this end, we expand the second moment as follows
\begin{align*}
    \E_Q\left[M(\hat{r}_{n_1})^2\right]
    &= \E_Q\left[\frac{1}{\tfrac{n_2!}{(n_2-m)!}\tfrac{n_2!}{(n_2-m)!}}\sum\limits_{\substack{(i_1, \ldots, i_m) \\ \text{distinct from }I_2}}\sum\limits_{\substack{i_1', \ldots, i_m' \\ \text{distinct from }I_2}} \left(\prod\limits_{\ell=1}^m \hat{r}_{n_1}(X_{i_\ell})\right)\left(\prod\limits_{\ell=1}^m \hat{r}_{n_1}(X_{i_\ell'})\right) \delta_m\delta'_m\right] \\
    &= \E_{Q_2}\Bigg[\frac{1}{\tfrac{n_2!}{(n_2-m)!}\tfrac{n_2!}{(n_2-m)!}}\sum\limits_{\substack{(i_1, \ldots, i_m) \\ \text{distinct from }I_2}}\sum\limits_{\substack{i_1', \ldots, i_m' \\ \text{distinct from }I_2}} 
    \left(\prod\limits_{\ell=1}^m r(X_{i_\ell})\right)
    \left(\prod\limits_{\ell=1}^m r(X_{i_\ell'})\right)\\
    &\quad\qquad\qquad
    \E_{Q_1}\left[\left(\prod\limits_{\ell=1}^m \frac{\hat{r}_{n_1}(X_{i_\ell})}{r_q(X_{i_\ell})}\right)
    \left(\prod\limits_{\ell=1}^m \frac{\hat{r}_{n_1}(X_{i_\ell'})}{r_q(X_{i_\ell'})}\right)\right]
    \delta_m\delta'_m\Bigg].
\end{align*}
Here $\delta'_m \coloneqq \mathds{1}_{\{\varphi_m(X_{i'_1}, \ldots, X_{i'_m}) = 1\}}$.  Using the intermediate result \cref{eq:proof-intermediate-bound2} we get the following upper bound 
\begin{equation*}
    \E_Q\left[M(\hat{r}_{n_1})^2\right]
    \leq \E_{Q_2}\left[M(r_q)^2\right](1+\epsilon(n_1))
\end{equation*}
and lower bound
\begin{equation*}
    \E_Q\left[M(\hat{r}_{n_1})^2\right]
    \geq \E_{Q_2}\left[M(r_q)^2\right](1-\epsilon(n_1)).
\end{equation*}
In \cref{lemma:sum-distinct-weights} (c) and (d) we have shown that $\lim_{n\rightarrow\infty}\VAR_{Q}(M(r_q))=0$.
Hence, combining these bounds on the second moment with the above bounds on the first moment shows that also $\lim_{n\rightarrow\infty}\VAR_{Q}(M(\hat{r}_{n_1}))=0$.
This completes the proof of \cref{thm:asymptotic-level-unknown-weights}.
\end{proof}

\subsection{Proof of \texorpdfstring{\Cref{thm:finite-level-SIR}}{}}
\begin{proof}[Proof of \cref{thm:finite-level-SIR}]
The first part of the proof follows that of \cref{thm:asymptotic-level-SIR}.
Let $p$ and $q$ denote the respective densities of $P$ and $Q$ with respect to the dominating measure $\mu$.
Recall that we call a sequence $(i_1, \ldots, i_m)$ distinct if for all $\ell \neq \ell'$ we have $i_\ell \neq i_{\ell'}$.
The resampling scheme $\Psi_{\texttt{DRPL}}$ samples from the space of distinct sequences $(i_1, \ldots, i_m)$, where every sequence has probability 
$w_{(i_1, \ldots, i_m)} \propto \prod_{\ell=1}^m r(X_{i_\ell})$. The normalization constant is the sum over the weights in the entire space of distinct sequences, that is,
    \begin{align*}
        w_{(i_1, \ldots, i_m)} = \frac{\prod_{\ell=1}^m r(X_{i_\ell})}{\sum_{\substack{(j_1, \ldots, j_m)\\ \text{distinct}}} \prod_{\ell=1}^m r(X_{j_\ell})}.
    \end{align*}
    Thus, taking an expectation involving $\varphi_m(\Psi_{\texttt{DRPL}}^{r, m}(\bX_n, U))$, amounts to evaluating $\varphi_m$ in all distinct sequences $X_{i_1}, \ldots, X_{i_m}$ and weighting with the probabilities $w_{(i_1, \ldots, i_m)}$.
    \begin{align}
    \P_{Q}(\varphi_m(\Psi_{\texttt{DRPL}}^{r, m}(\bX_n, U)) = 1)
        = \E_{Q}\left[\frac{\frac{1}{\tfrac{n!}{(n-m)!}}\sum\limits_{\substack{(i_1, \ldots, i_m) \\ \text{distinct}}}
        \left(\prod\limits_{\ell=1}^m r(X_{i_\ell})\right)\mathds{1}_{\{\varphi_m(X_{i_1}, \ldots, X_{i_m}) = 1\}}}
        {\frac{1}{\tfrac{n!}{(n-m)!}}\sum\limits_{\substack{(j_1, \ldots, j_m) \\ \text{distinct}}}\prod\limits_{\ell=1}^m r(X_{j_\ell})}\right], \label{eq:proof-ratio-of-averages-finite-sample}
    \end{align}
where we divide by the number of distinct sequences $\tfrac{n!}{(n-m)!}$ in both numerator and denominator.

Let $c(n,m)$ and $d(n,m)$ be the numerator and denominator terms of \cref{eq:proof-ratio-of-averages-finite-sample}, i.e.,
\begin{align*}
    c(n,m) &\coloneqq \frac{1}{\tfrac{n!}{(n-m)!}}\sum\limits_{\substack{(i_1, \ldots, i_m) \\ \text{distinct}}}
        \left(\prod\limits_{\ell=1}^m r(X_{i_\ell})\right)\mathds{1}_{\{\varphi_m(X_{i_1}, \ldots, X_{i_m}) = 1\}}, \\
        d(n,m) &\coloneqq \frac{1}{\tfrac{n!}{(n-m)!}}\sum\limits_{\substack{(j_1, \ldots, j_m) \\ \text{distinct}}}\prod\limits_{\ell=1}^m r(X_{j_\ell}).
\end{align*}
Define for all $\delta>0$ the set $A_{\delta}\coloneqq\{d(n,m)\geq 1-\delta\}$. It holds for all $\delta\in(0,1)$ that
\begin{align*}
    \E_Q\left[\frac{c(n,m)}{d(n,m)}\right]
    &=\E_Q\left[\frac{c(n,m)}{d(n,m)} \mathds{1}_{A_{\delta}}\right]+\E_Q\left[\frac{c(n,m)}{d(n,m)} \mathds{1}_{A_{\delta}^c}\right]\\
    &\leq\E_Q\left[\frac{c(n,m)}{1-\delta} \mathds{1}_{A_{\delta}}\right]+\E_Q\left[\frac{c(n,m)}{d(n,m)} \mathds{1}_{A_{\delta}^c}\right]\\
    &\leq\E_Q\left[\frac{c(n,m)}{1-\delta}\right]+\P_Q\left(A_{\delta}^c\right)\\
    &=\frac{1}{1-\delta}\P_P(\varphi_m(X_1, \ldots, X_m)=1)+\P_Q\left(A_{\delta}^c\right),
\end{align*}
where we used that $\frac{c(n,m)}{d(n,m)}\leq 1$ and \cref{lemma:sum-distinct-weights} (a). 
Further, by applying Cantelli's inequality to $\P_Q(A_\delta^c)$, it follows that
\begin{equation*}
    \P_Q\left(A_{\delta}^c\right)\leq
   \frac{\operatorname{VAR}_Q(d(n,m))}{\operatorname{VAR}_Q(d(n,m)) + \delta^2}.
\end{equation*}
Finally, we can apply \cref{eq:var-u-statistic},
\begin{align*}
        V(n,m) \coloneqq \operatorname{VAR}_Q(d(n,m)) = &\operatorname{VAR}\left(\frac{1}{\frac{n!}{(n-m)!}}\sum\limits_{\substack{(i_1, \ldots, i_m) \\ \text{distinct}}} \left(\prod\limits_{\ell=1}^m r(X_{i_\ell})\right)\right)\\
        &= \binom{n}{m}^{-1}\sum_{\ell=1}^m \binom{m}{\ell}\binom{n-m}{m-\ell}(\E_Q\left[r(X_{i_1})^2\right]^\ell - 1),
\end{align*}
where we use that $\zeta_v$ (used in \cref{eq:var-u-statistic}) is given by

\begin{align*}
    \zeta_v &= \VAR_Q\left(\E_Q\left[\prod\limits_{\ell=1}^m r(X_{i_\ell}) \mid X_{i_1}, \ldots, X_{i_v}\right]\right) \\
    &= \VAR_Q\left(\left(\prod\limits_{\ell=1}^v r(X_{i_\ell})\right)\E_Q\left[\prod\limits_{\ell={v+1}}^m r(X_{i_\ell})\right]\right) \\
    &= \VAR_Q\left(\prod\limits_{\ell=1}^v r(X_{i_\ell}) \right) \\
    &= \E_Q[r(X_{i_1})^2]^v - 1.
\end{align*}
Plugging in this upper bound for $\P_Q(A_\delta^c)$ yields
\begin{equation*}
    \P_{Q}(\varphi_m(\Psi_{\texttt{DRPL}}^{r, m}(\bX_n, U)) = 1) = \frac{1}{1-\delta}\P_P(\varphi_m(X_1, \ldots, X_m)=1) + \frac{V(n,m)}{V(n,m) + \delta^2}.
\end{equation*}
Since $\delta \in (0,1)$ was arbitrary, the theorem statement follows.
\end{proof}

\subsection{Proof of \texorpdfstring{\cref{thm:uniform-level}}{}}
\begin{proof}
We adjust part of the proof of \cref{thm:asymptotic-level-SIR} to the uniform case. 
Again, let $c(n,m)$ and $d(n,m)$ be the numerator and denominator terms of \cref{eq:proof-ratio-of-averages}, i.e.,
\begin{align*}
    c(n,m) &\coloneqq \frac{1}{\tfrac{n!}{(n-m)!}}\sum\limits_{\substack{(i_1, \ldots, i_m) \\ \text{distinct}}}
        \left(\prod\limits_{\ell=1}^m \bar{r}(X_{i_\ell})\right)\mathds{1}_{\{\varphi_m(X_{i_1}, \ldots, X_{i_m}) = 1\}}, \\
        d(n,m) &\coloneqq \frac{1}{\tfrac{n!}{(n-m)!}}\sum\limits_{\substack{(j_1, \ldots, j_m) \\ \text{distinct}}}\prod\limits_{\ell=1}^m \bar{r}(X_{j_\ell}).
\end{align*}
We want to show that $\limsup_{n\rightarrow\infty}\sup_{Q\in\tau^{-1}(H_0)}\E_Q\left[\frac{c(n,m)}{d(n,m)}\right]\leq\alpha_{\varphi}$. To see this, define for all $\delta>0$ the set $A_{\delta}\coloneqq\{|d(n,m)-1|\leq\delta\}$, and take any $P \in H_0$ and $Q \in \tau^{-1}(\{P\})$. It holds for all $\delta\in (0,1)$ that
\begin{align*}
    \E_Q\left[\frac{c(n,m)}{d(n,m)}\right]
    &=\E_Q\left[\frac{c(n,m)}{d(n,m)} \mathds{1}_{A_{\delta}}\right]+\E_Q\left[\frac{c(n,m)}{d(n,m)} \mathds{1}_{A_{\delta}^c}\right]\\
    &\leq\E_Q\left[\frac{c(n,m)}{1-\delta} \mathds{1}_{A_{\delta}}\right]+\E_Q\left[\frac{c(n,m)}{d(n,m)} \mathds{1}_{A_{\delta}^c}\right]\\
    &\leq\E_Q\left[\frac{c(n,m)}{1-\delta}\right]+\P_Q\left(A_{\delta}^c\right)\\
    &=\frac{1}{1-\delta}\P_P(\varphi_m(X_1, \ldots, X_m)=1)+\P_Q\left(A_{\delta}^c\right),
\end{align*}
where we used that $\frac{c(n,m)}{d(n,m)}\leq 1$ and that $\E_Q\left[c(n,m)\right] = \P_P(\varphi_m(X_1, \ldots, X_m)=1)$, as shown in \cref{lemma:sum-distinct-weights} (a). Further, given the uniform bound on the weights, combining Chebyshev's inequality with \cref{lemma:sum-distinct-weights} (b) and (d) leads to $\lim_{n\rightarrow\infty}\sup_{Q\in\tau^{-1}(H_0)}\P_Q\left(A_{\delta}^c\right)=0$. Hence, using that $\varphi$ has uniform asymptotic level $\alpha_{\varphi}$ we have shown for all $\delta\in (0,1)$ that
\begin{equation*}
    \limsup_{n\rightarrow\infty}\sup_{Q\in\tau^{-1}(H_0)}\E_Q\left[\frac{c(n,m)}{d(n,m)}\right]\leq \frac{1}{1-\delta}\alpha_{\varphi}.
\end{equation*}
Using that $\delta\in(0,1)$ is arbitrary, completes the proof of \cref{thm:uniform-level}.
\end{proof}

\subsection{Proof of \texorpdfstring{\cref{cor:asymptotic-level-SIR-REPL}}{}}
\begin{proof}
    We have
\begin{align*}
    &\P_Q(\varphi_m(\Psi_{\texttt{REPL}}^{r, m}(\bX_n, U)) = 1)\nonumber\\
    &\quad= \P_Q(\varphi_m(\Psi_{\texttt{REPL}}^{r, m}(\bX_n, U))= 1\mid \Psi_{\texttt{REPL}}^{r, m}(\bX_n, U) \,\text{distinct})\P_Q(\Psi_{\texttt{REPL}}^{r, m}(\bX_n, U) \,\text{distinct})\\
    &\qquad + \P_Q(\varphi_m(\Psi_{\texttt{REPL}}^{r, m}(\bX_n, U)) = 1\mid \Psi_{\texttt{REPL}}^{r, m}(\bX_n, U) \,\text{not distinct})\P_Q(\Psi_{\texttt{REPL}}^{r, m}(\bX_n, U) \,\text{not distinct}). 
\end{align*}
This converges to the same limit as $\P_Q(\varphi_m(\Psi_{\texttt{REPL}}^{r, m}(\bX_n, U))= 1\mid \Psi_{\texttt{REPL}}^{r, m}(\bX_n, U) \,\text{distinct})$ 
(because $\P_Q(\Psi_{\texttt{REPL}}^{r, m}(\bX_n, U) \,\text{distinct})$ converges to $1$, see \cref{prop:repl-becomes-dist}), which, as we argue in \cref{sec:sampling-DRPL}, equals $\P_Q(\varphi_m(\Psi_{\texttt{DRPL}}^{r, m}(\bX_n, U))= 1)$. The result then follows from \cref{thm:asymptotic-level-SIR}.
\end{proof}

\subsection{Proof of \texorpdfstring{\cref{prop:repl-becomes-dist}}{}} \label{sec:proofProp1}
\begin{proof}
When sampling with replacement, $\Psi_{\texttt{REPL}}^{r, m}(\bX_n, U)$ contains non-distinct draws with positive probability (assuming, wlog, that $m$ is not $1$).
Yet, we show that $\P_Q(\Psi_{\texttt{REPL}}^{r, m}(\bX_n, U) \,\,\text{distinct})$ approaches $1$ as $m \rightarrow \infty$.
By assumption $p = \tau(q)$, so $p(x) \propto r(x)q(x)$. 
Let $\bar{r}$ be the normalized version of $r$ satisfying, for all $x$, $p(x) = \bar{r}(x) q(x)$. 
The probability $w_{(i_1, \ldots, i_m)}$ of drawing a sequence $X_{i_1}, \ldots, X_{i_m}$ is defined by \cref{eq:SIR-weights} as the product of weights $r$:
\begin{align*}
    w_{(i_1, \ldots, i_m)} = \frac{\prod_{\ell=1}^m r(X_{i_\ell})}{\sum_{(j_1, \ldots, j_m)}\prod_{\ell=1}^m r(X_{j_\ell})} = \frac{\prod_{\ell=1}^m \bar{r}(X_{i_\ell})}{\sum_{(j_1, \ldots, j_m)}\prod_{\ell=1}^m \bar{r}(X_{j_\ell})},
\end{align*}
where the sum over $(j_1, \ldots, j_m)$ in the denominator is over all sequences of length $m$ (including distinct and non-distinct sequences).
The probability of drawing a non-distinct sequence equals the sum of the weights corresponding to all non-distinct sequences $w_{(i_1, \ldots, i_m)}$. 
Therefore,
    \begin{align}
        &\P_Q(\Psi_{\texttt{REPL}}^{r, m}(\bX_n, U) \,\text{not distinct}) \nonumber \\
        &= \E_Q\left[\sum\limits_{\substack{(i_1, \ldots, i_m) \\ \text{not distinct}}} w_{(i_1, \ldots, i_m)}\right]
        \nonumber \\
        &= \E_Q\left[\sum\limits_{\substack{(i_1, \ldots, i_m) \\ \text{not distinct}}} \frac{\prod_{\ell=1}^m \bar{r}(X_{i_\ell})}{\sum_{(j_1, \ldots, j_m)}\prod_{\ell=1}^m \bar{r}(X_{j_\ell})}\right]
        \nonumber \\
        &= \E_Q\left[ \frac{\sum\limits_{\substack{(i_1, \ldots, i_m) \\ \text{not distinct}}} \prod\limits_{\ell=1}^m \bar{r}(X_{i_\ell})}{\sum\limits_{(i_1, \ldots, i_m)} \prod\limits_{\ell=1}^m \bar{r}(X_{i_\ell})} \right] \nonumber \\
        &= \E_Q\left[ \frac{\frac{1}{n^m}\sum\limits_{\substack{(i_1, \ldots, i_m) \\ \text{not distinct}}} \prod\limits_{\ell=1}^m \bar{r}(X_{i_\ell})}{\frac{1}{n^m}\sum\limits_{\substack{(i_1, \ldots, i_m) \\ \text{not distinct}}} \prod\limits_{\ell=1}^m \bar{r}(X_{i_\ell}) + \frac{1}{n^m}\sum\limits_{\substack{(i_1, \ldots, i_m) \\ \text{distinct}}}\prod\limits_{\ell=1}^m \bar{r}(X_{i_\ell})} \right]
        \label{eq:P-distinct}.
    \end{align}
    Observe that this expectation is taken both over %
    $\bX_n$ and $U$.
    By \cref{lemma:sum-nondistinct-weights}, the numerator of \cref{eq:P-distinct} (which equals the first term in the denominator) converges to~$0$ in ${L}^1$. The second term in the denominator converges in probability to $1$ by \cref{lemma:sum-distinct-weights} (this requires \cref{assump:finite-second-moment}, which is implied by 
    \cref{assump:bounded-weights}); 
    thus, the entire denominator converges to $1$ in probability. By Slutsky's lemma, the entire fraction (inside the mean) converges to $0$ in probability. 
    Since the fraction is lower bounded by $0$ and upper bounded by $1$, convergence in probability implies convergence of the mean (see the proof of \cref{thm:asymptotic-level-SIR} for an argument for this), and it follows that $\P_Q(\Psi_{\texttt{REPL}}^{r, m}(\bX_n, U) \,\text{not distinct})\rightarrow 0$.
\end{proof}

\begin{lemma}[Non-distinct draws]\label{lemma:sum-nondistinct-weights}
Let $P\in \cP$ and $Q\in\cQ$ be distributions with densities $p$ and $q$ with respect to a dominating measure $\mu$. Let $\bar{r}:\mathcal{X}\rightarrow (0,\infty)$ satisfy for all $x\in\mathcal{X}$ that $p(x)=\bar{r}(x)q(x)$. Then, under \cref{assump:m-rate-n,assump:bounded-weights} it holds that
    \begin{align*}
        \lim_{n\rightarrow\infty}\E_Q\left[\frac{1}{n^m}\sum\limits_{\substack{(i_1, \ldots, i_m) \\ \text{not distinct}}}\prod_{\ell=1}^m \bar{r}(X_{i_\ell})\right]=0.
    \end{align*}
    In particular, since the integrand is non-negative, this implies that $$\frac{1}{n^m}\sum\limits_{\substack{(i_1, \ldots, i_m) \\ \text{not distinct}}}\prod_{\ell=1}^m \bar{r}(X_{i_\ell}) \stackrel{{L}^1}{\longrightarrow} 0 \quad \text{as }n\rightarrow\infty.$$
\end{lemma}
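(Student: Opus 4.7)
The plan is to organize the sum over non-distinct sequences $(i_1,\ldots,i_m)\in\{1,\ldots,n\}^m$ by the number $s$ of distinct indices that appear, bound each contribution using \cref{assump:bounded-weights}, and then kill the resulting sum using \cref{assump:m-rate-n}. First I would fix such a sequence with distinct-value multiplicities $k_1,\ldots,k_s \geq 1$ satisfying $\sum_j k_j = m$ and $s < m$. By independence of $X_1,\ldots,X_n$, the expected product factors as $\prod_{j=1}^s \E_Q[\bar{r}(X)^{k_j}]$. Since $\E_Q[\bar{r}(X)] = \int p\,d\mu = 1$, all singleton-block factors are $1$, and for $k_j \geq 2$ \cref{assump:bounded-weights} gives $\E_Q[\bar{r}(X)^{k_j}] \leq L^{k_j-1}$ (possibly after enlarging $L$ to absorb the proportionality constant between $r$ and $\bar{r}$). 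Because $\sum_j(k_j-1) = m-s$, the whole product is at most $L^{m-s}$.

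Next I would count: the number of sequences in $\{1,\ldots,n\}^m$ using exactly $s$ distinct indices equals $S(m,s)\,n!/(n-s)!$, where $S(m,s)$ denotes the Stirling number of the second kind. Combining with the bound above, using $n!/(n-s)! \leq n^s$, and substituting $j := m-s$, I get
\begin{equation*}
\E_Q\!\left[\tfrac{1}{n^m}\textstyle\sum_{\text{not distinct}}\prod_\ell \bar{r}(X_{i_\ell})\right] \;\leq\; \sum_{j=1}^{m-1}\frac{S(m,m-j)\,L^j}{n^j}.
\end{equation*}
The core analytic step is then the Stirling-number bound $S(m,m-j) \leq \binom{m}{2}^j$ for all $1 \leq j \leq m-1$, which I would prove by induction on $m$ using the recurrence $S(m,k) = k\,S(m-1,k) + S(m-1,k-1)$. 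The inductive step, after dividing through by $\binom{m}{2}^{j-1}$ and using $((m-2)/m)^{j-1} \leq 1$, reduces to the elementary inequality $(m-j) + \binom{m-1}{2} \leq \binom{m}{2}$, i.e.\ $j \geq 1$. Granted this bound, the sum above is dominated by the geometric series $\sum_{j \geq 1}(m(m-1)L/(2n))^j$, which by \cref{assump:m-rate-n} (so that $m^2/n \to 0$) tends to $0$ at rate $O(m^2/n)$.

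The main obstacle I anticipate is precisely the Stirling-number bound, since the more naive inequality $S(m,s) \leq s^m/s!$ blows up for $s$ close to $m$; the bound $\binom{m}{2}^j$ is tailored to the "merge" interpretation (each partition of $[m]$ into $m-j$ blocks can be obtained by performing $j$ successive merges of blocks, each with at most $\binom{m}{2}$ choices), and its verification requires carrying out the induction carefully. An alternative would be to apply a Rosenthal-type moment inequality to $\sum_i (\bar{r}(X_i) - 1)$ and bound $\E_Q[(S_n/n)^m]$ directly with $S_n := \sum_i \bar{r}(X_i)$, but the constants in Rosenthal's inequality depend on $m$ and demand similar care. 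The final $L^1$-convergence claim then follows for free from nonnegativity of the integrand.
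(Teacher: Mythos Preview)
Your proposal is correct and follows the same overall skeleton as the paper's proof: organize the non-distinct tuples by the number $s$ of distinct indices, use independence together with \cref{assump:bounded-weights} to bound each expected product by $L^{m-s}$, and then control the resulting combinatorial sum using \cref{assump:m-rate-n}. Where you diverge is in the combinatorics. The paper writes the count as $\binom{n}{k}\pi(m,k)$ with $\pi(m,k)=k!\,S(m,k)$, bounds $\pi(m,k)\leq k^{m-k}m!/(m-k)!$ by an ad hoc two-stage counting argument, and then shows the summands $s_k$ are increasing fast enough (ratio $s_{k+1}/s_k\geq (n-m+1)/(Lm^2)$) that the whole sum is dominated by its last term $s_{m-1}$; finishing off requires an auxiliary result (\cref{lemma:convergence-m-n}) about $n^{-m}n!/(n-m)!$. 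Your route is more direct: you use $n!/(n-s)!\leq n^s$ and the Stirling-number bound $S(m,m-j)\leq\binom{m}{2}^j$, which collapses everything into the geometric series $\sum_{j\geq 1}\big(\tfrac{m(m-1)L}{2n}\big)^j$ without any ratio analysis or Stirling-approximation lemma. The inductive verification of $S(m,m-j)\leq\binom{m}{2}^j$ that you sketch does go through as stated (the recurrence gives $(m-j)\binom{m-1}{2}^{j-1}+\binom{m-1}{2}^j$, and after dividing by $\binom{m}{2}^{j-1}$ and using $\binom{m-1}{2}/\binom{m}{2}=(m-2)/m\leq 1$ one needs exactly $m-j\leq\binom{m}{2}-\binom{m-1}{2}=m-1$). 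Your parenthetical about enlarging $L$ to pass from $r$ to $\bar r$ is also fine; the paper handles the same point by writing $\E_Q[\bar r^t]=\E_P[\bar r^{t-1}]$ and then invoking the bound, with the same implicit rescaling.
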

\begin{proof}
    We first rewrite the sum using the number $k$ of distinct draws, i.e., we consider cases, in which there are $k$ distinct elements among $i_1, \ldots, i_m$. The number $k$ is at least $1$ and, since not all draws are distinct, at most $m-1$. For fixed $k$, we then further sum over the numbers $r_1, \ldots, r_k$ of occurrences of each index, i.e., $j_\ell$ appears $r_\ell$ times. 
    \begin{align*}
        \sum_{\substack{(i_1, \ldots, i_m) \\ \text{not distinct}}}\prod_{\ell=1}^m \bar{r}(X_{i_\ell})
        = \sum_{k=1}^{m-1} \sum_{\substack{(i_1, \ldots, i_m) \text{ with } k \text{ distinct entries} \\ (\text{i.e. }j_\ell \in (i_1, \ldots, i_m) \text{ appears } r_\ell > 0 \text{ times,}\\ r_1+\cdots+r_k=m)}} 
        \prod_{\ell=1}^k \bar{r}(X_{j_\ell})^{r_{\ell}}.
    \end{align*}
    Using the independence across distinct observations, this implies that 
    \begin{equation}
        \E_Q\left[\sum\limits_{\substack{(i_1, \ldots, i_m) \\ \text{not distinct}}}\prod_{\ell=1}^m \bar{r}(X_{i_\ell})\right]
        = \sum_{k=1}^{m-1} \sum_{\substack{(i_1, \ldots, i_m) \text{ with } k \text{ distinct entries} \\ (\text{i.e. }j_\ell \in (i_1, \ldots, i_m) \text{ appears } r_\ell > 0 \text{ times,}\\ r_1+\cdots+r_k=m)}} \prod_{\ell=1}^k \E_Q\left[\bar{r}(X_{j_\ell})^{r_{\ell}}\right].\label{eq:non-distinct-partition-exp}
    \end{equation}
    We now use the uniform bound on the weights given in \cref{assump:bounded-weights} and the fact that $\E_Q[\bar{r}(X_i)^t]=\int \bar{r}(x_i) q(x_i) \bar{r}(x_i)^{t-1} \mathrm{d}\mu(x_i)=\int p(x_i) \bar{r}(x_i)^{t-1} \mathrm{d}\mu(x_i) = \E_P[\bar{r}(X_i)^{t-1}]$ to get for all $i\in\{1,\ldots,n\}$ and all $t\in\{1,\ldots, m-1\}$ that
    \begin{equation*}
        \E_Q[\bar{r}(X_i)^t] = \E_P[\bar{r}(X_i)^{t-1}] \leq L^{t-1}.
    \end{equation*}
    Together with \eqref{eq:non-distinct-partition-exp} this results in
    \begin{align}
        \E\left[\sum\limits_{\substack{(i_1, \ldots, i_m) \\ \text{not distinct}}}\prod_{\ell=1}^m \bar{r}(X_{i_\ell})\right]
        &\leq \sum_{k=1}^{m-1} \sum_{\substack{(i_1, \ldots, i_m) \text{ with } k \text{ distinct entries} \\ (\text{i.e. }j_\ell \in (i_1, \ldots, i_m) \text{ appears } r_\ell > 0 \text{ times,}\\ r_1+\cdots+r_k=m)}} L^{m-k} \nonumber \\
        &= \sum_{k=1}^{m-1} \binom{n}{k} \pi(m, k) L^{m-k}\nonumber\\ 
        &\leq \sum_{k=1}^{m-1} \binom{n}{k} \tilde{\pi}(m, k) L^{m-k}, \label{eq:non-distinct-partition-exp2}
    \end{align}
    where $\pi(m, k)$ is the number of words of length $m$ using $k$ letters such that each letter is used at least once and 
    \begin{equation*}
        \tilde{\pi}(m,k)\coloneqq k^{m-k}\frac{m!}{(m-k)!}.
    \end{equation*}
    The last inequality holds because we have $\pi(m,k) \leq \tilde{\pi}(m,k)$: Consider constructing a word of length $m$ by first distributing one of each of the $k$ letters (ensuring that each letter is used at least once) among the $m$ positions, which can be done in $m!/(m-k)!$ ways. For the remaining $m-k$ positions pick any combination of letters, which can be done in $k^{m-k}$ ways. In total, this two-step procedure has $\tilde{\pi}(m,k)$ possible outcomes. This enumeration contains all words of length $m$ using $k$ letters such that each is used at least once, so $\pi(m,k) \leq \tilde{\pi}(m,k)$. We do not have equality, because $\tilde{\pi}$ counts some words several times, but with different intermediate steps. For example if $k = 2$ and $m=3$, $\tilde{\pi}$ counts $(a, \rule{0.2cm}{0.15mm}, b) + (\rule{0.2cm}{0.15mm}, a, \rule{0.2cm}{0.15mm})$ and $(\rule{0.2cm}{0.15mm}, a, b) + (a, \rule{0.2cm}{0.15mm}, \rule{0.2cm}{0.15mm})$ as two distinct words, although they both yield $(a, a, b)$; indeed $\pi(3,2) = 6$ and $\tilde{\pi}(3,2) = 12$.
    
    Then, with $s_k\coloneqq \binom{n}{k}\tilde{\pi}(m,k)L^{m-k}$ it holds that
    \begin{align*}
        \frac{s_{k+1}}{s_k} &= 
        \frac{\binom{n}{k+1}\tilde{\pi}(m, k+1) L^{m - k -1}}{\binom{n}{k}\tilde{\pi}(m, k)L^{m - k}} \\
        &= \frac{1}{L}\frac{n-k}{k+1}\frac{m-k}{1}\frac{(k+1)^{m-k-1}}{k^{m-k}} \\
        &= \frac{1}{L}\frac{n-k}{k+1}\frac{m-k}{k+1}\left(\frac{k+1}{k}\right)^{m-k} \\
        &\geq \frac{1}{L}\frac{n-m+1}{m^2}\\
        &=: c,
    \end{align*}
    where the inequality follows by using $k \leq m-1$ and $(k+1)/k \geq 1$.
    By \cref{assump:m-rate-n} (i.e., $m=o(\sqrt{n})$) it holds for $n$ sufficiently large that $c>1$.
    Iterating this inequality, we get (again for $n$ sufficiently large) that $s_k\leq c^{-(m-1-k)}s_{m-1}$, which we can plug into \eqref{eq:non-distinct-partition-exp2} to get
    \begin{equation}
    \label{eq:last_estimate}
        \E\left[\sum\limits_{\substack{(i_1, \ldots, i_m) \\ \text{not distinct}}}\prod_{\ell=1}^m \bar{r}(X_{i_\ell})\right]
        \leq \sum_{k=1}^{m-1}s_k
        \leq s_{m-1}\sum_{k=1}^{m-1}c^{-(m-1-k)}
        = s_{m-1}\sum_{k=0}^{m-2}c^{-k}
        \leq s_{m-1}\frac{1}{1-\tfrac{1}{c}}. 
    \end{equation}
    In the last inequality, we use the trivial bound $\sum_{k=0}^{m-2} c^{-k} < \sum_{k=0}^{\infty} c^{-k}$ 
    and $0<c^{-1}<1$.
    Finally, observe that 
    \begin{align*}
        n^{-m}s_{m-1} &=n^{-m} \binom{n}{m-1}\tilde{\pi}(m, m-1)L \\
        &= n^{-m}\frac{n!}{(n-(m-1))!(m-1)!}(m-1)^{m-(m-1)}\frac{m!}{(m-(m-1))!} L \\
        &= n^{-m} \frac{n!}{(n-m)!}\frac{m(m-1)}{(n-m+1)} L \\
        &= \underbrace{g(n,m)}_{\coloneqq n^{-m}\frac{n!}{(n-m)!}} \frac{m(m-1)}{(n-m+1)} L,
    \end{align*}
   which by \cref{lemma:convergence-m-n} converges to zero (by the assumption $m = o(\sqrt{n})$). 
    Therefore, we have that
    \begin{align*}
        \E\left[n^{-m}\sum\limits_{\substack{(i_1, \ldots, i_m) \\ \text{not distinct}}}\prod_{\ell=1}^m \bar{r}(X_{i_\ell})\right] \leq n^{-m} s_{m-1} \frac{1}{1 - \tfrac{1}{c}} \rightarrow 0,
    \end{align*}
    which completes the proof of \cref{lemma:sum-nondistinct-weights}.
\end{proof}

\begin{lemma}
\label{lemma:convergence-m-n}
    Define for all $n,m\in\mathbb{N}$ the function
    \begin{equation*}
        g(n,m)\coloneqq \frac{n!}{(n-m)!}n^{-m}.
    \end{equation*}
    Then, it holds that
    \begin{align*}
        \lim_{n\rightarrow\infty}g(n,n^q)=
    \begin{cases}
    0\quad&\text{if }q\in(\tfrac{1}{2}, 1)\\
    \exp(-\tfrac{1}{2})\quad&\text{if }q=\tfrac{1}{2}\\
    1\quad&\text{if }q\in[0, \tfrac{1}{2}).
    \end{cases}
    \end{align*}
\end{lemma}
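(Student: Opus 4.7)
My plan is to write $g(n,m)$ as a telescoping product, take logarithms, and then sandwich $\log g(n,m)$ between two polynomial expressions in $m/n$ whose asymptotics can be read off directly for each regime of $q$.

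First I would rewrite
$$g(n,m) = \frac{n!}{(n-m)!\,n^m} = \prod_{k=0}^{m-1}\frac{n-k}{n} = \prod_{k=0}^{m-1}\left(1 - \frac{k}{n}\right),$$
so that
$$\log g(n,m) = \sum_{k=0}^{m-1}\log\!\left(1-\tfrac{k}{n}\right).$$
Since $m = n^q$ with $q \in [0,1)$, we have $(m-1)/n \to 0$, so for $n$ large enough every argument $k/n$ lies in $[0,1/2]$. On this interval the elementary inequalities
$$-x - x^2 \;\leq\; \log(1-x) \;\leq\; -x$$
hold (the upper bound is standard; the lower bound follows from checking that $x \mapsto \log(1-x)+x+x^2$ vanishes at $0$ and has derivative $x(1-2x)/(1-x) \geq 0$ on $[0,1/2]$). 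Summing gives, for all sufficiently large $n$,
$$-\frac{m(m-1)}{2n} - \frac{(m-1)m(2m-1)}{6n^2} \;\leq\; \log g(n,m) \;\leq\; -\frac{m(m-1)}{2n}.$$

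The remaining work is a case analysis with $m = n^q$. The leading upper bound is $-\tfrac{1}{2}n^{2q-1} + \tfrac{1}{2}n^{q-1}$, which tends to $0$ if $q < 1/2$, to $-\tfrac{1}{2}$ if $q = 1/2$, and to $-\infty$ if $q > 1/2$. The correction term in the lower bound is of order $n^{3q-2}$, which tends to $0$ whenever $q \leq 2/3$ and in particular throughout the range $q \leq 1/2$ where it matters. Hence $\log g(n,n^q)$ converges to $0$, $-1/2$, and $-\infty$ in the three regimes respectively, and exponentiating gives the claimed limits $1$, $e^{-1/2}$, and $0$.

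The only slightly delicate point is the boundary case $q = 1/2$: here the upper and lower bounds must pinch $\log g$ to exactly $-1/2$, and one has to verify that the quadratic correction $(m-1)m(2m-1)/(6n^2) \sim m^3/(3n^2) = n^{3q-2}/3$ evaluates to $n^{-1/2}/3 \to 0$ at $q = 1/2$, so that both bounds indeed share the limit $-1/2$. Apart from this careful matching at the critical exponent, the proof is a routine $\log$-of-product estimate.
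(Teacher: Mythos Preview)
Your proof is correct and takes a genuinely different route from the paper's. The paper applies Stirling's approximation to $n!$ and $(n-m)!$, obtaining
\[
g(n,m)\sim\exp\{(n-m+\tfrac12)\log n+(m-n-\tfrac12)\log(n-m)-m\},
\]
and then Taylor-expands $\log(1-n^{q-1})$ to second order to extract the dominant term $-\tfrac12 n^{2q-1}$ plus an $O(n^{3q-2})$ remainder. You instead write $g(n,m)=\prod_{k=0}^{m-1}(1-k/n)$ directly, use the elementary sandwich $-x-x^2\le\log(1-x)\le -x$ on $[0,\tfrac12]$, and sum the closed forms for $\sum k$ and $\sum k^2$. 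Both arguments arrive at the same leading behaviour $-\tfrac12 n^{2q-1}$ with a correction of order $n^{3q-2}$. Your route is more elementary---no Stirling, no asymptotic equivalences to keep track of---and yields explicit two-sided inequalities valid for all large $n$, which makes the squeeze at $q=\tfrac12$ completely transparent; the paper's Stirling computation is quicker to write but relies on more machinery and is a bit loose about what the $\sim$ carries through.
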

\begin{proof}
First, apply the Stirling approximation to get for $n$
sufficiently large that
\begin{align*}
    g(n,m)
    &\sim n^{n+\frac{1}{2}}\cdot e^{-n}\cdot (n-m)^{m-n-\frac{1}{2}}\cdot e^{n-m}\cdot n^{-m}\\
    &=n^{n-m+\frac{1}{2}}\cdot (n-m)^{m-n-\frac{1}{2}}\cdot e^{-m}\\
    &=\exp\{(n-m+\tfrac{1}{2})\log(n)+(m-n-\tfrac{1}{2})\log(n-m)-m\}.
\end{align*}
Next, we look at cases where $m=n^q$ for some $q\in [0,1)$. The above expression can then be simplified further as\begin{align*}
    g(n,n^q)
    &\sim\exp\{(n-n^q+\tfrac{1}{2})\log(n)+(n^q-n-\tfrac{1}{2})\log(n-n^q)-n^q\}\\
    &=\exp\{(n-n^q+\tfrac{1}{2})\log(n)+(n^q-n-\tfrac{1}{2})[\log(n)+\log(1-n^{q-1})]-n^q\}\\
    &=\exp\{(n^q-n-\tfrac{1}{2})\log(1-n^{q-1})-n^q\}.
\end{align*}
Finally, since $n^{q-1}\rightarrow 0$ as $n$ goes to infinity we can use the following Taylor expansion
\begin{equation*}
    \log(1-n^{q-1})=-n^{q-1}-\tfrac{1}{2}n^{2(q-1)}+O(n^{3(q-1)}),
\end{equation*}
which results in
\begin{align*}
    g(n,n^q)
    &\sim\exp\{(n^q-n-\tfrac{1}{2})\log(1-n^{q-1})-n^q\}\\
    &=\exp\{(n^q-n-\tfrac{1}{2})(-n^{q-1}-\tfrac{1}{2}n^{2(q-1)}+O(n^{3(q-1)}))-n^q\}\\
    &=\exp\{-n^{2q-1}-\tfrac{1}{2}n^{3q-2}+n^q+\tfrac{1}{2}n^{3q-2}+\tfrac{1}{2}n^{q-1}+\tfrac{1}{4}n^{2q-2}+O(n^{2q-1})-n^q\}\\
    &=\exp\{-\tfrac{1}{2}n^{2q-1}+O(n^{3q-2})\}.
\end{align*}
From this we see that
\begin{equation*}
\lim_{n\rightarrow\infty}g(n,n^q)=
    \begin{cases}
    0\quad&\text{if }q\in(\tfrac{1}{2}, 1)\\
    \exp(-\tfrac{1}{2})\quad&\text{if }q=\tfrac{1}{2}\\
    1\quad&\text{if }q\in[0, \tfrac{1}{2}).
    \end{cases}
\end{equation*}
This completes the proof of \cref{lemma:convergence-m-n}.
\end{proof}

\subsection{Proof of \texorpdfstring{\cref{cor:hypothesis-obs-space-short}}{}}\label{sec:hypothesis-obs-space}
As discussed in \cref{sec:test-target-to-observable-domain}, the proposed procedure in \cref{subsec:method} can also be used to construct a test for a hypothesis $H_0^\mathcal{Q}$ in the observed domain, satisfying the same theoretical guarantees.
\begin{corollary}[Pointwise level in the observed domain - detailed version] \label{cor:hypothesis-obs-space}
Consider hypotheses $H_0^{\mathcal{Q}} \subseteq \mathcal{Q}$ and $H_0^{\mathcal{P}} \subseteq \mathcal{P}$ in the observational and in the target domain, respectively.
Let $\tau:\cQ \rightarrow \cP$ be a distributional shift for which there exist a known map $r:\mathcal{X}\rightarrow(0,\infty)$ 
and a set $A$ 
satisfying
for all $q\in\cQ$ and all $x\in\mathcal{Z}$  that 
$\tau(q)(x)\propto r(x^A)q(x)$, see~\eqref{eq:tauform}. Assume 
$\tau(H_0^{\mathcal{Q}}) \subseteq H_0^{\mathcal{P}}$.
Let $\varphi_k$ be a sequence of tests for $H_0^{\mathcal{P}}$ with pointwise asymptotic level $\alpha_\varphi$. Let $m=m(n)$ be a resampling size and let $\psi^r_n$ be the DRPL-based resampling test defined by $\psi^r_n(\bX_n, U) \coloneqq \varphi_m(\Psi_{\texttt{DRPL}}^{r,m}(\bX_n, U))$, see \cref{alg:resampling-and-testing}. Then, if $m$ satisfies \cref{assump:m-rate-n} and all $Q\in H_0^{\mathcal{Q}}$ satisfy \cref{assump:finite-second-moment}, it holds that
$$
\sup_{Q \in H_0^{\mathcal{Q}}} \limsup_{n \rightarrow \infty} \mathbb{P}_Q(\psi^r_n(\mathbf{X}_n, U) = 1) \leq \alpha_\varphi,
$$
i.e., $\psi^r_n$ satisfies pointwise asymptotic level $\alpha$ for the hypothesis $H_0^{\mathcal{Q}}$.
\end{corollary}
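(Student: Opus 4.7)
The plan is to deduce the corollary directly from \cref{thm:asymptotic-level-SIR} together with the containment $\tau(H_0^\mathcal{Q}) \subseteq H_0^\mathcal{P}$. Fix an arbitrary $Q \in H_0^\mathcal{Q}$ and let $P \coloneqq \tau(Q)$. By assumption $P \in H_0^\mathcal{P}$, so because $\varphi$ has pointwise asymptotic level $\alpha_\varphi$ on $H_0^\mathcal{P}$, the quantity $\alpha_\varphi(P) \coloneqq \limsup_{k\to\infty} \P_P(\varphi_k(\bZ_k) = 1)$ satisfies $\alpha_\varphi(P) \leq \alpha_\varphi$.

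Next, I would invoke \cref{thm:asymptotic-level-SIR} applied at this particular $Q$. The assumptions of that theorem are met: \cref{assump:m-rate-n} holds on $m$ by hypothesis, and \cref{assump:finite-second-moment} holds for this $Q$ since $Q \in H_0^\mathcal{Q}$ and all such $Q$ satisfy \cref{assump:finite-second-moment} by assumption. The theorem then gives
\begin{equation*}
    \limsup_{n\to\infty} \P_Q(\psi_n^r(\bX_n, U) = 1) = \alpha_\varphi(P) \leq \alpha_\varphi.
\end{equation*}

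Since this bound holds for every $Q \in H_0^\mathcal{Q}$, taking the supremum on the left-hand side yields
\begin{equation*}
    \sup_{Q \in H_0^\mathcal{Q}} \limsup_{n\to\infty} \P_Q(\psi_n^r(\bX_n, U) = 1) \leq \alpha_\varphi,
\end{equation*}
which is the claimed pointwise asymptotic level.

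There is essentially no obstacle here; the corollary is a packaging of \cref{thm:asymptotic-level-SIR} with the trivial observation that the containment $\tau(H_0^\mathcal{Q}) \subseteq H_0^\mathcal{P}$ transports the level guarantee from the target to the observed hypothesis. The only minor subtlety worth noting in the proof is that the supremum and $\limsup$ cannot be interchanged in general, so the statement is genuinely pointwise (not uniform) --- obtaining a uniform version would instead require the stronger hypotheses of \cref{thm:uniform-level}.
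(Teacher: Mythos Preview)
Your proof is correct and follows essentially the same approach as the paper: both reduce the corollary to \cref{thm:asymptotic-level-SIR} via the containment $\tau(H_0^{\mathcal{Q}}) \subseteq H_0^{\mathcal{P}}$. The paper phrases this via the set inclusion $H_0^{\mathcal{Q}} \subseteq \tau^{-1}(H_0^{\mathcal{P}})$ before invoking \cref{thm:asymptotic-level-SIR}, whereas you fix an individual $Q \in H_0^{\mathcal{Q}}$ and apply the theorem directly---the content is identical.
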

Clearly, the condition 
$H_0^{\cQ} \subseteq \tau(H_0^{\cP})$
is satisfied when $H_0^{\mathcal{P}} = \tau(H_0^{\mathcal{Q}})$. This is the case for the conditional independence test described in Section~\ref{sec:conditionaltesting}, for example.
\begin{proof}
We have
\begin{align*}
    \tau(H_0^{\mathcal{Q}}) \subseteq H_0^{\cP} 
\;     \Rightarrow \;  H_0^{\cQ} \subseteq \tau^{-1}(H_0^{\cP})
\end{align*}
and therefore
\begin{align*}
\sup_{Q \in H_0^{\mathcal{Q}}} \limsup_{n \rightarrow \infty} \mathbb{P}_Q(\psi^r_n = 1)
\leq \sup_{Q \in \tau^{-1}(H_0^{\mathcal{P}})} \limsup_{n \rightarrow \infty} \mathbb{P}_Q(\psi^r_n = 1).
\end{align*}
Since \cref{assump:finite-second-moment} is satisfied for all $Q\in H_0^{\mathcal{Q}}$, the statement follows from Theorem~\ref{thm:asymptotic-level-SIR}. This completes the proof of \cref{cor:hypothesis-obs-space-short}.
\end{proof}

\subsection{Proof of \texorpdfstring{\cref{prop:rejection-sampler}}{}}
\begin{proof}
    We analyze the output of \cref{alg:rejection-sampler}. 
    For each $i \in \{1, \ldots, n\}$, we discard $X_i$ if $U_i > \tfrac{r(X_i)}{M}$, where $U_i$ is uniform on $(0,1)$. The probability of the event $E_i$ that $X_i$ is not discarded equals
     \begin{equation*}
        q(E_i) = \int q(E_i|x_i)q(x_i) \textrm{d}x_i = \int \tfrac{r(x_i)}{M}q(x_i) \textrm{d}x_i
        = \int \tfrac{c}{M}\tfrac{p(x_i)}{q(x_i)}q(x_i) \textrm{d}x_i
        = \tfrac{c}{M},
    \end{equation*}
    where $c$ is a constant such that $r(x)q(x)=cp(x)$.
    and the conditional density of $X_i$ given that the sample is not discarded, $q(x_i|E_i)$, is given by
    \begin{align*}
        q(x_i|E_i) = \frac{q(x_i)}{q(E_i)}q(E_i|x_i) 
        = q(x_i)\frac{M}{c}\frac{c}{M}\frac{p(x_i)}{q(x_i)} 
        = p(x_i)
    \end{align*}
    If $X_{i_1}, \ldots, X_{i_m}$ are the points that are not discarded, this means that $(X_{i_1}, \ldots, X_{i_m})$ is distributed as if it was $m$ i.i.d.\ draws from $P^*$ (where $m$ is random). 
    In particular, by the assumption that the probability that for all $k\in\Z$: $\P_P(\varphi_k(\bZ_k) = 1) = \alpha_\varphi$ for $k$ i.i.d. samples $\bZ_k$ from $\P_P$, it follows that $\P_Q(\psi_n^r(\bX_n, U) = 1) = \P_Q(\varphi(X_{i_1}, \ldots, X_{i_m})) = \alpha_\varphi$.
\end{proof}

\section{Analyzing \texorpdfstring{\cref{assump:finite-second-moment}}{} in a linear Gaussian model}
\label{sec:assumption-a3-gaussian}
In this section, we show conditions for assumption \cref{assump:finite-second-moment} to be satisfied when we consider the shift that changes a Gaussian conditional into a marginal, independent Gaussian target distribution. 
\begin{proposition}
Consider a linear Gaussian setting where $Y = X + \epsilon$ with $\epsilon \sim \mathcal{N}(0, \sigma_\epsilon^2)$ and $X\sim\mathcal{N}(0,\sigma_X^2)$, with $\sigma_\epsilon, \sigma_X$ known. Assume that we are interested in the distributional shift that replaces the conditional $q(y|x)$ (an $\mathcal{N}(x, \sigma_\epsilon^2)$-density, evaluated at $y$) with an independent $\mathcal{N}(0,\sigma^2)$ target distribution $p(y)$. Formally, define the shift factor $r$ for all $x,y\in\R$ as
\begin{equation*}
    r(x,y)=\frac{p(y)}{q(y|x)}.
\end{equation*}
Then \cref{assump:finite-second-moment} is satisfied for $Q$ if and only if
\begin{equation*}
    \sigma^2<2(\sigma_{\epsilon}^2-\sigma_X^2).
\end{equation*}
\end{proposition}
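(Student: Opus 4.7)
The plan is to reduce \cref{assump:finite-second-moment} to the convergence of an explicit two-dimensional Gaussian integral and then check positive-definiteness of the associated quadratic form. First I would rewrite
\[
\mathbb{E}_Q[r(X,Y)^2] = \int\!\!\int \frac{p(y)^2}{q(y|x)^2}\, q(y|x)\, q(x)\, dy\, dx = \int\!\!\int \frac{p(y)^2}{q(y|x)}\, q(x)\, dy\, dx,
\]
which is the standard trick for second moments of importance weights and reduces the question to integrability of $p(y)^2/q(y|x)$ against $q(x)$.

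Next, I would plug in the three Gaussian densities and collect all the $y$- and $x$-dependence into a single exponential $\exp\{-\tfrac12 (x,y)^\top M (x,y)\}$ with a constant prefactor. Expanding $-y^2/\sigma^2 + (y-x)^2/(2\sigma_\epsilon^2) - x^2/(2\sigma_X^2)$ gives
\[
M = \begin{pmatrix} \tfrac{1}{\sigma_X^2} - \tfrac{1}{\sigma_\epsilon^2} & \tfrac{1}{\sigma_\epsilon^2} \\[2pt] \tfrac{1}{\sigma_\epsilon^2} & \tfrac{2}{\sigma^2} - \tfrac{1}{\sigma_\epsilon^2} \end{pmatrix}.
\]
The integral is finite if and only if $M$ is positive definite; otherwise the integrand grows at least polynomially along some direction and the integral diverges. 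By Sylvester's criterion, positive-definiteness is equivalent to $M_{11}>0$ and $\det M > 0$. A short computation yields
\[
\det M \;=\; \frac{2\sigma_\epsilon^2 - \sigma^2 - 2\sigma_X^2}{\sigma^2\,\sigma_X^2\,\sigma_\epsilon^2},
\]
so $\det M > 0$ is exactly $\sigma^2 < 2(\sigma_\epsilon^2 - \sigma_X^2)$. Since $\sigma^2>0$, this inequality already forces $\sigma_\epsilon^2 > \sigma_X^2$, i.e.\ $M_{11}>0$, so the two conditions collapse to the single stated one.

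The only subtlety is the divergence direction when $M$ is not positive definite. I would handle it by noting that if $M_{11}\leq 0$, fixing $y$ and letting $|x|\to\infty$ makes the inner integral in $x$ infinite for a.e.\ $y$; if $M_{11}>0$ but $\det M \leq 0$, diagonalising $M$ produces an eigenvector along which the quadratic form is non-positive, so the integral along that one-dimensional subspace diverges and Tonelli gives $+\infty$ for the full integral. This establishes the converse direction (necessity of $\sigma^2<2(\sigma_\epsilon^2-\sigma_X^2)$), completing the characterization. The main obstacle is really just bookkeeping of signs in $M$; no deep step is involved.
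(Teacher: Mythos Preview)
Your proof is correct and takes a genuinely different route from the paper. The paper substitutes $Y=X+\epsilon$ and works probabilistically: it conditions on $\epsilon$, recognises $(X/\sigma_X+\epsilon/\sigma_X)^2$ as a non-central $\chi^2_{(1)}$ variable, applies its moment generating function, and then reduces the remaining expectation to the MGF of a central $\chi^2_{(1)}$, whose domain condition $t<1/2$ yields $\sigma^2<2(\sigma_\epsilon^2-\sigma_X^2)$. Your approach is purely analytic: you write the integrand as $\exp\{-\tfrac12(x,y)^\top M(x,y)\}$ and reduce finiteness to positive definiteness of $M$ via Sylvester's criterion. Your route is more elementary and more transparent---it avoids the MGF machinery and makes the ``if and only if'' direction immediate from linear algebra, whereas the paper's MGF argument handles both directions through the known domain of the $\chi^2$ MGF. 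The paper's computation, on the other hand, stays closer to the probabilistic structure and would generalise more naturally if the model were extended (e.g.\ non-zero means or non-unit regression coefficient). Both arrive at exactly the same determinant condition, so there is no substantive gap in your argument.
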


\begin{proof}
We begin by directly expanding the second moment of the factor $r$ under the observational distribution $Q$ as follows,
 \begin{align}
    \E_Q\left[r(X,Y)^2\right] 
    &=\E_Q\left[\left(\frac{p(Y)}{q(Y| X)}\right)^2\right]\nonumber\\
    &= \frac{\sigma_\epsilon^2}{\sigma^2} \E_Q\left[\exp\left(\left(\frac{Y - X}{\sigma_\epsilon}\right)^2-\left(\frac{Y}{\sigma}\right)^2\right)\right] \nonumber\\
    &= \frac{\sigma_\epsilon^2}{\sigma^2} \E_Q\left[\exp\left(\left(\frac{\epsilon}{\sigma_\epsilon}\right)^2-\left(\frac{X + \epsilon}{\sigma}\right)^2\right)\right] \nonumber\\
    &= \frac{\sigma_\epsilon^2}{\sigma^2} \E_Q\left[\exp\left(\epsilon^2\left(\frac{1}{\sigma_\epsilon^2} - \frac{1}{\sigma^2}\right)-\frac{X^2}{\sigma^2} - \frac{2X \epsilon}{\sigma^2}\right)\right]\nonumber \\
    &= \frac{\sigma_\epsilon^2}{\sigma^2}\E_Q\left[\exp\left(-\frac{X^2}{\sigma^2}- \frac{2\epsilon }{\sigma^2}X-\frac{\epsilon^2}{\sigma^2}+\frac{\epsilon^2}{\sigma^2}\right)\exp\left(\frac{\epsilon^2}{\sigma_\epsilon^2} - \frac{\epsilon^2}{\sigma^2}\right)\right] \nonumber\\
    &= \frac{\sigma_\epsilon^2}{\sigma^2}\E_Q\left[\exp\left(-\frac{\sigma_X^2}{\sigma^2}W+\frac{\epsilon^2}{\sigma^2}\right)\exp\left(\frac{\epsilon^2}{\sigma_\epsilon^2} - \frac{\epsilon^2}{\sigma^2}\right)\right] \nonumber\\
    &=\frac{\sigma_\epsilon^2}{\sigma^2} \E_Q\left[\exp\left(-\frac{\sigma_X^2}{\sigma^2}W\right)\exp\left(\frac{\epsilon^2}{\sigma_\epsilon^2}\right)\right],\label{eq:part1A3}
\end{align}
where $W\coloneqq(X/\sigma_X+\epsilon/\sigma_X)^2$. Next, observe that, conditioned on $\epsilon$, $W$ has a non-central $\chi^2_{(1)}$-distribution with non-centrality parameter $\epsilon^2/\sigma_X^2$.
The moment generating function of $W$ is given by $M_{W}(t)=(1-2t)^{-1/2}\exp\left(\frac{\epsilon^2}{\sigma_X^2}\frac{t}{1-2t}\right)$ 
for all $t<1/2$. Hence, continuing the computation in \eqref{eq:part1A3} and by conditioning on $\epsilon$, we get 
\begin{align*}
    \E_Q\left[r(X,Y)^2\right]
        &=\frac{\sigma_\epsilon^2}{\sigma^2}\E_Q\left[\E_Q\left[\exp\left(-\frac{\sigma_X^2}{\sigma^2}W\right)\Big\vert \epsilon\right]\exp\left(\frac{\epsilon^2}{\sigma_\epsilon^2}\right)\right]\\
        &=\frac{\sigma_\epsilon^2}{\sigma^2}\E_Q\left[M_{W}\left(-\frac{\sigma_{X}^2}{\sigma^2}\right)\exp\left(\frac{\epsilon^2}{\sigma_\epsilon^2}\right)\right]\\
        &=\frac{\sigma_\epsilon^2}{\sigma^2}\left(1+2\frac{\sigma_X^2}{\sigma^2}\right)^{-\frac{1}{2}}\E_Q\left[\exp\left(\frac{\epsilon^2}{\sigma_X^2}\frac{-\sigma_X^2}{\sigma^2} \frac{1}{1+2\frac{\sigma_X^2}{\sigma^2}}\right)\exp\left(\frac{\epsilon^2}{\sigma_\epsilon^2}\right)\right]\nonumber\\
        &=    \frac{\sigma_\epsilon^2}{\sigma^2}\left(1+2\frac{\sigma_X^2}{\sigma^2}\right)^{-\frac{1}{2}}\E_Q\left[\exp\left(\frac{\epsilon^2}{\sigma_\epsilon^2}\left(1 - \frac{\sigma_\epsilon^2}{\sigma^2+2\sigma_X^2} \right) \right)\right]\nonumber\\ 
        & =\frac{\sigma_\epsilon^2}{\sigma^2}\left(1+2\frac{\sigma_X^2}{\sigma^2}\right)^{-\frac{1}{2}}M_S\left(1 - \frac{\sigma_\epsilon^2}{\sigma^2+2\sigma_X^2} \right),
\end{align*}
where $S \coloneqq (\epsilon/\sigma_\epsilon)^2$ and $M_S$ is the moment generating function of a (central) $\chi_{(1)}^2$ distribution. $M_S(t)$ is finite if and only if $t < 1/2$, corresponding to $1 - \frac{\sigma_\epsilon^2}{\sigma^2+2\sigma_X^2} < 1/2$ which is equivalent to $\sigma^2 < 2(\sigma_\epsilon^2 - \sigma_X^2)$.
\end{proof}

\end{document}